\documentclass[conference]{IEEEtran}
\IEEEoverridecommandlockouts
\usepackage{amsmath,amssymb,amsthm,mathtools}
\usepackage{array,booktabs,cite,enumitem,microtype,tikz}
\usepackage{xcolor}
\usetikzlibrary{arrows.meta,positioning}
\usepackage[hidelinks]{hyperref}
\newtheorem{theorem}{Theorem}
\newtheorem{lemma}{Lemma}
\newtheorem{proposition}{Proposition}

\newcommand{\Rge}{\mathbb R_{\ge0}}
\newcommand{\cH}{\mathcal H}
\newcommand{\cA}{\mathcal A}
\newcommand{\cB}{\mathcal B}
\newcommand{\cI}{\mathcal I}
\newcommand{\cJ}{\mathcal J}
\newcommand{\src}{\mathrm{src}}
\newcommand{\dst}{\mathrm{dst}}
\newcommand{\NC}{\mathrm{NC}}
\newcommand{\MCF}{\mathrm{MCF}}
\title{Routing Optimality for Demand Graphs\\of Matching Number at Most Two}
\author{\IEEEauthorblockN{Xiaohong Cai and Yulin Chen}\thanks{Xiaohong Cai and Yulin Chen are with The Chinese University of Hong Kong, Hong Kong SAR, China (e-mail: cx021@ie.cuhk.edu.hk; 1155189560@link.cuhk.edu.hk).}}
\begin{document}
\maketitle
\begin{abstract}
We prove that network coding and fractional routing have the same capacity region in undirected networks whose demand graph has matching number at most two, that is, when no three demand edges have pairwise disjoint endpoints. The metric-packing reduction of Liu et al. reduces this problem to a distance-weighted communication inequality on the fixed graph $\Gamma_{3,3}$. We first prove two entropy inequalities for sessions between designated terminal pairs on $\Gamma_{3,3}$. We then extend this result to sessions with arbitrary endpoints to obtain the distance-weighted communication bound required by the reduction. The proof combines conditional entropy inequalities with two graph contractions and an information inequality for deterministic interaction. Combining this inequality with the existing reduction establishes routing optimality for this class of demand graphs.
\end{abstract}
\begin{IEEEkeywords}
Network coding, multiple unicast, fractional routing, graph metrics.
\end{IEEEkeywords}

\section{Introduction}
Network coding allows a node to transmit functions of previously received messages \cite{Ahlswede2000,Koetter2003}. Li and Li \cite{LiLi2004} and Harvey, Kleinberg, and Lehman \cite{Harvey2004} conjectured that network coding and fractional routing have the same capacity region in undirected networks. Recent preprints by Zhang, Li, and Li \cite{ZhangLiLi2026} and Braverman and He \cite{BravermanHe2026} report counterexamples to this conjecture. Although coding can outperform routing in general, it remains important to identify structural conditions under which network coding and fractional routing have the same capacity region. We prove that, for every undirected network and every edge-capacity assignment, network coding and fractional routing have the same capacity region whenever no three demand pairs have pairwise disjoint endpoints.

The cut condition bounds the total rate across each partition of the vertices into two sets by the capacity of the edges crossing it. It characterizes routing for two sessions \cite{Hu1963}, at most four terminal locations \cite{Seymour1980}, and certain planar instances \cite{OkamuraSeymour1981}. When cuts do not characterize routing, stronger information inequalities can still rule out a coding advantage \cite{Jain2006,Harvey2006,Yin2018}. Metric packing results \cite{Karzanov1990,Karzanov1994,Hirai2010} extend this approach by retaining shortest-path information beyond cuts.

Liu, Que, Li, and Li \cite{LiuQueLiLi2026} use metric packings to reduce coding--routing comparisons to communication inequalities on fixed graphs. For demand graphs of matching number at most two, their reduction requires communication inequalities on four fixed graphs: $K_2$, $K_{2,3}$, $K_{3,3}$, and $\Gamma_{3,3}$. Here $K_2$ is the two-vertex complete graph, $K_{p,q}$ is a complete bipartite graph, and $\Gamma_{3,3}$ is defined in Section~\ref{sec:gamma}. The required communication inequalities on $K_2$, $K_{2,3}$, and $K_{3,3}$ have been established \cite{Yin2018,LiuQueLiLi2026}, leaving the corresponding inequality on $\Gamma_{3,3}$ to be proved. We establish this distance-weighted communication inequality for arbitrary sessions on $\Gamma_{3,3}$ and apply the reduction of \cite{LiuQueLiLi2026} to obtain our main theorem.

We first establish two directional entropy bounds for sessions between distinct outer terminals in the same vertex class of $\Gamma_{3,3}$. The forward bound uses reconstruction of transmitted messages and the entropy chain rule. The reverse bound combines two graph contractions with an information inequality for deterministic interaction. We then extend arbitrary sessions along shortest paths between outer terminals to derive the distance-weighted communication inequality on $\Gamma_{3,3}$.

The paper is organized as follows. Section~\ref{sec:model} presents the network model and states the main theorem. Section~\ref{sec:gamma} introduces $\Gamma_{3,3}$ and states the directional entropy bounds, which are proved in Section~\ref{sec:proof}. Section~\ref{sec:extension} extends these bounds to arbitrary sessions and completes the proof of the main theorem. Section~\ref{sec:conclusion} concludes the paper. The full proofs are provided in the supplementary material following the references.

\section{Network Model and Main Result}
\label{sec:model}
Following the session notation of \cite[Sec.~III-A]{LiuQueLiLi2026}, we consider a \emph{multiple-unicast network model} $(G,c,\cH)$. Here $G=(V,E)$ is a finite undirected graph, called the \emph{supply graph}; $c:E\to\Rge$ is an \emph{edge-capacity assignment}; and $\cH=(h_1,\ldots,h_k)$ is a collection of \emph{unicast sessions}. For each edge $e\in E$, $c(e)$ is its capacity, shared by transmissions in both directions. Each session $h$ requires a \emph{source node} $s_h$ to deliver a message $X_h$ to a \emph{destination node} $t_h\ne s_h$, and the messages $X_1,\ldots,X_k$ are mutually independent. Let each source message $X_h$ be uniformly distributed over a finite alphabet $\mathcal X_h$. A vertex that is the source or the destination of at least one session is called a \emph{terminal}. Different sessions may share terminals, and two sessions may even have the same source and the same destination.

A protocol consists of $N$ transmissions, indexed by $\tau=1,\ldots,N$ in an order consistent with their causal dependencies. Simultaneous transmissions may be indexed in either order. The sender, receiver, and message length of each transmission are fixed in advance. Each message is a deterministic function of the sender's local source messages and previously received messages. The protocol is \emph{zero-error} if every destination $t_h$ recovers $X_h$ for every value of the source messages $X_1,\dots,X_k$. For a directed edge $(u,v)$, let $Y_{uv}$ denote the complete ordered transcript that $u$ sends to $v$, and let $b_{uv}$ be its fixed length in bits.
Let $r=(r_h)_{h\in\mathcal H}\in\Rge^k$, where $r_h$ is the target rate of session $h$. The rate vector $r$ is \emph{supported} under the capacities $c$ if there exist a deterministic zero-error protocol and a positive integer $n$ such that

\begin{align}
\log_2|\mathcal X_h|&\ge nr_h &&\text{for every }h\in\mathcal H,\label{eq:source-rate-budget}\\
b_{uv}+b_{vu}&\le nc(uv) &&\text{for every }uv\in E.\label{eq:edge-capacity-budget}
\end{align}

The same $n$ scales both the required source information and the available edge capacities. For this choice of $n$, the total communication rate on an undirected edge $uv$ is
\begin{equation}
f(uv)=\frac{b_{uv}+b_{vu}}{n}\le c(uv),
\label{eq:edge-rate}
\end{equation}
where the inequality follows from \eqref{eq:edge-capacity-budget}. Since $Y_{uv}$ is a binary string of length $b_{uv}$,
\[
H(Y_{uv})+H(Y_{vu})\le nf(uv).
\]

\noindent The \emph{network-coding capacity region} $\mathcal C_{\NC}(G,c,\cH)$ is the closure of the set of supported rate vectors \(r\).

For each session $h$, let $\mathcal P_h$ be the set of simple paths from $s_h$ to $t_h$. A \emph{fractional routing} of $r\in\Rge^k$ assigns nonnegative path rates $x_{h,P}$ satisfying
\begin{align*}
\sum_{P\in\mathcal P_h}x_{h,P}&=r_h &&\text{for every }h,\\
\sum_h\sum_{\substack{P\in\mathcal P_h\\e\in P}}x_{h,P}&\le c(e) &&\text{for every }e\in E.
\end{align*}

\noindent The \emph{fractional-routing capacity region} $\mathcal C_{\MCF}(G,c,\cH)$ is the set of rate vectors \(r\) admitting such a routing.

Since routing is a special case of network coding, $\mathcal C_{\MCF}(G,c,\cH)\subseteq\mathcal C_{\NC}(G,c,\cH)$. When the model $(G,c,\cH)$ is clear from the context, we write $\mathcal C_{\NC}$ and $\mathcal C_{\MCF}$ for short.

The \emph{demand graph} $D(\cH)$ has the terminals as vertices, and its edges are the distinct unordered pairs $\{s_h,t_h\}$ \cite[Sec.~III-A]{LiuQueLiLi2026}. Its \emph{matching number} $\nu(D(\cH))$ is the maximum number of edges with pairwise disjoint endpoints. Our main result is the following.
\begin{theorem}
\label{thm:intro-main}
If $\nu(D(\cH))\le2$, then
\begin{equation}
\mathcal C_{\NC}(G,c,\cH)=\mathcal C_{\MCF}(G,c,\cH).
\end{equation}
\end{theorem}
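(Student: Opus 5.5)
The inclusion $\mathcal C_{\MCF}\subseteq\mathcal C_{\NC}$ is immediate, so the plan is to prove $\mathcal C_{\NC}\subseteq\mathcal C_{\MCF}$ through the metric-packing reduction of \cite{LiuQueLiLi2026}. Fix a rate vector $r$ supported by a zero-error protocol with block length $n$. By LP duality (the Japanese theorem), $r\in\mathcal C_{\MCF}$ holds if and only if
\[
\sum_h r_h\,d(s_h,t_h)\le\sum_{uv\in E}c(uv)\,d(u,v)
\]
for every metric $d$ on $V$. When $\nu(D(\cH))\le2$, the metric-packing results \cite{Karzanov1990,Karzanov1994,Hirai2010}, as used in \cite{LiuQueLiLi2026}, show that it suffices to check this inequality for nonnegative combinations of metrics induced by maps $\phi:V\to V(F)$ for $F\in\{K_2,K_{2,3},K_{3,3},\Gamma_{3,3}\}$, with terminals mapped suitably. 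By linearity I will treat a single such map $\phi$. Pushing the protocol forward along $\phi$ simulates it on $F$: each vertex $x\in V(F)$ runs all nodes in $\phi^{-1}(x)$, and an edge $xy$ of $F$ carries every transcript on edges $uv$ with $\phi(u)\phi(v)$ traversing a shortest $x$--$y$ path. This yields a protocol on $F$ in which session $h$ runs from $\phi(s_h)$ to $\phi(t_h)$. What remains is a distance-weighted communication inequality on $F$:
\[
\sum_h H(X_h)\,d_F(\phi(s_h),\phi(t_h))\le\sum_{xy}(\text{bits on }xy)\,d_F(x,y).
\]
For $K_2$, $K_{2,3}$ and $K_{3,3}$ this inequality is already available \cite{Yin2018,LiuQueLiLi2026}. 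Dividing by $n$ and using \eqref{eq:source-rate-budget} and \eqref{eq:edge-capacity-budget} then gives the metric inequality for $r$.

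The new ingredient is the inequality on $\Gamma_{3,3}$, which I will prove in two stages. In the first stage I treat sessions whose endpoints are distinct outer terminals in the same vertex class. For these I prove two directional entropy bounds. The forward bound comes from the chain rule: the destination side reconstructs its messages from the transcripts crossing into it, so the entropy of those transcripts, conditioned on the other side's messages, is at least the sum of the message entropies. The reverse bound handles the traffic in the opposite direction. Here I contract $\Gamma_{3,3}$ in two different ways to smaller graphs (essentially $K_{2,3}$-like quotients). On each contraction I apply a deterministic-interaction inequality: for two parties exchanging functions of their inputs, the transcript reveals the messages it carries, and so the crossing entropy can be split between the two directions. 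Summing the two contracted bounds with the forward bound should produce the distance-2 weights.

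In the second stage I extend these bounds to arbitrary sessions. Each session with endpoints in $V(\Gamma_{3,3})$ is extended along a shortest path to a pair of outer terminals. The extra hops are charged to edges whose capacity already counts in the right-hand side, so that the distance $d_{\Gamma}(\phi(s_h),\phi(t_h))$ equals the designated-pair distance minus the added segments. Applying this to $\phi$ and summing over the metric decomposition then completes the proof.

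I expect the main obstacle to be the reverse directional bound on $\Gamma_{3,3}$. Cut-type arguments give only weight 1 per session, while the bound must extract the metric distance 2 from entropy alone. Choosing the two contractions so that their deterministic-interaction inequalities sum correctly, with no double counting of the central edges, is the delicate step.
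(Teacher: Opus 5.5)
Your architecture is the paper's: the packing reduction, push-forward along $\phi$, the known $K_2,K_{2,3},K_{3,3}$ inequalities, a bound for sessions between same-side outer terminals of $\Gamma_{3,3}$, and extension along geodesics. But the one new ingredient is only named, and what you say about it would not work.

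\textbf{The target weight is wrong.} Every edge of $K_{3,3}$ is subdivided, so $d(a_i,a_k)=d(b_j,b_\ell)=4$. Your stage two needs the outer-terminal bound with exactly this weight. If $T'$ is the total transcript entropy of the extended protocol, then $T'\le T+\sum_h(4-d(s_h,t_h))H(X_h)$, so you need $T'\ge4\sum_hH(X_h)$. A bound of weight $2$ (``extract the metric distance 2'', with the contracted and forward bounds summing to it) only gives $T\ge\sum_hH(X_h)(d(s_h,t_h)-2)$. The paper obtains $4$ as $C_+\ge2R$ and $C_-\ge2R$, for the fixed orientation $E_+=\{(b_j,x_{ij}),(o,x_{ij}),(x_{ij},a_i)\}$ and its reverse. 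This orientation is independent of session directions and treats $a$- and $b$-terminals asymmetrically, which a ``toward the destination'' split cannot do. You also still need to check that every ordered vertex pair, including pairs involving $o$ and the cells, lies in order on a length-$4$ geodesic between same-side outer terminals.

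\textbf{The forward mechanism does not reach weight $2$.} ``The destination reconstructs its messages from the incoming transcripts'' is a cut bound. It yields one copy of $H(A)$ from the arcs $x_{ij}\to a_i$, and nothing for $\mathcal B$-sessions, since no arc of $E_+$ enters a $b$-terminal. The arcs $b_j\to x_{ij}$ and $o\to x_{ij}$ must then pay for a second copy of $H(A)$ and two copies of $H(B)$ at once. The paper does this in three steps:
\begin{itemize}
\item it splits $H(Y_{b_jx_{ij}})+H(Y_{ox_{ij}})=\kappa_{ij}+H(Y_{b_jx_{ij}},Y_{ox_{ij}}\mid A,\mathbf Y^a)$;
\item it charges $H(\mathbf Y_i^a\mid A_i^{\src})\le\sum_j\kappa_{ij}$ by causality;
\item it shows that the conditional column entropies count every $\mathcal B$-message once at its source column and once at its destination column, via a Shearer bound with $k=2$.
\end{itemize}

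\textbf{The reverse mechanism is also not the right one.} The two contractions are a $K_{3,3}$-quotient and a $K_{2,3}$-quotient, and their arcs into the $U$-side parts partition $E_-$. Recovery arguments give weight $1$ from each. However, the $K_{2,3}$-quotient merges $b_0$ and $b_2$, which leaves only $C_-\ge2R-H(W)+\mu+P$, where $W$ is the tuple of $b_0$--$b_2$ messages. The deterministic-interaction inequality $I(Z_L;Z_R\mid\Pi)\le I(Z_L;Z_R)$ is used once, across the edges $ox_{i0}$, together with $\mu\ge I(W;\Pi)$ and $P\ge I(Z_L;Z_R)$, to show $\mu+P\ge H(W)$. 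It does not ``split crossing entropy between the two directions.'' Without identifying this $H(W)$ deficit and how it is compensated, the reverse bound, and hence the theorem, is not established.
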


By the reduction in \cite{LiuQueLiLi2026} and the known inequalities for $K_2$, $K_{2,3}$, and $K_{3,3}$ \cite[Corollary~1]{Yin2018}, it remains to prove the distance-weighted communication bound on $\Gamma_{3,3}$. The proof of \cite[Theorem~8]{LiuQueLiLi2026} shows that it suffices to establish \cite[eq.~(7)]{LiuQueLiLi2026} with $Q=\Gamma_{3,3}$ for arbitrary sessions. In our notation, this bound is
\begin{equation}
\sum_{e\in E(\Gamma_{3,3})}f(e)
\ge \sum_h r_h d_{\Gamma_{3,3}}(s_h,t_h),
\label{eq:gamma-target}
\end{equation}
where $d_{\Gamma_{3,3}}(u,v)$ is the shortest-path distance between $u$ and $v$ when every edge has length one.

\section{The Directional Inequality on \texorpdfstring{$\Gamma_{3,3}$}{Gamma 3,3}}
\label{sec:gamma}
\subsection{The Graph and Its Edge Directions}
The graph $\Gamma_{3,3}$ used in Hirai's metric packing \cite{Hirai2010} is depicted in Fig.~\ref{fig:gamma-full}. It is obtained from $K_{3,3}$ by subdividing every edge once and joining all nine new vertices to one additional vertex, so it has 16 vertices and 27 edges.
Label the two vertex classes of $K_{3,3}$ by $a_0,a_1,a_2$ and $b_0,b_1,b_2$. For $i,j\in\{0,1,2\}$, let $x_{ij}$ be the vertex inserted on the edge $a_ib_j$, and let $o$ be the additional vertex. The edges of $\Gamma_{3,3}$ are exactly $a_ix_{ij}$, $b_jx_{ij}$, and $ox_{ij}$, for $i,j\in\{0,1,2\}$. We call the vertices $a_i$ and $b_j$ the \emph{outer terminals}, and we distinguish the \emph{$a$-terminals} from the \emph{$b$-terminals}. We call $x_{ij}$ the \emph{cell} in row $i$ and column $j$.

\begin{figure}[!tb]
\centering
\begin{tikzpicture}[v/.style={circle,draw,fill=white,inner sep=1.5pt,font=\scriptsize},scale=.88]
\foreach \i in {0,1,2} {
  \node[v] (a\i) at (-1.1,-1.3*\i) {$a_\i$};
  \node[v] (b\i) at (1.5*\i,1.05) {$b_\i$};
  \foreach \j in {0,1,2} {
    \node[v] (x\i\j) at (1.5*\j,-1.3*\i) {$x_{\i\j}$};
  }
}
\node[v] (o) at (4.25,-1.25) {$o$};
\foreach \i in {0,1,2} {
  \foreach \j in {0,1,2} {
    \draw (a\i) to[bend right=\j*15] (x\i\j);
    \draw (b\j) to[bend right=\i*15] (x\i\j);
    \draw[gray] (o) to[bend left=12] (x\i\j);
  }
}
\end{tikzpicture}
\caption{The undirected graph $\Gamma_{3,3}$. Each $x_{ij}$ is adjacent to exactly $a_i$, $b_j$, and $o$. Gray edges join $o$ to the nine cells. Crossings without a marked vertex are not vertices.}
\label{fig:gamma-full}
\end{figure}

An \emph{arc} $(u,v)$ denotes the direction from $u$ to $v$ along an edge $uv$. For $0\le i<3$ and $0\le j<3$, we define the two arc sets
\begin{align}
E_+&=\{(b_j,x_{ij}),\ (o,x_{ij}),\ (x_{ij},a_i)\}_{i,j},\nonumber\\
E_-&=\{(v,u):(u,v)\in E_+\}.
\end{align}
The \emph{entropy costs} of the two arc sets are
\begin{equation}
C_+\triangleq\sum_{(u,v)\in E_+}H(Y_{uv}),\qquad C_-\triangleq\sum_{(u,v)\in E_-}H(Y_{uv}).
\end{equation}
Thus $C_+$ sums the transcript entropies in the directions $b_j\to x_{ij}$, $o\to x_{ij}$, and $x_{ij}\to a_i$ over all cells $x_{ij}$, while $C_-$ sums them in the reverse directions.
\begin{figure}[!tb]
\centering
\begin{tikzpicture}[v/.style={circle,draw,inner sep=2pt,font=\small},arr/.style={-{Latex[length=1.6mm]},thick}]
\node[v] (a) at (-1.5,0) {$a_i$};
\node[v] (x) at (0,0) {$x_{ij}$};
\node[v] (b) at (1.5,0) {$b_j$};
\node[v] (o) at (0,.95) {$o$};
\draw[arr] (b)--(x);\draw[arr] (o)--(x);\draw[arr] (x)--(a);
\end{tikzpicture}
\caption{One cell of $\Gamma_{3,3}$, with the directions in $E_+$. Reversing the arrows gives $E_-$. Rows share $a_i$, columns share $b_j$, and every cell is adjacent to $o$.}
\label{fig:cell}
\end{figure}

\subsection{Sessions between Outer Terminals}
We first consider sessions connecting two distinct $a$-terminals or two distinct $b$-terminals. Let $\cA$ (respectively, $\cB$) be the set of sessions connecting two distinct $a$-terminals (respectively, $b$-terminals), and define the message tuples
\begin{align*}
A&\triangleq(X_h:h\in\cA),\qquad B\triangleq(X_h:h\in\cB).
\end{align*}
If $\cA$ or $\cB$ is empty, the corresponding tuple is empty and has entropy zero. The directions of the sessions and the distributions of the messages are arbitrary. We write $R\triangleq H(A)+H(B)$.

\begin{theorem}[Directional inequality]
\label{thm:outer}
Every deterministic zero-error protocol on $\Gamma_{3,3}$ with session collection $\cA\cup\cB$ satisfies
\begin{equation}
C_+\ge2R,\qquad C_-\ge2R.
\label{eq:outer-bounds}
\end{equation}
\end{theorem}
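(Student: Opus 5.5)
We treat the two bounds separately, since the introduction already indicates that they need different tools: the forward bound $C_+\ge 2R$ should follow from reconstruction and the chain rule, and the reverse bound $C_-\ge 2R$ needs graph contractions plus a deterministic-interaction inequality. In both cases we use the cell structure. Each cell $x_{ij}$ has exactly three arcs in $E_+$: two entering ($b_j\to x_{ij}$ and $o\to x_{ij}$) and one leaving ($x_{ij}\to a_i$). An $\cA$-session must cross from one row star $\{a_i\}\cup\{x_{ij}\}_j$ to another, and a $\cB$-session must cross from one column to another. The paths $a_i\,x_{ij}\,b_j\,x_{i'j}\,a_{i'}$ and $a_i\,x_{ij}\,o\,x_{i'j'}\,a_{i'}$ both have length $4$, but only two of their arcs lie in $E_+$. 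This is where the factor $2$ comes from.

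\emph{Forward bound.} First reduce to a cut-type statement. For $h\in\cA$ with destination $a_{i'}$, everything $a_{i'}$ receives is $(Y_{x_{i'j}a_{i'}})_j$, which are arcs of $E_+$. Since the protocol is zero-error and deterministic, $A$ together with the $B$-messages at $a_{i'}$ is determined by the incoming $E_+$-transcripts and the local sources. Summing over the three $a$-terminals gives $\sum_{i,j}H(Y_{x_{ij}a_i})\ge H(A)$ after conditioning chains on local sources; we would order the terminals and use the chain rule with conditioning on messages already decoded. Next, the $E_+$ arcs into the cells, $(b_j\to x_{ij})$ and $(o\to x_{ij})$, must carry $H(A)$ again. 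Each transcript $Y_{x_{ij}a_i}$ is a function of what $x_{ij}$ received, and $x_{ij}$ has no sources, so the information in these transcripts about messages not originating at $a_i$ must enter $x_{ij}$ either from $b_j$ or from $o$. The difficulty is the reverse flow from $a_i$ into $x_{ij}$. We handle it by conditioning on $(Y_{a_ix_{ij}})$ together with the sources at $a_i$, and showing that the chain-rule terms telescope row by row. The $\cB$-sessions are symmetric: a $b$-destination is reached through arcs $x\to b$, which lie in $E_-$. So for $B$ we would instead use the arcs $b_j\to x_{ij}$ leaving the source side, which carry $H(B)$, and the arcs $x_{ij}\to a_i$ and $o\to x_{ij}$, through which $B$-information must be routed. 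Our plan is to prove one combined inequality of the form
\[
\sum_{E_+} H(Y)\ \ge\ H(A\mid B)+H(A)+H(B)+H(B\mid A)=2R,
\]
with each of the four terms extracted from one group of arcs by a decoding argument.

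\emph{Reverse bound.} The arcs of $E_-$ are $a_i\to x_{ij}$, $x_{ij}\to b_j$, and $x_{ij}\to o$, so here $\cB$ is on the decoding side and $\cA$ on the sending side. The problem is that the two "entering" arcs, $x\to b$ and $x\to o$, now both leave the cell. The plan uses two contractions. First, contract each column $\{b_j\}\cup\{x_{ij}\}_i$. The resulting graph is essentially $K_{3,3}$-like, with $o$ and the $a$'s attached, and standard cut and decoding arguments for $B$ there give $\sum H(Y_{x_{ij}b_j})+\sum H(Y_{x_{ij}o})\ge$ one copy of the demand. Second, contract the rows $\{a_i\}\cup\{x_{ij}\}_j$ to obtain one copy via the arcs $a_i\to x_{ij}$ and $x\to o$. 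The arcs $x\to o$ appear in both copies. To avoid double counting, we invoke an information inequality for deterministic interaction between $o$ and the cells, of the form $H(Y_{xo})+H(Y_{ox})\ge I(\cdot)$ split appropriately. This lets one copy charge $x\to o$ while the other charges only conditional terms that vanish given the first.

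We expect this deterministic-interaction step, merging the two contraction bounds without double-charging the arcs at $o$, to be the main obstacle.
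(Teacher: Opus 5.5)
Both halves have genuine gaps.

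\textbf{Forward bound.} Your four arc groups $\{x\to a\}$, $\{b\to x,o\to x\}$, $\{b\to x\}$, $\{x\to a,o\to x\}$ together cover every arc of $E_+$ exactly twice. Four separate decoding bounds therefore add up only to $2C_+\ge 2R$, which is the cut-level bound $C_+\ge R$. (Also, $H(A\mid B)=H(A)$ and $H(B\mid A)=H(B)$ by independence, so the displayed target carries no extra structure.) The factor $2$ requires splitting the entropy of the \emph{same} arcs between two copies, and your proposal contains no such split.

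The paper writes $H(Y_{b_jx_{ij}})+H(Y_{ox_{ij}})=\kappa_{ij}+H(Y_{b_jx_{ij}},Y_{ox_{ij}}\mid A,\mathbf Y^a)$. The row-by-row telescoping you sketch corresponds to the paper's causal cell-charge step. That step works on the region $\{a_i,x_{ij}\}$, so that the reverse flow $a_i\to x_{ij}$ is internal, and it gives $\sum_j\kappa_{ij}\ge H(\mathbf Y_i^a\mid A_i^{\src})$. Summed over $i$, this duplicates the whole row information $H(A)+H(\mathbf Y^a\mid A)$, including the $B$-information that reaches the rows, not merely $H(A)$. The residual conditional terms then supply $2H(B\mid A,\mathbf Y^a)$. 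Given $(A,\mathbf Y^a)$, the inputs of column $j$ are equivalent to $(B_j^{\src},(Y_{ox_{ij}})_i)$, which also determines $B_j^{\dst}$, and a Shearer-type count with multiplicity two finishes. So $B$ is paid as $2H(\mathbf Y^a\mid A)+2H(B\mid A,\mathbf Y^a)$, not as $H(B)$ on $b\to x$ plus $H(B)$ on $x\to a,o\to x$.

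\textbf{Reverse bound.} Your contractions do not yield the bounds you assign to them. Contracting column $j$ makes $x_{ij}\to b_j$ internal, and contracting row $i$ makes $a_i\to x_{ij}$ internal. With the labels swapped, the row contraction does give $\sum_{i,j}[H(Y_{x_{ij}b_j})+H(Y_{x_{ij}o})]\ge H(A)+H(B)$. The column contraction does not fit the same view argument, however. To charge $a_i\to x_{ij}$, a contracted column must be a receiving part, and its incoming view then also contains $Y_{ox_{ij}}$, whose entropy is not part of $C_-$. The same objection applies to your repair $H(Y_{xo})+H(Y_{ox})\ge\cdots$ for the doubly charged arcs $x\to o$, because $Y_{ox}$ lies in $E_+$.

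The paper avoids double counting altogether. It uses two partitions:
\begin{itemize}
\item $\{b_0\},\{b_2\},\{b_1,o,x_{01},x_{11},x_{21}\},\{a_i,x_{i0},x_{i2}\}$;
\item $\{b_0,b_2,o\}\cup\{x_{i0},x_{i2}\}_i,\ \{b_1\},\ \{a_i,x_{i1}\}$.
\end{itemize}
In each, every crossing edge has its $E_-$ arc entering a receiving part. The two sets of entering arcs, $\cJ_0\cup\cJ_1\cup\cJ_2$ and $\cI_0\cup\cI_1$, partition $E_-$. The partitions give $H(A)+H(B)+\mu$ and $H(A)+H(B\mid W)+P$, respectively. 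The only deficit is $H(W)$, for the sessions between the merged terminals $b_0$ and $b_2$.

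The deterministic-interaction inequality is applied to the slack terms, not to arc costs. One has $\mu\ge I(W;Z_1)\ge I(W;\Pi)$ and $P\ge I(Z_L;Z_R)$. Moreover, $I(Z_L;Z_R\mid\Pi)\le I(Z_L;Z_R)$ holds for the exchange $\Pi$ on the edges $ox_{i0}$ between $\mathcal Q_L=\{b_2,o\}\cup\{x_{i2}\}_i$ and $\mathcal Q_R=\{b_0\}\cup\{x_{i0}\}_i$. Since both $(Z_L,\Pi)$ and $(Z_R,\Pi)$ determine $W$, this yields $\mu+P\ge H(W)$.
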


Section~\ref{sec:proof} proves this inequality, and Section~\ref{sec:extension} extends it to arbitrary sessions.

\section{Proof of the Directional Inequality}
\label{sec:proof}
In this section, we prove Theorem~\ref{thm:outer}. We first prove the forward bound $C_+\ge2R$ and then the reverse bound $C_-\ge2R$.

For a vertex set $\mathcal Q$, let
\[
X_{\mathcal Q}^{\src}\triangleq(X_h:s_h\in\mathcal Q),\qquad X_{\mathcal Q}^{\dst}\triangleq(X_h:t_h\in\mathcal Q)
\]
denote the messages generated and requested in $\mathcal Q$, respectively. For the outer terminals, we write
\[
A_i^{\src}\triangleq X_{\{a_i\}}^{\src},\; A_i^{\dst}\triangleq X_{\{a_i\}}^{\dst},\; B_j^{\src}\triangleq X_{\{b_j\}}^{\src},\; B_j^{\dst}\triangleq X_{\{b_j\}}^{\dst}.
\]

Let $Y_{\mathcal Q,\le\tau}^{\mathrm{in}}$ denote the messages entering $\mathcal Q$ from outside through transmission $\tau$, and let $Y_{\mathcal Q,\le\tau}^{\mathrm{send}}$ denote the messages sent by vertices in $\mathcal Q$ through that transmission. Both parts use the following reconstruction lemma.

\begin{lemma}
\label{lem:region-simulation}
Every deterministic zero-error protocol with $N$ transmissions satisfies
\begin{equation}
H\!\left(Y_{\mathcal Q,\le\tau}^{\mathrm{send}}\mid X_{\mathcal Q}^{\src},Y_{\mathcal Q,\le\tau}^{\mathrm{in}}\right)=0
\quad (0\le\tau\le N),
\label{eq:region-prefix}
\end{equation}
and
\begin{equation}
H\!\left(X_{\mathcal Q}^{\dst}\mid X_{\mathcal Q}^{\src},Y_{\mathcal Q,\le N}^{\mathrm{in}}\right)=0.
\label{eq:region-decoding}
\end{equation}
\end{lemma}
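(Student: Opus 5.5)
We prove \eqref{eq:region-prefix} by induction on $\tau$ and then deduce \eqref{eq:region-decoding} from it. For $\tau=0$ no transmission has occurred, so $Y_{\mathcal Q,\le0}^{\mathrm{send}}$ is empty and the conditional entropy is trivially zero.

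For the inductive step, suppose \eqref{eq:region-prefix} holds at $\tau-1$ and consider transmission $\tau$, with sender $u$ and receiver $v$. There are two cases.

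\emph{Case 1: $u\notin\mathcal Q$.} Then $Y_{\mathcal Q,\le\tau}^{\mathrm{send}}=Y_{\mathcal Q,\le\tau-1}^{\mathrm{send}}$. The conditioning set $Y_{\mathcal Q,\le\tau}^{\mathrm{in}}$ contains $Y_{\mathcal Q,\le\tau-1}^{\mathrm{in}}$; it gains the $\tau$-th message if $v\in\mathcal Q$ and is unchanged otherwise. Since conditioning does not increase entropy, the induction hypothesis gives the claim.

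\emph{Case 2: $u\in\mathcal Q$.} By the protocol definition, the $\tau$-th message is a deterministic function of $X_{\{u\}}^{\src}$ and of all messages received by $u$ before $\tau$. Each such earlier message was sent either by a vertex outside $\mathcal Q$, in which case it lies in $Y_{\mathcal Q,\le\tau-1}^{\mathrm{in}}\subseteq Y_{\mathcal Q,\le\tau}^{\mathrm{in}}$, or by a vertex in $\mathcal Q$, in which case it lies in $Y_{\mathcal Q,\le\tau-1}^{\mathrm{send}}$. Also $X_{\{u\}}^{\src}$ is part of $X_{\mathcal Q}^{\src}$. Applying the chain rule,
\[
H(Y_{\mathcal Q,\le\tau}^{\mathrm{send}}\mid \cdot)\le H(Y_{\mathcal Q,\le\tau-1}^{\mathrm{send}}\mid\cdot)+H(Y_\tau\mid Y_{\mathcal Q,\le\tau-1}^{\mathrm{send}},\cdot).
\]
The first term vanishes by the induction hypothesis together with monotonicity in the conditioning, and the second vanishes by determinism.

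The one delicate point is the indexing convention. ``Messages entering $\mathcal Q$ through transmission $\tau$'' must include every message received before $\tau$, which holds because the indexing is consistent with causal dependencies. So the sender at $\tau$ can depend only on transmissions with smaller index, even when transmissions are simultaneous.

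For \eqref{eq:region-decoding}, take any session $h$ with $t_h\in\mathcal Q$. Zero-error decoding means $X_h$ is a deterministic function of $X_{\{t_h\}}^{\src}$ and all messages received by $t_h$ during the protocol. Those received messages come either from outside $\mathcal Q$, and so lie in $Y_{\mathcal Q,\le N}^{\mathrm{in}}$, or from inside $\mathcal Q$, and so lie in $Y_{\mathcal Q,\le N}^{\mathrm{send}}$. By \eqref{eq:region-prefix} at $\tau=N$, the latter are determined by $(X_{\mathcal Q}^{\src},Y_{\mathcal Q,\le N}^{\mathrm{in}})$. Hence $H(X_h\mid X_{\mathcal Q}^{\src},Y_{\mathcal Q,\le N}^{\mathrm{in}})=0$ for each such $h$. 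The chain rule then gives \eqref{eq:region-decoding}.
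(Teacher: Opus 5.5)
Your proposal is correct and follows essentially the same argument as the paper: induction on $\tau$ with the same two cases according to whether the sender of transmission $\tau$ lies in $\mathcal Q$, followed by deriving the decoding relation from the $\tau=N$ case together with zero-error decoding at the destinations in $\mathcal Q$. The only cosmetic difference is that you establish the decoding statement one session at a time and then apply the chain rule, whereas the paper handles the whole tuple $X_{\mathcal Q}^{\mathrm{dst}}$ at once.
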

Thus the local source messages and incoming communication determine all messages sent within or out of $\mathcal Q$, as well as the messages recovered there. The proof appears in Supplementary Section~\ref{supp:basic}.

\subsection{Forward Part}
We will prove that $C_+\ge2R$. By the definition of $E_+$,
\begin{align}
C_+&=\sum_{i,j}\bigl[H(Y_{b_jx_{ij}})+H(Y_{ox_{ij}})+H(Y_{x_{ij}a_i})\bigr]\nonumber\\
&\ge\sum_{i,j}\bigl[H(Y_{b_jx_{ij}})+H(Y_{ox_{ij}})\bigr]
+\sum_iH(\mathbf{Y}_i^a)\nonumber\\
&\ge\sum_{i,j}\bigl[H(Y_{b_jx_{ij}})+H(Y_{ox_{ij}})\bigr]
+\sum_iH(\mathbf{Y}_i^a\mid A_i^{\src}),
\label{eq:forward-expand}
\end{align}
where $\mathbf{Y}_i^a\triangleq(Y_{x_{ij}a_i})_{j=0}^2$ denotes the tuple of transcripts entering $a_i$. The first inequality follows from subadditivity, and the second from conditioning reducing entropy.

Next, we examine the first sum in \eqref{eq:forward-expand}. Let $\mathbf{Y}^a\triangleq(\mathbf{Y}_i^a)_{i=0}^2$ and define
\begin{align}
\kappa_{ij}&\triangleq H(Y_{b_jx_{ij}})+H(Y_{ox_{ij}})-H(Y_{b_jx_{ij}},Y_{ox_{ij}}\mid A,\mathbf{Y}^a)\nonumber\\
&=I(Y_{b_jx_{ij}},Y_{ox_{ij}};A,\mathbf{Y}^a)+I(Y_{b_jx_{ij}};Y_{ox_{ij}}).
\label{eq:kappa}
\end{align}
Continuing from \eqref{eq:forward-expand}, we have
{\footnotesize
\medmuskip=2mu
\thickmuskip=3mu
\begin{align}
C_+&\ge\sum_{i,j}\bigl[H(Y_{b_jx_{ij}})+H(Y_{ox_{ij}})\bigr]+\sum_iH(\mathbf{Y}_i^a\mid A_i^{\src})\nonumber\\
&=\sum_{i,j}\kappa_{ij}+\sum_{i,j}H(Y_{b_jx_{ij}},Y_{ox_{ij}}\mid A,\mathbf{Y}^a)+\sum_iH(\mathbf{Y}_i^a\mid A_i^{\src})\nonumber\\
&\ge\sum_{i,j}\kappa_{ij}+\sum_jH((Y_{b_jx_{ij}})_i,(Y_{ox_{ij}})_i\mid A,\mathbf{Y}^a)+\sum_iH(\mathbf{Y}_i^a\mid A_i^{\src}),
\label{eq:forward-decomposition}
\end{align}
}
where the equality follows from \eqref{eq:kappa}, and the last inequality follows from conditional subadditivity within each column.

Causality and the chain rule give
\begin{equation}
H(\mathbf{Y}_i^a\mid A_i^{\src})\le\sum_j\kappa_{ij}.
\label{eq:cell-charge}
\end{equation}
The proof appears in Supplementary Lemma~\ref{s-lem:cell-charge}.
Substituting \eqref{eq:cell-charge} into \eqref{eq:forward-decomposition} yields
{\thickmuskip=3mu
\begin{equation}
C_+\ge2\sum_iH(\mathbf{Y}_i^a\mid A_i^{\src})+\sum_jH((Y_{b_jx_{ij}})_i,(Y_{ox_{ij}})_i\mid A,\mathbf{Y}^a).
\label{eq:forward-two-rows}
\end{equation}
}

 Supplementary Lemma~\ref{s-lem:row-output} applies the decoding relation \eqref{eq:region-decoding} at each $a_i$ and obtains
\begin{equation}
\sum_iH(\mathbf{Y}_i^a\mid A_i^{\src})\ge H(A)+H(\mathbf{Y}^a\mid A).
\label{eq:row-output}
\end{equation}
Substituting \eqref{eq:row-output} into \eqref{eq:forward-two-rows}, we obtain
\begin{align}
C_+\ge{}&2H(A)+2H(\mathbf{Y}^a\mid A)\nonumber\\
&+\sum_jH((Y_{b_jx_{ij}})_i,(Y_{ox_{ij}})_i\mid A,\mathbf{Y}^a).
\label{eq:forward-before-columns}
\end{align}

For the last sum in \eqref{eq:forward-before-columns}, we have
\begin{equation}
\sum_jH((Y_{b_jx_{ij}})_i,(Y_{ox_{ij}})_i\mid A,\mathbf{Y}^a)\ge2H(B\mid A,\mathbf{Y}^a),
\label{eq:column-cover}
\end{equation}
 by Supplementary Lemmas~\ref{s-lem:column-equivalence} and~\ref{s-lem:column-cover}.
Therefore, by substituting \eqref{eq:column-cover} into \eqref{eq:forward-before-columns}, we obtain the forward bound:
\begin{align}
C_+&\ge2H(A)+2H(\mathbf{Y}^a\mid A)+2H(B\mid A,\mathbf{Y}^a)\nonumber\\
&=2H(A)+2H(B)=2R.
\label{eq:forward-final}
\end{align}
The equality follows because $A$ and $B$ are independent and $\mathbf{Y}^a$ is a deterministic function of $(A,B)$. Hence, the forward bound is proved.
\subsection{Reverse Part}
Now we will prove that $C_-\ge2R$. By the definition of $E_-$,
\begin{equation}
C_-=\sum_{i,j}\bigl[H(Y_{a_ix_{ij}})+H(Y_{x_{ij}b_j})+H(Y_{x_{ij}o})\bigr].
\label{eq:reverse-expand}
\end{equation}
We partition $E_-$ into the following five disjoint sets:
\begin{align*}
\cJ_0&=\{(x_{i0},b_0)\}_i,\qquad \cJ_2=\{(x_{i2},b_2)\}_i,\\
\cJ_1&=\{(x_{i0},o),(a_i,x_{i1}),(x_{i2},o)\}_i,\\
\cI_0&=\{(a_i,x_{i0}),(a_i,x_{i2}),(x_{i1},o)\}_i,\\
\cI_1&=\{(x_{i1},b_1)\}_i.
\end{align*}
Let $C_{\mathcal J}^-$ and $C_{\mathcal I}^-$ denote the entropy costs of the arcs in $\cJ_0\cup\cJ_1\cup\cJ_2$ and $\cI_0\cup\cI_1$, respectively. Then
\begin{equation}
C_-=\underbrace{\sum_{j=0}^2\sum_{e\in\cJ_j}H(Y_e)}_{C_{\mathcal J}^-}
+\underbrace{\sum_{l=0}^1\sum_{e\in\cI_l}H(Y_e)}_{C_{\mathcal I}^-}.
\label{eq:reverse-decomposition}
\end{equation}

\subsubsection{Bounding \texorpdfstring{$C_{\mathcal J}^-$}{C J minus}}
Let
\begin{equation}
Z_j\triangleq(B_j^{\src},(Y_e)_{e\in\cJ_j}),\qquad j=0,1,2.
\label{eq:fine-views}
\end{equation}
By Supplementary Lemma~\ref{s-lem:fine-complete},
\begin{equation}
H(Z_0,Z_1,Z_2)=H(A,B).
\label{eq:fine-complete}
\end{equation}
Define
\[
\mu\triangleq\sum_{j=0}^2H(Z_j)-H(Z_0,Z_1,Z_2)-H(B).
\]
Supplementary Lemma~\ref{s-lem:fine-bound} applies subadditivity to each tuple $Z_j$ and uses \eqref{eq:fine-complete} to obtain
\begin{equation}
C_{\mathcal J}^-\ge H(A)+H(B)+\mu.
\label{eq:fine-cost}
\end{equation}

Substituting \eqref{eq:fine-cost} into \eqref{eq:reverse-decomposition}, we obtain
\begin{equation}
C_-\ge H(A)+H(B)+\mu+C_{\mathcal I}^-.
\label{eq:reverse-after-fine}
\end{equation}

\subsubsection{Bounding \texorpdfstring{$C_{\mathcal I}^-$}{C I minus}}

Let
\[
W\triangleq\bigl(X_h:h\in\cB,\ \{s_h,t_h\}=\{b_0,b_2\}\bigr)
\]
denote the source messages in both directions between $b_0$ and $b_2$. Define
\begin{align}
\overline Z_0&\triangleq(B_0^{\src},B_2^{\src},(Y_e)_{e\in\cI_0}),\nonumber\\
\overline Z_1&\triangleq(B_1^{\src},(Y_e)_{e\in\cI_1}).
\end{align}
By Supplementary Lemma~\ref{s-lem:coarse-recovery}, we have
\begin{align}
H(\overline Z_0,\overline Z_1\mid W)&=H(A,B\mid W),\nonumber\\
I(\overline Z_0;\overline Z_1\mid W)&\ge H(B\mid W).
\label{eq:coarse-recovery}
\end{align}
Define
\begin{equation}
\begin{split}
P\triangleq{}&H(B_0^{\src})+H(B_2^{\src})+\sum_{e\in\cI_0}H(Y_e)-H(\overline Z_0).
\end{split}
\label{eq:P}
\end{equation}

Using \eqref{eq:coarse-recovery}, subadditivity, and $H(B)=H(W)+H(B\mid W)$, Supplementary Lemma~\ref{s-lem:coarse-bound} proves

\begin{equation}
C_{\mathcal I}^-\ge H(A)+H(B\mid W)+P.
\label{eq:coarse-cost}
\end{equation}
Together with \eqref{eq:reverse-after-fine}, we have
\begin{align}
C_-&\ge H(A)+H(B)+\mu+H(A)+H(B\mid W)+P\nonumber\\
&=2H(A)+2H(B)-H(W)+\mu+P.
\label{eq:reverse-partial}
\end{align}
To complete the reverse bound, we show that the remaining terms satisfy $\mu+P\ge H(W)$.

\subsubsection{Bounding \texorpdfstring{$\mu+P$}{mu + P}}
By Supplementary Lemma~\ref{s-lem:fine-bound},
\begin{equation}
\mu\ge I(W;Z_1).
\label{eq:mu-info}
\end{equation}
Consider the vertex sets
\begin{align*}
\mathcal Q_L&=\{b_2,o\}\cup\{x_{i2}:i=0,1,2\},\\
\mathcal Q_R&=\{b_0\}\cup\{x_{i0}:i=0,1,2\}.
\end{align*}
The edges with one endpoint in each set are $ox_{i0}$, $i=0,1,2$. Define
\[
\Pi\triangleq(Y_{ox_{i0}},Y_{x_{i0}o})_{i=0}^2.
\]
Since the transmission schedule is fixed, the messages in $\Pi$ are listed in the order in which they are transmitted in the protocol. Also define
\begin{align}
Z_L&\triangleq(B_2^{\src},(Y_{a_ix_{i2}})_i,(Y_{x_{i1}o})_i),\nonumber\\
Z_R&\triangleq(B_0^{\src},(Y_{a_ix_{i0}})_i).
\label{eq:LR-tapes}
\end{align}
Each tuple consists of the local source messages in the corresponding set and the transcripts entering that set from outside $\mathcal Q_L\cup\mathcal Q_R$. Up to reordering, we have
\begin{equation}
\overline Z_0=(Z_L,Z_R).
\label{eq:coarse-LR}
\end{equation}

Supplementary Lemma~\ref{s-lem:transcript-recovery} shows that $Z_1$ determines $\Pi$, i.e., $H(\Pi\mid Z_1)=0$. Thus, \eqref{eq:mu-info} and data processing give
\begin{equation}
\mu\ge I(W;Z_1)\ge I(W;\Pi).
\label{eq:mu-pi}
\end{equation}

Next, we bound $P$. By \eqref{eq:coarse-LR}, we have $H(\overline Z_0)=H(Z_L,Z_R)$. Substituting this equality into \eqref{eq:P}, we have
{\small
\setlength{\jot}{2pt}
\begin{align}
P&=H(B_0^{\src})+H(B_2^{\src})+\textstyle\sum_{e\in\cI_0}H(Y_e)\nonumber\\
&\quad-H(Z_L,Z_R)\nonumber\\
&=\Bigl[H(B_2^{\src})+\textstyle\sum_i\bigl(H(Y_{a_ix_{i2}})+H(Y_{x_{i1}o})\bigr)\Bigr]\nonumber\\
&\quad+\Bigl[H(B_0^{\src})+\textstyle\sum_iH(Y_{a_ix_{i0}})\Bigr]-H(Z_L,Z_R)\nonumber\\
&\ge H(Z_L)+H(Z_R)-H(Z_L,Z_R)=I(Z_L;Z_R).
\label{eq:P-tc}
\end{align}
}
The two bracketed sums correspond to the components of $Z_L$ and $Z_R$ in \eqref{eq:LR-tapes}, respectively, so the inequality follows from subadditivity. Combining \eqref{eq:P-tc} with \eqref{eq:mu-pi}, we obtain
\begin{equation}
\mu+P\ge I(W;\Pi)+I(Z_L;Z_R).
\label{eq:reverse-interaction-cost}
\end{equation}

Note that both $(Z_L,\Pi)$ and $(Z_R,\Pi)$ determine $W$: each contains the sources and incoming transcripts of its region. Every message in $\Pi$ is determined by the sending region's tuple and earlier messages. Supplementary Lemma~\ref{s-lem:interaction} therefore gives
\[
H(W\mid\Pi)\le I(Z_L;Z_R\mid\Pi)\le I(Z_L;Z_R).
\]
The first inequality holds because both $(Z_L,\Pi)$ and $(Z_R,\Pi)$ determine $W$. The second follows from the deterministic interaction property of $\Pi$. Thus,
\[
H(W)=I(W;\Pi)+H(W\mid\Pi)\le I(W;\Pi)+I(Z_L;Z_R).
\]
Together with \eqref{eq:reverse-interaction-cost}, we have
\[
\mu+P\ge H(W).
\]
 Substituting the above inequality into \eqref{eq:reverse-partial}, we obtain
\begin{align}
C_-&\ge2H(A)+2H(B)-H(W)+\mu+P\nonumber\\
&\ge2H(A)+2H(B)-H(W)+H(W)=2R.
\end{align}
This proves the reverse bound and completes the proof of Theorem~\ref{thm:outer}.

\section{Arbitrary Sessions and the Main Theorem}
\label{sec:extension}
In this section, we will extend the directional inequality to sessions with arbitrary endpoints on $\Gamma_{3,3}$ and obtain \eqref{eq:gamma-target}, completing the proof of Theorem~\ref{thm:intro-main}.

Recall that $d_{\Gamma_{3,3}}(u,v)$ is the shortest-path distance between $u$ and $v$ in $\Gamma_{3,3}$ when every edge has length one. Throughout this section, we write $d(u,v)$ for $d_{\Gamma_{3,3}}(u,v)$.
\begin{theorem}
\label{thm:all-session}
Every deterministic zero-error protocol on $\Gamma_{3,3}$ with independent source messages $X_h$ satisfies
\begin{equation}
\sum_{uv\in E(\Gamma_{3,3})}[H(Y_{uv})+H(Y_{vu})]\ge\sum_hH(X_h)d(s_h,t_h).
\label{eq:all-session}
\end{equation}
\end{theorem}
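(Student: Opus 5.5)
The plan is to reduce Theorem~\ref{thm:all-session} to Theorem~\ref{thm:outer} by \emph{extending} every session to one between two distinct outer terminals of the same class, paying for the extension on the left-hand side. The first observation is that $E_+\cup E_-$ contains every arc of $\Gamma_{3,3}$ exactly once. Hence the left side of \eqref{eq:all-session} equals $C_++C_-$. For a protocol whose sessions all lie in $\cA\cup\cB$, Theorem~\ref{thm:outer} gives $C_++C_-\ge4\bigl(H(A)+H(B)\bigr)=4\sum_hH(X_h)$, using independence of the messages. Since $d(a_i,a_k)=d(b_j,b_l)=4$ for $i\ne k$ and $j\ne l$, this is exactly \eqref{eq:all-session} in that case.

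The combinatorial step is to verify that every pair $\{s,t\}$ with $s\ne t$ lies on a geodesic of length $4$ joining two distinct same-class outer terminals. That is, there are $u,u'$, both $a$-terminals or both $b$-terminals with $u\ne u'$, such that $d(u,s)+d(s,t)+d(t,u')=4$. I will check this by a case analysis over the vertex types $a_i,b_j,x_{ij},o$, using the shortest paths listed below. The remaining cases follow by the $a\leftrightarrow b$ symmetry.
\begin{itemize}
\item $\{a_i,b_j\}$: use $a_i\,x_{ij}\,b_j\,x_{kj}\,a_k$.
\item $\{o,x_{kj}\}$: use $a_i\,x_{ij}\,o\,x_{kj}\,a_k$ with $i\ne k$.
\item $\{x_{ij},x_{kl}\}$: if $i\ne k$, use $a_i\,x_{ij}\,o\,x_{kl}\,a_k$; if $i=k$, then $j\ne l$ and we use $b_j\,x_{ij}\,o\,x_{il}\,b_l$.
\item $\{a_i,x_{kj}\}$: if $k\ne i$, use $a_i\,x_{il}\,o\,x_{kj}\,a_k$; if $k=i$, use $a_i\,x_{ij}\,o\,x_{i'j}\,a_{i'}$ with $i'\ne i$.
\item $\{a_i,o\}$: use $a_i\,x_{i0}\,o\,x_{k0}\,a_k$.
\item Pairs already of distance $4$ need no extension.
\end{itemize}

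Given a protocol for the original sessions, I will build a new deterministic zero-error protocol with sessions $h'$ from $u_h$ to $u'_h$ carrying the same message $X_h$. Here $u_h,u'_h$ are chosen as above, oriented so that $s_h$ is nearer $u_h$. Before the original protocol, $u_h$ forwards $X_h$ hop by hop along the geodesic to $s_h$. The original protocol is then run unchanged. Afterwards, $t_h$, which has decoded $X_h$, forwards it along the geodesic to $u'_h$. The messages remain independent and uniform, and the new sessions all lie in $\cA\cup\cB$. Each new transcript on an arc is the old transcript concatenated with the copies of the forwarded messages $X_h$ routed over that arc. By subadditivity of entropy, the total added cost over all arcs is therefore at most $\sum_h\bigl(4-d(s_h,t_h)\bigr)H(X_h)$.

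Applying Theorem~\ref{thm:outer} to the new protocol then gives the chain
\begin{align*}
\sum_{uv}\bigl[H(Y_{uv})+H(Y_{vu})\bigr]+\sum_h\bigl(4-d(s_h,t_h)\bigr)H(X_h)\ge4\sum_hH(X_h),
\end{align*}
which rearranges to \eqref{eq:all-session}. The main points needing care are that the modified protocol fits the model, with a fixed schedule, deterministic messages, and the causal order respected since prepended transmissions come first and appended ones last, and that conditioning issues do not arise because Theorem~\ref{thm:outer} allows arbitrary message distributions and directions. I expect the geodesic case analysis to be the only real obstacle, in particular the same-row cell pairs that force a switch to the $b$-class, and I will double-check every case against the distances in $\Gamma_{3,3}$.
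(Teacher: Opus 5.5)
Your proposal is correct and follows the paper's proof essentially step for step. You extend each session along a length-four geodesic between distinct same-class outer terminals passing through $s_h$ and then $t_h$; this is the paper's Lemma~\ref{lem:geodesic}, which you re-derive using the $a\leftrightarrow b$ automorphism instead of the full table. You then bound the added cost by $\sum_h(4-d(s_h,t_h))H(X_h)$ via subadditivity and apply Theorem~\ref{thm:outer} to $C_+'+C_-'$, exactly as the paper does. Your $\{a_i,b_j\}$ and $\{a_i,o\}$ paths need the condition $k\ne i$, which you leave implicit but clearly intend.
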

\begin{IEEEproof}
Consider a session $h$ with source $s_h$ and destination $t_h$. To apply Theorem~\ref{thm:outer}, we extend this session to one with source $\widehat s_h$ and destination $\widehat t_h$ satisfying either
\[
\widehat s_h=a_i,\quad \widehat t_h=a_k,
\quad i,k\in\{0,1,2\},\quad i\ne k,
\]
or
\[
\widehat s_h=b_j,\quad \widehat t_h=b_\ell,
\quad j,\ell\in\{0,1,2\},\quad j\ne\ell.
\]
Supplementary Lemma~\ref{lem:geodesic} guarantees that these endpoints can be chosen so that a shortest path of length four from $\widehat s_h$ to $\widehat t_h$ passes through $s_h$ and then $t_h$. Note that $s_h$ may coincide with $\widehat s_h$, and $t_h$ may coincide with $\widehat t_h$. Splitting the chosen length-four path at $s_h$ and $t_h$ gives three portions whose lengths sum to four. Since replacing any portion by a shorter path would contradict the shortestness of the chosen path, each portion is a shortest path between its endpoints. Thus,
\[
d(\widehat s_h,s_h)+d(s_h,t_h)+d(t_h,\widehat t_h)=4.
\]
For each original session $h$, we construct a corresponding new session with the same message $X_h$, initially held at $\widehat s_h$ and requested at $\widehat t_h$. We construct a protocol for these new sessions by extending the original protocol as follows. The joint distribution of the source messages remains unchanged. For each session $h$, we first forward $X_h$ from $\widehat s_h$ to $s_h$ along the chosen path. Once every original source $s_h$ has received its message $X_h$, we execute the original protocol. Each original destination $t_h$ then recovers $X_h$, which we forward along the chosen path to $\widehat t_h$. This defines a deterministic zero-error protocol for the new collection of sessions.

Let $Y'_{uv}$ denote the directed transcripts of the extended protocol, and define
{\small
\setlength{\jot}{2pt}
\begin{align*}
T&\triangleq\textstyle\sum_{uv\in E(\Gamma_{3,3})}[H(Y_{uv})+H(Y_{vu})],\\
T'&\triangleq\textstyle\sum_{uv\in E(\Gamma_{3,3})}[H(Y'_{uv})+H(Y'_{vu})].
\end{align*}
}
Each message $X_h$ is forwarded over $d(\widehat s_h,s_h)+d(t_h,\widehat t_h)=4-d(s_h,t_h)$ additional edges. Each extended transcript consists of the original transcript and the source messages forwarded on that directed edge. Therefore, by subadditivity, we have
\[
T'\le T+\sum_h[4-d(s_h,t_h)]H(X_h).
\]
For the extended protocol, define $C_\pm'\triangleq\sum_{(u,v)\in E_\pm}H(Y'_{uv})$.  Since the source and destination of each new session lie on the same outer side, it follows from Theorem~\ref{thm:outer} that
\[
T'=C_+'+C_-'\ge4\sum_hH(X_h).
\]
Combining the two inequalities, we obtain
\begin{align*}
T&\ge T'-\textstyle\sum_h[4-d(s_h,t_h)]H(X_h)\\
&\ge4\textstyle\sum_hH(X_h)-\textstyle\sum_h[4-d(s_h,t_h)]H(X_h)\\
&=\textstyle\sum_h d(s_h,t_h)H(X_h),
\end{align*}
which proves \eqref{eq:all-session}.
\end{IEEEproof}

Since the protocol decodes correctly for every source-message combination, we may take the source messages to be independent and uniform on their respective alphabets without changing the protocol or its communication cost. For a supported rate vector $r$ and the corresponding integer $n$, \eqref{eq:source-rate-budget} and \eqref{eq:edge-rate} give
\begin{align*}
H(X_h)&=\log_2|\mathcal X_h|\ge nr_h,\\
H(Y_{uv})+H(Y_{vu})&\le b_{uv}+b_{vu}=nf(uv).
\end{align*}
Hence, Theorem~\ref{thm:all-session} yields
\begin{align*}
n\textstyle\sum_{e\in E(\Gamma_{3,3})}f(e)
&\ge\textstyle\sum_{uv\in E(\Gamma_{3,3})}[H(Y_{uv})+H(Y_{vu})]\\
&\ge\textstyle\sum_hH(X_h)d(s_h,t_h)\\
&\ge n\textstyle\sum_h r_hd(s_h,t_h).
\end{align*}
Dividing the above inequality by $n$ on both sides, we have
\begin{equation}
\sum_{e\in E(\Gamma_{3,3})}f(e)\ge\sum_h r_hd(s_h,t_h).
\label{eq:gamma-rate}
\end{equation}

Inequality~\eqref{eq:gamma-rate} is precisely the desired bound stated in \eqref{eq:gamma-target}. Therefore, inequality~\eqref{eq:gamma-rate}, together with the reduction in \cite[proof of Theorem~8]{LiuQueLiLi2026}, completes the proof of Theorem~\ref{thm:intro-main}. A detailed metric-packing argument, including a direct proof of the transfer lemma, is provided in Supplementary Section~\ref{supp:main-reduction}.

\section{Conclusion}
\label{sec:conclusion}
We have established a distance-weighted communication inequality for arbitrary unicast sessions on $\Gamma_{3,3}$: the total communication rate is at least the sum of the session rates weighted by their source--destination shortest-path distances. The proof first establishes two directional entropy inequalities for sessions between designated outer terminals and then extends the resulting bound to sessions with arbitrary endpoints. Combined with the metric-packing reduction of Liu et al., this inequality proves that, for every undirected network and every edge-capacity assignment, network coding and fractional routing have the same capacity region whenever the demand graph has matching number at most two, i.e., no three demand pairs have pairwise disjoint endpoints.

\clearpage
\section*{Acknowledgments}
\textit{AI statement:} The authors used ChatGPT and Codex for assistance with brainstorming, checking the correctness of proofs, and language revision in the main text and supplementary material. The authors are responsible for the mathematical claims and the final manuscript.

\IEEEtriggeratref{8}

\clearpage

\setcounter{section}{0}
\setcounter{equation}{0}
\setcounter{lemma}{0}
\setcounter{proposition}{0}
\setcounter{table}{0}
\renewcommand{\thesection}{S\Roman{section}}
\renewcommand{\theequation}{S\arabic{equation}}
\renewcommand{\thelemma}{S\arabic{lemma}}
\renewcommand{\theproposition}{S\arabic{proposition}}
\renewcommand{\thetable}{S\Roman{table}}
\renewcommand{\theHsection}{supp.\arabic{section}}
\renewcommand{\theHequation}{supp.\arabic{equation}}
\renewcommand{\theHlemma}{supp.\arabic{lemma}}
\renewcommand{\theHproposition}{supp.\arabic{proposition}}
\renewcommand{\theHtable}{supp.\arabic{table}}

\begin{center}
{\large\bfseries Supplementary Material}\\[3pt]
{\normalsize Routing Optimality for Demand Graphs\\of Matching Number at Most Two}
\end{center}
We use the notation and the protocol model of the main paper, and all entropies are measured in bits. This supplement completes the proofs of the main paper. Section~\ref{supp:basic} proves the two reconstruction principles used by every entropy calculation below. Section~\ref{supp:forward} proves the forward bound $C_+\ge2R$, and Section~\ref{supp:reverse} proves the reverse bound $C_-\ge2R$, completing the proof of Theorem~\ref{thm:outer}. Section~\ref{supp:geodesics} verifies the geodesic table that extends the directional bounds to arbitrary sessions, and Section~\ref{supp:main-reduction} gives the metric-packing argument that completes the proof of Theorem~\ref{thm:intro-main}.

Recall that $\Gamma_{3,3}$ has vertices $a_i,b_j,x_{ij}$ and $o$, where $i,j\in\{0,1,2\}$, and edges $a_ix_{ij}$, $b_jx_{ij}$, and $ox_{ij}$. The vertices $a_i$ and $b_j$ are the outer terminals.

\section{Reconstruction Principles}
\label{supp:basic}
Both directional bounds are proved by reconstructing parts of the protocol from partial transcript data. We state and prove the two reconstruction principles here. The first says that a region of the network can reproduce all of its own activity once it knows its local source messages and everything entering it from outside. The second recovers source messages from the transcripts crossing a partition; it is used in the reverse bound to recover the $a$-sources from the $b$-sources and the crossing transcripts. Both arguments concern realizable values of the sources and the transcripts.

Recall from Section~\ref{sec:model} that $X_h$ is the source message of session $h\in\mathcal H$, generated at $s_h$ and requested at $t_h$. For a vertex set $\mathcal Q$, the notation of Section~\ref{sec:proof} groups these messages as
\[
X_{\mathcal Q}^{\src}=(X_h:s_h\in\mathcal Q),\qquad
X_{\mathcal Q}^{\dst}=(X_h:t_h\in\mathcal Q).
\]
The protocol has $N$ transmissions, indexed by $\tau=1,\ldots,N$. We write $Y_\tau$ for the message sent at transmission $\tau$, whereas $Y_{uv}$ denotes the complete transcript sent from $u$ to $v$. The tuple $Y_{\mathcal Q,\le\tau}^{\mathrm{in}}$ consists of messages entering $\mathcal Q$ from outside through transmission $\tau$. The tuple $Y_{\mathcal Q,\le\tau}^{\mathrm{send}}$ consists of all messages sent by vertices in $\mathcal Q$ through that transmission, including messages sent within $\mathcal Q$. Both tuples are listed in transmission order and are empty at $\tau=0$.

\begin{lemma}
\label{s-lem:region-simulation}
Let $\mathcal Q$ be a set of network vertices. Every deterministic zero-error protocol with $N$ transmissions satisfies
\begin{equation}
H\!\left(Y_{\mathcal Q,\le\tau}^{\mathrm{send}}\mid X_{\mathcal Q}^{\src},Y_{\mathcal Q,\le\tau}^{\mathrm{in}}\right)=0
\quad (0\le\tau\le N),
\label{s-eq:region-prefix}
\end{equation}
and
\begin{equation}
H\!\left(X_{\mathcal Q}^{\dst}\mid X_{\mathcal Q}^{\src},Y_{\mathcal Q,\le N}^{\mathrm{in}}\right)=0.
\label{s-eq:region-decoding}
\end{equation}
\end{lemma}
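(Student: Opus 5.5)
The plan is an induction on $\tau$ for \eqref{s-eq:region-prefix}, followed by a simulation argument for \eqref{s-eq:region-decoding}. Throughout, I will work with realizable values. It suffices to show that there is a deterministic function $\phi_\tau$ with
\[
Y_{\mathcal Q,\le\tau}^{\mathrm{send}}=\phi_\tau\bigl(X_{\mathcal Q}^{\src},Y_{\mathcal Q,\le\tau}^{\mathrm{in}}\bigr)
\]
for every value of the sources. The conditional entropy then vanishes.

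For the base case $\tau=0$, both tuples are empty, so there is nothing to prove. For the inductive step, assume the claim holds for $\tau-1$ and look at transmission $\tau$, with sender $u$ and receiver $v$ fixed by the schedule. There are three cases.

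If $u\notin\mathcal Q$ and $v\in\mathcal Q$, then $Y_\tau$ is a new coordinate of $Y_{\mathcal Q,\le\tau}^{\mathrm{in}}$. The send tuple is unchanged, so the claim follows from the hypothesis, since $Y_{\mathcal Q,\le\tau-1}^{\mathrm{in}}$ is a prefix of $Y_{\mathcal Q,\le\tau}^{\mathrm{in}}$.

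If $u\notin\mathcal Q$ and $v\notin\mathcal Q$, neither tuple changes.

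If $u\in\mathcal Q$, then $Y_\tau$ is by definition a deterministic function of $u$'s local sources and the messages $u$ received at earlier transmissions. The local sources of $u$ are contained in $X_{\mathcal Q}^{\src}$. Each earlier message received by $u$ came either from outside $\mathcal Q$, so it lies in $Y_{\mathcal Q,\le\tau-1}^{\mathrm{in}}$, or from a vertex inside $\mathcal Q$, so it lies in $Y_{\mathcal Q,\le\tau-1}^{\mathrm{send}}$. By the inductive hypothesis the latter is a function of $(X_{\mathcal Q}^{\src},Y_{\mathcal Q,\le\tau-1}^{\mathrm{in}})$. Hence $Y_\tau$, and so the extended send tuple, is a function of $(X_{\mathcal Q}^{\src},Y_{\mathcal Q,\le\tau}^{\mathrm{in}})$.

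One subtlety needs care. "Received at earlier transmissions" must match the causal indexing order fixed in the model. Since the indices are consistent with causal dependencies, every message $u$ uses at step $\tau$ has a smaller index, and the prefix structure makes the induction valid.

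For \eqref{s-eq:region-decoding}, take a session $h$ with $t_h\in\mathcal Q$. The destination $t_h$ outputs $X_h$ via a deterministic decoder applied to its local sources and all messages it received during the protocol. As before, those received messages split into messages from outside $\mathcal Q$, which lie in $Y_{\mathcal Q,\le N}^{\mathrm{in}}$, and messages from inside $\mathcal Q$, which lie in $Y_{\mathcal Q,\le N}^{\mathrm{send}}$. By \eqref{s-eq:region-prefix} at $\tau=N$, the latter are determined by $(X_{\mathcal Q}^{\src},Y_{\mathcal Q,\le N}^{\mathrm{in}})$.

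Zero-error decoding means the decoder output equals $X_h$ for every source value. Therefore $X_h$ is a function of $(X_{\mathcal Q}^{\src},Y_{\mathcal Q,\le N}^{\mathrm{in}})$, and so is the whole tuple $X_{\mathcal Q}^{\dst}$.

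I expect no real obstacle here. The only delicate point is bookkeeping: making sure that "sent by vertices in $\mathcal Q$" includes internal messages, so that the induction closes over messages passed inside $\mathcal Q$, and that transcripts are taken in the fixed schedule order, so that they are well-defined random variables.
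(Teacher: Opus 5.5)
Your proposal is correct and follows the paper's proof essentially step for step. Both argue by induction on $\tau$, split on whether the sender of transmission $\tau$ lies in $\mathcal Q$, and then apply the $\tau=N$ case to reconstruct each destination's local sources and received history for zero-error decoding. The only difference is presentation: you track explicit deterministic functions $\phi_\tau$, while the paper writes the same induction with the chain rule and conditional entropies.
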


\begin{IEEEproof}
We prove Equation~\eqref{s-eq:region-prefix} by induction on $\tau$, writing $S_\tau\triangleq Y_{\mathcal Q,\le\tau}^{\mathrm{send}}$ and $I_\tau\triangleq Y_{\mathcal Q,\le\tau}^{\mathrm{in}}$ for short. For $\tau=0$ no transmission has occurred, so $S_0$ and $I_0$ are empty and $H(S_0\mid X_{\mathcal Q}^{\src},I_0)=0$. Assume the claim holds for $\tau-1$, and let $Y_\tau$ be the message sent at transmission $\tau$. If the sender of $Y_\tau$ lies outside $\mathcal Q$, then $S_\tau=S_{\tau-1}$ and $I_\tau$ refines $I_{\tau-1}$, so
\[
H(S_\tau\mid X_{\mathcal Q}^{\src},I_\tau)\le H(S_{\tau-1}\mid X_{\mathcal Q}^{\src},I_{\tau-1})=0.
\]
If the sender lies in $\mathcal Q$, then $S_\tau=(S_{\tau-1},Y_\tau)$. By the protocol model, $Y_\tau$ is a deterministic function of the sender's local source messages, which lie in $X_{\mathcal Q}^{\src}$, and of its previously received messages, which lie in $S_{\tau-1}\cup I_{\tau-1}$; hence $Y_\tau$ is a deterministic function of $(X_{\mathcal Q}^{\src},S_{\tau-1},I_{\tau-1})$. The chain rule and the induction hypothesis give
\[
\begin{aligned}
H(S_\tau\mid X_{\mathcal Q}^{\src},I_\tau)
&{=} H(S_{\tau-1}\mid X_{\mathcal Q}^{\src},I_\tau)\\
&\quad+H(Y_\tau\mid X_{\mathcal Q}^{\src},S_{\tau-1},I_\tau)=0.
\end{aligned}
\]
This proves Equation~\eqref{s-eq:region-prefix}. In particular, $H(S_N\mid X_{\mathcal Q}^{\src},I_N)=0$. The tuple $(X_{\mathcal Q}^{\src},S_N,I_N)$ determines the local sources and complete received history of every destination in $\mathcal Q$. Hence, zero-error decoding gives
\[
\begin{aligned}
H(X_{\mathcal Q}^{\dst}\mid X_{\mathcal Q}^{\src},I_N)
&=H(X_{\mathcal Q}^{\dst}\mid X_{\mathcal Q}^{\src},S_N,I_N)\\
&=0.
\end{aligned}
\]
The first equality uses that $S_N$ is determined by $(X_{\mathcal Q}^{\src},I_N)$. This proves Equation~\eqref{s-eq:region-decoding}.
\end{IEEEproof}

\vspace{2mm}
Recall that Theorem~\ref{thm:outer} concerns sessions in $\mathcal A\cup\mathcal B$, where $\mathcal A$ consists of sessions connecting two distinct $a$-terminals and $\mathcal B$ consists of sessions connecting two distinct $b$-terminals. The corresponding source-message tuples are
\[
A=(X_h:h\in\mathcal A),\qquad B=(X_h:h\in\mathcal B).
\]
Thus, $\mathcal A,\mathcal B$ specify sessions, while $A,B$ denote their source messages. Under the assumptions of that theorem, these two disjoint groups contain every source message, i.e., $\mathcal H=\mathcal A\cup\mathcal B$. All individual source messages are mutually independent.

\begin{lemma}
\label{s-lem:partition-recovery}
Consider a protocol as in Theorem~\ref{thm:outer}. Partition the vertices so that, for every $h\in\mathcal A$, the source $s_h$ and destination $t_h$ lie in different parts. Let $\mathsf T$ denote the complete transcripts in both directions on all edges whose endpoints lie in different parts. Then
\begin{equation}
H(A\mid B,\mathsf T)=0.
\label{s-eq:partition-recovery}
\end{equation}
\end{lemma}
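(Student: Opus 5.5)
The plan is a cut-and-paste (rectangle) argument on realizable source values, applied to a two-sided coarsening of the given partition. It suffices to show the following: if two realizable source assignments $x=(a,b)$ and $x'=(a',b)$ have the same $B$-value and produce the same crossing transcripts $\mathsf T$, then $a=a'$. Suppose instead that $a\ne a'$. Pick a session $h\in\mathcal A$ with $x_h\ne x'_h$. Let $\mathcal Q_2$ be the part containing $t_h$, and let $\mathcal Q_1$ be the union of all other parts. Then $s_h\in\mathcal Q_1$ by hypothesis. Every edge between $\mathcal Q_1$ and $\mathcal Q_2$ joins two different parts of the original partition, so its transcripts in both directions belong to $\mathsf T$. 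It is therefore enough to work with the two-set cut $(\mathcal Q_1,\mathcal Q_2)$, whose crossing transcripts $\mathsf T'$ are a sub-tuple of $\mathsf T$ and hence also agree under $x$ and $x'$.

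Form the mixed assignment $x''$. It uses the $\mathcal Q_1$-sourced messages of $x'$ and the $\mathcal Q_2$-sourced messages of $x$; each message is assigned according to the side containing its source. Since the messages are independent, the support of the source tuple is the product of the individual supports, so $x''$ is realizable. The key step is to show that $x''$ produces the same crossing transcript $\mathsf T'$. I would argue this by induction over the transmissions crossing the cut, listed in transmission order. Take a crossing transmission whose sender lies in $\mathcal Q_1$. By \eqref{s-eq:region-prefix} applied to $\mathcal Q_1$, this message is a deterministic function of $X^{\src}_{\mathcal Q_1}$ and of the earlier messages entering $\mathcal Q_1$. Under $x''$ these arguments coincide with those under $x'$: the sources agree by construction, and the earlier crossing messages agree by induction. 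So the message equals its value under $x'$, which is its common value in $\mathsf T'$. Symmetrically, messages sent from $\mathcal Q_2$ match those under $x$. This induction is the main (though routine) obstacle. One has to be careful that the prefix statement of Lemma~\ref{s-lem:region-simulation} is applied at each time $\tau$, and that the ``in'' tuple for $\mathcal Q_1$ consists exactly of the crossing messages sent from $\mathcal Q_2$.

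Now apply \eqref{s-eq:region-decoding} to $\mathcal Q_2$. Under $x''$, the destination $t_h$ sees the same $X^{\src}_{\mathcal Q_2}$ and the same incoming transcript as under $x$, so it outputs $x_h$. However, the message of session $h$ under $x''$ is $x'_h$, because $s_h\in\mathcal Q_1$. Since $x_h\ne x'_h$, this contradicts zero-error decoding at the realizable assignment $x''$. Hence $a=a'$, and therefore $A$ is a function of $(B,\mathsf T)$ on the support, which gives \eqref{s-eq:partition-recovery}. Sessions in $\mathcal B$ need no special handling: their messages are held equal in $x$ and $x'$, so they are unaffected by which side supplies them in $x''$.
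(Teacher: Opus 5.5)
Your proposal is correct and is essentially the paper's cut-and-paste argument. The paper splits along the part containing $s_h$ and redoes the per-transmission induction directly from the encoding rules. You instead split along the part containing $t_h$ and obtain the per-step determinism from \eqref{s-eq:region-prefix}, applied to the realizable mixed assignment; these differences are cosmetic.
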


\begin{IEEEproof}
We prove \eqref{s-eq:partition-recovery} by contradiction. Suppose that
\[
H(A\mid B,\mathsf T)>0.
\]
Since all variables are finite, there are two distinct values of $A$ compatible with the same values of $B$ and $\mathsf T$, i.e., there exist $a^{(1)}\ne a^{(2)}$, $b$, and $\mathsf t$ such that
\[
\Pr\{A=a^{(r)},B=b,\mathsf T=\mathsf t\}>0,
\qquad r=1,2.
\]
The tuples $A$ and $B$ together specify all source values. Recall from Section~\ref{sec:model} that the fixed transmission schedule and deterministic encoding rules then determine every transmission in order. We call the resulting run an \emph{execution}. Superscripts $(1)$ and $(2)$ denote source and transcript values in the executions specified by $(a^{(1)},b)$ and $(a^{(2)},b)$, respectively. In execution $r$, the source values are $(a^{(r)},b)$ and the crossing transcript is $\mathsf t$, i.e.,
\[
A^{(r)}=a^{(r)},\qquad
B^{(r)}=b,\qquad
\mathsf T^{(r)}=\mathsf t,
\qquad r=1,2.
\]
Since $A=(X_h:h\in\mathcal A)$, there is a session $h\in\mathcal A$ for which
\begin{equation}
X_h^{(1)}\ne X_h^{(2)}.
\label{s-eq:partition-distinct-source}
\end{equation}
Let $\mathcal Q$ be the part containing $s_h$. The partition assumption places its destination outside this part, i.e.,
\[
s_h\in\mathcal Q,\qquad t_h\notin\mathcal Q.
\]

Define a third source vector $x^{(3)}$ by
\[
x_g^{(3)}=
\begin{cases}
X_g^{(1)}, & s_g\in\mathcal Q,\\
X_g^{(2)}, & s_g\notin\mathcal Q,
\end{cases}
\qquad g\in\mathcal H.
\]
This specifies all source messages before the third execution begins. Each chosen source value has positive marginal probability, so independence gives
\[
\Pr\{X_g=x_g^{(3)}\text{ for all }g\in\mathcal H\}
=\prod_{g\in\mathcal H}\Pr\{X_g=x_g^{(3)}\}>0.
\]
Hence, these source values induce a valid execution, which we call execution $3$; its source values satisfy $X_g^{(3)}=x_g^{(3)}$ for every $g\in\mathcal H$. In particular,
\[
X_h^{(3)}=X_h^{(1)}\ne X_h^{(2)}.
\]

Let $Y_\tau^{(r)}$ be the message sent at transmission $\tau$ in execution $r$, and let $u_\tau$ be its sender. The fixed schedule makes this sender the same in all three executions. We prove by induction on $\tau$ that
\begin{equation}
Y_\tau^{(3)}=
\begin{cases}
Y_\tau^{(1)}, & u_\tau\in\mathcal Q,\\
Y_\tau^{(2)}, & u_\tau\notin\mathcal Q.
\end{cases}
\label{s-eq:partition-execution}
\end{equation}
At the first transmission, the sender has no previously received messages, so the claim follows from the chosen local sources and determinism. Suppose the claim holds before transmission $\tau$. We distinguish two cases.

\noindent\underline{Case 1: $u_\tau\in\mathcal Q$.} The sender's local sources agree with execution $1$, i.e.,
\[
\bigl(X_{\{u_\tau\}}^{\src}\bigr)^{(3)}
=\bigl(X_{\{u_\tau\}}^{\src}\bigr)^{(1)}.
\]
For every earlier message $Y_\sigma$ received by $u_\tau$ from a sender in $\mathcal Q$, the induction hypothesis gives $Y_\sigma^{(3)}=Y_\sigma^{(1)}$. For an earlier message received from outside $\mathcal Q$, we have
\[
Y_\sigma^{(3)}=Y_\sigma^{(2)}=Y_\sigma^{(1)}.
\]
The first equality follows from the induction hypothesis. The second follows because this transmission crosses the boundary of the part $\mathcal Q$ and is therefore recorded in the common transcript $\mathsf T^{(1)}=\mathsf T^{(2)}$. Thus, the sender's entire received history before transmission $\tau$ agrees with execution $1$, i.e.,
\[
\bigl(Y_{\{u_\tau\},\le\tau-1}^{\mathrm{in}}\bigr)^{(3)}
=\bigl(Y_{\{u_\tau\},\le\tau-1}^{\mathrm{in}}\bigr)^{(1)}.
\]
Together with equality of the local sources, determinism gives $Y_\tau^{(3)}=Y_\tau^{(1)}$.

\noindent\underline{Case 2: $u_\tau\notin\mathcal Q$.} The construction similarly gives
\[
\bigl(X_{\{u_\tau\}}^{\src}\bigr)^{(3)}
=\bigl(X_{\{u_\tau\}}^{\src}\bigr)^{(2)}.
\]
Earlier messages received from outside $\mathcal Q$ satisfy $Y_\sigma^{(3)}=Y_\sigma^{(2)}$ by induction. Earlier messages received from $\mathcal Q$ satisfy $Y_\sigma^{(3)}=Y_\sigma^{(1)}=Y_\sigma^{(2)}$ by induction and equality of the crossing transcripts. Hence, the received histories also agree:
\[
\bigl(Y_{\{u_\tau\},\le\tau-1}^{\mathrm{in}}\bigr)^{(3)}
=\bigl(Y_{\{u_\tau\},\le\tau-1}^{\mathrm{in}}\bigr)^{(2)}.
\]
Determinism then gives $Y_\tau^{(3)}=Y_\tau^{(2)}$, completing the induction.

Since $t_h\notin\mathcal Q$, its local sources agree with execution $2$. By \eqref{s-eq:partition-execution}, every message it receives from outside $\mathcal Q$ satisfies $Y_\sigma^{(3)}=Y_\sigma^{(2)}$. Every message it receives from $\mathcal Q$ satisfies $Y_\sigma^{(3)}=Y_\sigma^{(1)}=Y_\sigma^{(2)}$, where the last equality follows from the common crossing transcript. Thus, both its local sources and its complete received history agree with execution $2$, i.e.,
\[
\begin{aligned}
\bigl(X_{\{t_h\}}^{\src}\bigr)^{(3)}
&=\bigl(X_{\{t_h\}}^{\src}\bigr)^{(2)},\\
\bigl(Y_{\{t_h\},\le N}^{\mathrm{in}}\bigr)^{(3)}
&=\bigl(Y_{\{t_h\},\le N}^{\mathrm{in}}\bigr)^{(2)}.
\end{aligned}
\]
The deterministic decoder at $t_h$ must therefore produce the same output in both executions. Zero-error decoding implies
\[
X_h^{(1)}=X_h^{(3)}=X_h^{(2)},
\]
contradicting \eqref{s-eq:partition-distinct-source}. Thus, on the support of $(B,\mathsf T)$, each pair of values is compatible with only one value of $A$. Therefore, $A$ is a deterministic function of $(B,\mathsf T)$, and
\[
H(A\mid B,\mathsf T)=0.
\]
\end{IEEEproof}

\section{The Forward Bound}
\label{supp:forward}
We prove the forward inequality $C_+\ge2R$ of Theorem~\ref{thm:outer}. Recall that
\begin{align*}
C_+&=\sum_{i,j}\bigl[H(Y_{b_jx_{ij}})+H(Y_{ox_{ij}})+H(Y_{x_{ij}a_i})\bigr],\\
R&=H(A)+H(B).
\end{align*}
The transcripts entering the $a$-terminals are grouped as
\[
\mathbf Y_i^a=(Y_{x_{ij}a_i})_{j=0}^2,
\qquad \mathbf Y^a=(\mathbf Y_i^a)_{i=0}^2.
\]
Thus $\mathbf Y_i^a$ contains all transcripts received by $a_i$. By \eqref{eq:kappa},
\begin{align*}
\kappa_{ij}
&=I(Y_{b_jx_{ij}},Y_{ox_{ij}};A,\mathbf Y^a)\\
&\quad+I(Y_{b_jx_{ij}};Y_{ox_{ij}})\\
&\ge I(Y_{b_jx_{ij}},Y_{ox_{ij}};A,\mathbf Y^a).
\end{align*}
We first bound $H(\mathbf Y_i^a\mid A_i^{\src})$ in terms of $\kappa_{ij}$. We then use decoding at the $a$- and $b$-terminals to obtain the contributions $2H(A)$ and $2H(B)$.
\subsection{Bounds for the row transcripts}

Recall that the source messages generated and requested at $a_i$ are
\[
A_i^{\src}=(X_h:s_h=a_i),\qquad A_i^{\dst}=(X_h:t_h=a_i).
\]

\begin{lemma}
\label{s-lem:cell-charge}
For every row $i$,
\begin{equation}
H(\mathbf{Y}_i^{a}\mid A_i^{\src})\le\sum_{j=0}^2\kappa_{ij}.
\label{s-eq:cell-charge}
\end{equation}
\end{lemma}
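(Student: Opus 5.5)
The plan is to exploit that a cell $x_{ij}$ holds no sources and has only three neighbours. Write $U_j\triangleq(Y_{b_jx_{ij}},Y_{ox_{ij}})$ for the transcripts entering cell $x_{ij}$ from outside row $i$, and $V\triangleq(A,\mathbf Y^a)$. Since
\[
\kappa_{ij}=H(Y_{b_jx_{ij}})+H(Y_{ox_{ij}})-H(U_j\mid V)\ge H(U_j)-H(U_j\mid V)=I(U_j;V),
\]
it suffices to charge $H(\mathbf Y_i^a\mid A_i^{\src})$ to the quantities $H(U_j)-H(U_j\mid V)$ cell by cell. The spare term $I(Y_{b_jx_{ij}};Y_{ox_{ij}})$ is available if the bookkeeping needs it.

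\textbf{Step 1 (reconstruction).} Apply Lemma~\ref{s-lem:region-simulation} to the row region $\mathcal Q_i=\{a_i,x_{i0},x_{i1},x_{i2}\}$. Its only local sources are $A_i^{\src}$, and its incoming transcripts are exactly $(U_0,U_1,U_2)$. Because every arc $(x_{ij},a_i)$ lies inside $\mathcal Q_i$, the prefix relation \eqref{s-eq:region-prefix} gives $H(\mathbf Y_i^a\mid A_i^{\src},U_0,U_1,U_2)=0$. Hence
\[
H(\mathbf Y_i^a\mid A_i^{\src})=I(\mathbf Y_i^a;U_0,U_1,U_2\mid A_i^{\src}).
\]
The same lemma holds at every prefix time $\tau$: the part of $\mathbf Y_i^a$ sent up to $\tau$ is a function of $A_i^{\src}$ and the prefixes of the $U_j$ up to $\tau$.

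\textbf{Step 2 (causal chain rule).} Expand $H(\mathbf Y_i^a\mid A_i^{\src})$ by the chain rule in transmission order. A transmission $Y_\tau$ on an arc $(x_{ij},a_i)$ is determined by two things. The first is the prefix of $U_j$ before $\tau$. The second is the prefix of $Y_{a_ix_{ij}}$, which by Lemma~\ref{s-lem:region-simulation} applied to $\{a_i\}$ is a function of $A_i^{\src}$ and the earlier part of $\mathbf Y_i^a$. Cut each $U_j$ into increments $\Delta_{j,1},\Delta_{j,2},\dots$ between consecutive transmissions of $x_{ij}$ to $a_i$. Each term of the chain is then bounded by the information the newest increment of $U_j$ carries about $\mathbf Y_i^a$, given the past. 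Summing these terms for a fixed $j$ telescopes, by the chain rule over $k$, into a mutual information between $U_j$ and $(A_i^{\src},\mathbf Y_i^a)\subseteq V$. This should be at most $H(U_j)-H(U_j\mid V)$ after enlarging the conditioning to all of $V=(A,\mathbf Y^a)$.

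\textbf{Main obstacle.} The hard step is the conditioning in Step 2. The past $\mathbf Y_i^a$ also contains outputs of the other cells $x_{ij'}$, so the telescoped sum naturally gives conditional terms of the form $I(U_j;V\mid U_{<j})$ rather than the unconditional $I(U_j;V)$. I would first bound each conditional term by $I(U_j;V,\text{earlier increments of the other }U_{j'})$. Those extra correlations cannot be absorbed by $\kappa_{ij}$ as written, since $\kappa_{ij}$ involves only $U_j$. So I would instead argue that the increments are conditionally independent across $j$ given $V$, using that $U_j$ and $U_{j'}$ enter different cells. If that fails, I would fall back on the slack $H(Y_{b_jx_{ij}})+H(Y_{ox_{ij}})-H(U_j)$ in $\kappa_{ij}$, together with conditional subadditivity $H(U_j\mid V)\ge H(U_j\mid V,U_{<j})$. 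That inequality points the right way once the upper bound is written as $\sum_j[H(U_j)-H(U_j\mid V,U_{<j})]$ and then $H(U_j)$ is replaced by the larger $H(Y_{b_jx_{ij}})+H(Y_{ox_{ij}})$.
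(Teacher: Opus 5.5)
\textbf{Verdict: genuine gap.} The decisive step is left as a ``main obstacle.'' That obstacle comes from your bookkeeping, not from the problem. The determination you state at the start of Step~2 already suffices if you keep the \emph{full} transcript $U_j$ as the first argument and telescope over transmissions $\tau$, not over increments of $U_j$.

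Apply Lemma~\ref{s-lem:region-simulation} to the one-cell region $\{a_i,x_{ij}\}$. Its inputs are $U_j$ and the messages from $x_{ik}$, $k\ne j$, to $a_i$; before $\tau$ the latter already lie in $Y^{\mathrm{in}}_{\{a_i\},\le\tau-1}$. Hence, for every transmission $\tau$ from $x_{ij}$ to $a_i$,
\[
\begin{aligned}
&H\bigl(Y^{\mathrm{in}}_{\{a_i\},\le\tau}\mid A_i^{\src},Y^{\mathrm{in}}_{\{a_i\},\le\tau-1}\bigr)\\
&\quad=I\bigl(U_j;Y^{\mathrm{in}}_{\{a_i\},\le\tau}\mid A_i^{\src},Y^{\mathrm{in}}_{\{a_i\},\le\tau-1}\bigr).
\end{aligned}
\]
Add the same nonnegative expression for every other $\tau$ and sum over all $\tau$ with $j$ fixed. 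The sum telescopes to
\[
I(U_j;\mathbf Y_i^a\mid A_i^{\src})\le I(U_j;A_i^{\src},\mathbf Y_i^a)\le I(U_j;V)\le\kappa_{ij},
\]
which is how the paper proceeds. The other cells' outputs appear only in the conditioning and drop out in the telescoping, so no $U_{j'}$ ever enters.

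Your increment version is not valid as written. The chain-rule term for a transmission of $x_{ij}$ depends on the whole prefix of $U_j$, and its earlier increments are not in the conditioning. So the term cannot be charged to the newest increment alone. The $U_{<j}$ conditioning you worry about arises only if you go through Step~1's joint quantity $I(\mathbf Y_i^a;U_0,U_1,U_2\mid A_i^{\src})$ and apply a chain rule over $j$. That route should be abandoned.

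Neither of your repairs can work, as one example shows. Take $\mathcal A=\emptyset$, and set up the protocol as follows:
\begin{itemize}
\item $b_0$ sends a uniform $B$-bit $X$ to $x_{i0}$, and $b_1$ sends an independent one, $X'$, to $x_{i1}$.
\item Both sessions are routed to $b_2$ via $o$ through cells of other rows.
\item $o$ also delivers $X\oplus X'$ to an $a$-terminal of another row.
\item Nothing else enters row $i$'s cells or reaches any $a$-terminal.
\end{itemize}
Given $V$, $U_0$ and $U_1$ determine each other, so they are not conditionally independent. Moreover $\kappa_{i0}=\kappa_{i1}=\kappa_{i2}=0$, as is $H(\mathbf Y_i^a\mid A_i^{\src})$. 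Yet $I(U_0,U_1,U_2;V)$ and $\sum_j[H(U_j)-H(U_j\mid V,U_{<j})]$ both equal $1$. So no bound that passes through the joint $(U_0,U_1,U_2)$ can prove the lemma.

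In particular, the inequality $H(U_j\mid V)\ge H(U_j\mid V,U_{<j})$ in your fallback points the wrong way. Closing the gap would require $I(U_j;U_{<j}\mid V)\le I(Y_{b_jx_{ij}};Y_{ox_{ij}})$. For $j=1$ in this example the left side is $1$ and the right side is $0$.
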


\begin{IEEEproof}
Fix a row $i$. The records $Y_{\{a_i\},\le N}^{\mathrm{in}}$ and $\mathbf{Y}_i^a$ differ only by a fixed reordering, so either may be used in entropy expressions. For each $j$, let $\mathcal T_{ij}$ be the set of transmission indices at which $x_{ij}$ sends to $a_i$. These sets partition the transmissions that deliver messages to $a_i$ according to the cell that sends them.

\noindent\underline{Step 1: Reconstruction at a transmission.} We first show that the message sent from $x_{ij}$ to $a_i$ is determined by $A_i^{\src}$, the earlier messages received by $a_i$, and the transcripts $Y_{b_jx_{ij}}$ and $Y_{ox_{ij}}$. Fix $\tau\in\mathcal T_{ij}$ and take $\mathcal Q=\{a_i,x_{ij}\}$. Since all session endpoints are outer terminals, $x_{ij}$ has no local source messages. Hence, $X_{\mathcal Q}^{\src}=A_i^{\src}$. The messages entering $\mathcal Q$ come from $b_j$ or $o$ into $x_{ij}$, or from $x_{ik}$ into $a_i$ with $k\ne j$. The former are contained in $Y_{b_jx_{ij}}$ and $Y_{ox_{ij}}$. The latter, up to transmission $\tau$, are contained in $Y_{\{a_i\},\le\tau-1}^{\mathrm{in}}$, since transmission $\tau$ is from $x_{ij}$ to $a_i$ and hence internal to $\mathcal Q$. Therefore these given transcripts determine the incoming messages of $\mathcal Q$, i.e.,
\[
H\!\left(Y_{\mathcal Q,\le\tau}^{\mathrm{in}}\mid Y_{\{a_i\},\le\tau-1}^{\mathrm{in}},Y_{b_jx_{ij}},Y_{ox_{ij}}\right)=0.
\]
Lemma~\ref{s-lem:region-simulation} then gives
\begin{align}
&H(Y_{\{a_i\},\le\tau}^{\mathrm{in}}\mid A_i^{\src},Y_{\{a_i\},\le\tau-1}^{\mathrm{in}},Y_{b_jx_{ij}},Y_{ox_{ij}})\nonumber\\
&\le H(Y_{\mathcal Q,\le\tau}^{\mathrm{send}}\mid A_i^{\src},Y_{\{a_i\},\le\tau-1}^{\mathrm{in}},Y_{b_jx_{ij}},Y_{ox_{ij}})\nonumber\\
&\le H(Y_{\mathcal Q,\le\tau}^{\mathrm{send}}\mid X_{\mathcal Q}^{\src},Y_{\mathcal Q,\le\tau}^{\mathrm{in}})=0.
\label{eq:forward-message-causality}
\end{align}
The first inequality holds because the new message received by $a_i$ is sent within $\mathcal Q$, and the last equality is \eqref{s-eq:region-prefix}.

\noindent\underline{Step 2: Summing over transmissions.} The chain rule gives
\begin{align}
&H(\mathbf{Y}_i^a\mid A_i^{\src})\nonumber\\
&=\sum_{\tau=1}^N H(Y_{\{a_i\},\le\tau}^{\mathrm{in}}\mid A_i^{\src},Y_{\{a_i\},\le\tau-1}^{\mathrm{in}})\nonumber\\
&=\sum_j\sum_{\tau\in\mathcal T_{ij}}H(Y_{\{a_i\},\le\tau}^{\mathrm{in}}\mid A_i^{\src},Y_{\{a_i\},\le\tau-1}^{\mathrm{in}})\nonumber\\
&=\sum_j\sum_{\tau\in\mathcal T_{ij}}\!I(Y_{b_jx_{ij}},Y_{ox_{ij}};Y_{\{a_i\},\le\tau}^{\mathrm{in}}\mid A_i^{\src},Y_{\{a_i\},\le\tau-1}^{\mathrm{in}})\nonumber\\
&\le\sum_j\sum_{\tau=1}^N\!I(Y_{b_jx_{ij}},Y_{ox_{ij}};Y_{\{a_i\},\le\tau}^{\mathrm{in}}\mid A_i^{\src},Y_{\{a_i\},\le\tau-1}^{\mathrm{in}})\nonumber\\
&=\sum_j I(Y_{b_jx_{ij}},Y_{ox_{ij}};\mathbf{Y}_i^a\mid A_i^{\src})\nonumber\\
&\le\sum_j I(Y_{b_jx_{ij}},Y_{ox_{ij}};A,\mathbf{Y}^a)\nonumber\\
&\le\sum_j\kappa_{ij}.
\label{s-eq:cell-charge-chain}
\end{align}
The second equality omits the indices at which $a_i$ receives no message, and the third uses \eqref{eq:forward-message-causality}. The first inequality adds the nonnegative conditional mutual-information terms omitted from each fixed $j$.

For the fourth equality, fix $j$. Each received-message prefix contains the preceding prefix, so
\begin{align*}
&I(Y_{b_jx_{ij}},Y_{ox_{ij}};Y_{\{a_i\},\le\tau}^{\mathrm{in}}\mid A_i^{\src},Y_{\{a_i\},\le\tau-1}^{\mathrm{in}})\\
&=H(Y_{b_jx_{ij}},Y_{ox_{ij}}\mid A_i^{\src},Y_{\{a_i\},\le\tau-1}^{\mathrm{in}})\\
&\quad-H(Y_{b_jx_{ij}},Y_{ox_{ij}}\mid A_i^{\src},Y_{\{a_i\},\le\tau}^{\mathrm{in}}).
\end{align*}
Summing over $\tau=1,\ldots,N$ cancels consecutive conditional-entropy terms. The initial prefix is empty, leaving
\begin{align*}
&H(Y_{b_jx_{ij}},Y_{ox_{ij}}\mid A_i^{\src})\\
&\quad-H(Y_{b_jx_{ij}},Y_{ox_{ij}}\mid A_i^{\src},Y_{\{a_i\},\le N}^{\mathrm{in}})\\
&=I(Y_{b_jx_{ij}},Y_{ox_{ij}};Y_{\{a_i\},\le N}^{\mathrm{in}}\mid A_i^{\src})\\
&=I(Y_{b_jx_{ij}},Y_{ox_{ij}};\mathbf Y_i^a\mid A_i^{\src}).
\end{align*}
The last equality uses that $Y_{\{a_i\},\le N}^{\mathrm{in}}$ lists the received messages in transmission order, whereas $\mathbf Y_i^a$ groups them by sending cell. The fixed schedule determines the reordering between these two records. Summing over $j$ gives the fourth equality in \eqref{s-eq:cell-charge-chain}.

For the second inequality, the chain rule and the inclusion of $(A_i^{\src},\mathbf Y_i^a)$ in $(A,\mathbf Y^a)$ give, for each $j$,
\begin{align*}
&I(Y_{b_jx_{ij}},Y_{ox_{ij}};\mathbf Y_i^a\mid A_i^{\src})\\
&\quad\le I(Y_{b_jx_{ij}},Y_{ox_{ij}};A_i^{\src},\mathbf Y_i^a)\\
&\quad\le I(Y_{b_jx_{ij}},Y_{ox_{ij}};A,\mathbf Y^a).
\end{align*}
The last inequality in \eqref{s-eq:cell-charge-chain} follows from \eqref{eq:kappa}.
\end{IEEEproof}

\begin{lemma}
\label{s-lem:row-output}
The decoding requirements at the $a$-terminals imply
\begin{equation}
\sum_iH(\mathbf{Y}_i^{a}\mid A_i^{\src})\ge H(A)+H(\mathbf{Y}^{a}\mid A).
\label{s-eq:row-output}
\end{equation}
\end{lemma}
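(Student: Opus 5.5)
The plan is to bundle each $a$-terminal's local sources with its incoming transcripts into one random variable, apply subadditivity to the three bundles, and then use decoding at the $a$-terminals to identify the joint entropy of the bundles with $H(A,\mathbf Y^a)$.

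For each row $i$, set $U_i\triangleq(A_i^{\src},\mathbf Y_i^a)$. Each term on the left of \eqref{s-eq:row-output} can then be written as
\[
H(\mathbf Y_i^a\mid A_i^{\src})=H(U_i)-H(A_i^{\src}).
\]
Summing over $i$ and applying subadditivity gives
\[
\sum_iH(\mathbf Y_i^a\mid A_i^{\src})\ge H(U_0,U_1,U_2)-\sum_iH(A_i^{\src}).
\]

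Next I will evaluate the subtracted sum. Every session in $\cA$ has its source at exactly one $a$-terminal. Hence $A_0^{\src},A_1^{\src},A_2^{\src}$ are disjoint subtuples that together make up $A$. Since all source messages are mutually independent, this gives $\sum_iH(A_i^{\src})=H(A)$.

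The main step is to show $H(U_0,U_1,U_2)=H(A,\mathbf Y^a)$. Clearly $(U_0,U_1,U_2)$ contains $\mathbf Y^a$ and $(A_i^{\src})_i$, and it is contained in $(A,\mathbf Y^a)$. So it suffices to show that $(U_0,U_1,U_2)$ determines all of $A$. I will apply the decoding relation \eqref{s-eq:region-decoding} of Lemma~\ref{s-lem:region-simulation} with $\mathcal Q=\{a_i\}$. The record $Y_{\{a_i\},\le N}^{\mathrm{in}}$ is a fixed reordering of $\mathbf Y_i^a$, because $a_i$ is adjacent only to $x_{i0},x_{i1},x_{i2}$. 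Therefore $U_i$ determines $A_i^{\dst}$, and in particular $(U_0,U_1,U_2)$ determines $(A_i^{\dst})_i$.

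Every session in $\cA$ also has its destination at some $a$-terminal, so $(A_i^{\dst})_i$ covers all of $A$. Hence $H(U_0,U_1,U_2)=H(A,\mathbf Y^a)=H(A)+H(\mathbf Y^a\mid A)$. Substituting both facts into the subadditivity bound yields
\[
\sum_iH(\mathbf Y_i^a\mid A_i^{\src})\ge H(A)+H(\mathbf Y^a\mid A)+H(A)-H(A)=H(A)+H(\mathbf Y^a\mid A),
\]
which is \eqref{s-eq:row-output}.

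I expect no serious obstacle here. The only points needing care are that the incoming record at $a_i$ is exactly $\mathbf Y_i^a$ up to a schedule-determined reordering, and that the $A_i^{\src}$ partition $A$. Both follow directly from the structure of $\Gamma_{3,3}$ and the definition of $\cA$.
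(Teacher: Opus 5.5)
There is a genuine gap, and it sits in your final substitution. Your subadditivity bound is $\sum_iH(\mathbf Y_i^a\mid A_i^{\src})\ge H(U_0,U_1,U_2)-\sum_iH(A_i^{\src})$. Substituting $H(U_0,U_1,U_2)=H(A)+H(\mathbf Y^a\mid A)$ and $\sum_iH(A_i^{\src})=H(A)$ gives only
\[
\sum_iH(\mathbf Y_i^a\mid A_i^{\src})\ge H(A)+H(\mathbf Y^a\mid A)-H(A)=H(\mathbf Y^a\mid A).
\]
The extra $+H(A)$ in your last display has no justification, so your argument falls short by exactly $H(A)$. Your ``main step'' also does no work. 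Since $A_0^{\src},A_1^{\src},A_2^{\src}$ already partition $A$, the tuple $(U_0,U_1,U_2)$ equals $(A,\mathbf Y^a)$ up to reordering without any decoding. The decoding requirement, which is the whole content of the lemma, therefore never enters your bound.

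The missing idea is that each message of $A$ must be counted twice. It appears once as conditioning at its source terminal and once as information carried by the incoming transcripts at its distinct destination terminal. Applying subadditivity to the joint tuples $U_i$ merges these two appearances and loses one copy. Instead, use decoding row by row, before combining rows. By \eqref{s-eq:region-decoding}, $A_i^{\dst}$ is a function of $(A_i^{\src},\mathbf Y_i^a)$. It is independent of $A_i^{\src}$ because sessions have distinct endpoints. Hence
\begin{align*}
H(\mathbf Y_i^a\mid A_i^{\src})
&=H(A_i^{\dst},\mathbf Y_i^a\mid A_i^{\src})\\
&=H(A_i^{\dst})+H(\mathbf Y_i^a\mid A_i^{\src},A_i^{\dst})\\
&\ge H(A_i^{\dst})+H(\mathbf Y_i^a\mid A),
\end{align*}
where the last step enlarges the conditioning to all of $A$. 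Sum over $i$ and use $\sum_iH(A_i^{\dst})=H(A)$, which holds because every $\cA$-message has its destination at exactly one $a$-terminal. Then apply conditional subadditivity, $\sum_iH(\mathbf Y_i^a\mid A)\ge H(\mathbf Y^a\mid A)$, to obtain the lemma. This is the paper's argument.
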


\begin{IEEEproof}
Apply \eqref{s-eq:region-decoding} to $\mathcal Q=\{a_i\}$. Its local source tuple is $A_i^{\src}$ and its incoming transcript tuple is $\mathbf Y_i^a$, so
\[
H(A_i^{\dst}\mid A_i^{\src},\mathbf Y_i^a)=0.
\]
A session has distinct endpoints, so $A_i^{\src}$ and $A_i^{\dst}$ contain disjoint sets of independent source messages. The chain rule therefore gives
\begin{align}
H(\mathbf Y_i^a\mid A_i^{\src})
&=H(A_i^{\dst},\mathbf Y_i^a\mid A_i^{\src})\nonumber\\
&=H(A_i^{\dst})+H(\mathbf Y_i^a\mid A_i^{\src},A_i^{\dst})\nonumber\\
&\ge H(A_i^{\dst})+H(\mathbf Y_i^a\mid A).
\label{s-eq:row-single}
\end{align}
The last inequality conditions on the remaining components of $A$. Each $A$-message occurs in exactly one destination group. Since the source messages are independent,
\[
\sum_iH(A_i^{\dst})=H(A).
\]
Summing \eqref{s-eq:row-single} over $i$ and applying conditional subadditivity gives
\begin{align*}
\sum_iH(\mathbf Y_i^a\mid A_i^{\src})
&\ge\sum_iH(A_i^{\dst})+\sum_iH(\mathbf Y_i^a\mid A)\\
&\ge H(A)+H(\mathbf Y^a\mid A).
\end{align*}
\end{IEEEproof}

\subsection{A bound for the column transcripts}
Recall that the source messages generated and requested at $b_j$ are
\[
B_j^{\src}=(X_h:s_h=b_j),\qquad B_j^{\dst}=(X_h:t_h=b_j).
\]
For each column $j$, we compare the tuples $((Y_{b_jx_{ij}})_i,(Y_{ox_{ij}})_i)$ and $(B_j^{\src},(Y_{ox_{ij}})_i)$ when $(A,\mathbf Y^a)$ is given. We first establish a recovery property using the outgoing transcripts of all three $b$-terminals.
\begin{lemma}
\label{s-lem:upload}
\[
H\bigl(B\mid A,(Y_{b_jx_{ij}})_{i,j}\bigr)=0.
\]
\end{lemma}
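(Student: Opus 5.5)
The plan is to combine the two reconstruction principles of Supplementary Section~\ref{supp:basic}. First we show that $(A,(Y_{b_jx_{ij}})_{i,j})$ determines every transcript crossing into or out of the $b$-terminals. Then we recover $B$ from these crossing transcripts with a version of Lemma~\ref{s-lem:partition-recovery} in which the roles of $A$ and $B$ are exchanged.

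\emph{Step 1: recovering the transcripts into the $b$-terminals.} Take $\mathcal Q=\{o\}\cup\{a_i\}_i\cup\{x_{ij}\}_{i,j}$, the complement of $\{b_0,b_1,b_2\}$. Every session endpoint is an outer terminal, and the $b$-terminals lie outside $\mathcal Q$, so $X_{\mathcal Q}^{\src}=A$. In $\Gamma_{3,3}$ the only edges leaving $\mathcal Q$ are $b_jx_{ij}$, so up to a fixed reordering $Y_{\mathcal Q,\le N}^{\mathrm{in}}=(Y_{b_jx_{ij}})_{i,j}$. Equation~\eqref{s-eq:region-prefix} with $\tau=N$ then shows that $(A,(Y_{b_jx_{ij}})_{i,j})$ determines every message sent by $\mathcal Q$. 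In particular it determines $(Y_{x_{ij}b_j})_{i,j}$. Let $\mathsf T$ be the collection of all transcripts in both directions on the edges $b_jx_{ij}$. We conclude $H(\mathsf T\mid A,(Y_{b_jx_{ij}})_{i,j})=0$, and hence
\[
H\bigl(B\mid A,(Y_{b_jx_{ij}})_{i,j}\bigr)=H(B\mid A,\mathsf T).
\]

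\emph{Step 2: recovering $B$ from the crossing transcripts.} Partition the vertices into the four parts $\{b_0\}$, $\{b_1\}$, $\{b_2\}$ and $\mathcal Q$. The edges whose endpoints lie in different parts are exactly the edges $b_jx_{ij}$, so their crossing transcript is $\mathsf T$. Every session in $\mathcal B$ joins two distinct $b$-terminals, so its source and destination lie in different parts.

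We then apply Lemma~\ref{s-lem:partition-recovery} with $A$ and $B$ interchanged, which gives $H(B\mid A,\mathsf T)=0$. Combined with Step 1, this proves the lemma.

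\emph{Main obstacle.} The main point to check is that Lemma~\ref{s-lem:partition-recovery} really applies with the roles of $A$ and $B$ swapped and with a partition into more than two parts. We would justify this by rereading its proof.

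The proof first picks a session $h$ whose message differs between two executions. It takes $\mathcal Q$ to be the part containing $s_h$ and builds the hybrid execution~$3$, which uses execution-$1$ sources inside $\mathcal Q$ and execution-$2$ sources outside. It then runs the induction~\eqref{s-eq:partition-execution} and ends with the decoding contradiction at $t_h\notin\mathcal Q$.

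These steps use only three facts. The first is that the separated sessions have their endpoints in different parts. The second is that the two executions share all other source values (here $A$) and the crossing transcript. The third is independence of the individual messages. None of these depends on which group of sessions is called $A$ or on the number of parts, because every transmission crossing the boundary of $\mathcal Q$ is a transmission between different parts and is therefore recorded in $\mathsf T$. So the same argument, with $\mathcal A$ replaced by $\mathcal B$, goes through verbatim.
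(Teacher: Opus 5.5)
Your proof is correct and follows the paper's argument essentially verbatim. The paper also applies the region-simulation lemma to the complement of $\{b_0,b_1,b_2\}$ to recover $(Y_{x_{ij}b_j})_{i,j}$, and then invokes Lemma~\ref{s-lem:partition-recovery} with the roles of $A$ and $B$ interchanged on the four-part partition $\mathcal R,\{b_0\},\{b_1\},\{b_2\}$; your check that this lemma's proof uses only the part containing $s_h$, the shared crossing transcript, and source independence justifies the swap that the paper asserts without comment.
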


\begin{IEEEproof}
Let $\mathcal R=V(\Gamma_{3,3})\setminus\{b_0,b_1,b_2\}$. Its local source tuple is $A$, and its incoming transcripts are $(Y_{b_jx_{ij}})_{i,j}$. Lemma~\ref{s-lem:region-simulation} therefore gives
\begin{equation}
H\bigl((Y_{x_{ij}b_j})_{i,j}\mid A,(Y_{b_jx_{ij}})_{i,j}\bigr)=0.
\label{s-eq:upload-incoming-recovery}
\end{equation}
Consider the partition $\mathcal R,\{b_0\},\{b_1\},\{b_2\}$, whose crossing transcripts form the tuple
\[
\mathsf T=(Y_{b_jx_{ij}},Y_{x_{ij}b_j})_{i,j}.
\]
We keep the three $b$-terminals in separate parts so that the source and destination of every session in $\mathcal B$ lie in different parts, as required by Lemma~\ref{s-lem:partition-recovery}. Applying that lemma with the roles of $A$ and $B$ interchanged gives
\[
H(B\mid A,\mathsf T)=0.
\]
By \eqref{s-eq:upload-incoming-recovery}, the tuple $(A,(Y_{b_jx_{ij}})_{i,j})$ determines $\mathsf T$. Hence,
\[
\begin{aligned}
&H\bigl(B\mid A,(Y_{b_jx_{ij}})_{i,j}\bigr)\\
&\quad=H\bigl(B\mid A,(Y_{b_jx_{ij}})_{i,j},\mathsf T\bigr)\\
&\quad=H(B\mid A,\mathsf T)=0.
\end{aligned}
\]
\end{IEEEproof}

\begin{lemma}
\label{s-lem:column-equivalence}
For every column $j$,
\begin{equation}
H((Y_{b_jx_{ij}})_i,(Y_{ox_{ij}})_i\mid A,\mathbf{Y}^{a})=H(B_j^{\src},(Y_{ox_{ij}})_i\mid A,\mathbf{Y}^{a}).
\label{s-eq:column-equivalence}
\end{equation}
\end{lemma}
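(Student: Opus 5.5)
The equality will follow from showing that, given $(A,\mathbf Y^a)$, each of the two tuples determines the other. Write $U_j\triangleq(Y_{b_jx_{ij}})_i$ and $O_j\triangleq(Y_{ox_{ij}})_i$. It then suffices to prove the two statements
\[
H(U_j\mid B_j^{\src},O_j,A,\mathbf Y^a)=0,\qquad
H(B_j^{\src}\mid U_j,O_j,A,\mathbf Y^a)=0.
\]
Given these, $H(U_j,O_j\mid A,\mathbf Y^a)=H(U_j,O_j,B_j^{\src}\mid A,\mathbf Y^a)=H(B_j^{\src},O_j\mid A,\mathbf Y^a)$.

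\textbf{First relation.} I would apply the region simulation, Lemma~\ref{s-lem:region-simulation}, to a region containing $b_j$. The natural candidate is $\mathcal Q=\{b_j\}\cup\{x_{ij}:i=0,1,2\}$, which is column $j$ together with its $b$-terminal. Its local sources are $B_j^{\src}$, since cells carry no sources. Messages enter $\mathcal Q$ from outside only along three kinds of edges: $o\to x_{ij}$, which is contained in $O_j$; $a_i\to x_{ij}$; and nothing else, since $b_j$ is adjacent only to cells in its column. The difficulty is the transcripts $Y_{a_ix_{ij}}$, which are not among the givens. To handle them, I would instead apply Lemma~\ref{s-lem:region-simulation} to the complement region $\mathcal Q'=V\setminus(\{b_j\}\cup\{x_{ij}\}_i\cup\{o\})$, or to the $a$-side. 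However, $a_i$ sends depending on what it received from all cells, and $\mathbf Y^a$ together with $A$ determines everything $a_i$ sends, because $a_i$'s only inputs are $A_i^{\src}$ and $\mathbf Y_i^a$. In particular, $(A,\mathbf Y^a)$ determines $Y_{a_ix_{ij}}$ for all $i,j$. With that in hand, $(B_j^{\src},O_j,A,\mathbf Y^a)$ determines all of $\mathcal Q$'s incoming transcripts. The first part of Lemma~\ref{s-lem:region-simulation} then gives all messages sent by $b_j$, including $U_j$.

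\textbf{Second relation.} Here the aim is to recover $B_j^{\src}$ from what $b_j$ sends together with everything else, using Lemma~\ref{s-lem:upload}-style reasoning. Lemma~\ref{s-lem:upload} gives $B$ from $A$ and all $U_{j'}$, but only $U_j$ is given here. So I would use a partition argument, in the spirit of Lemma~\ref{s-lem:partition-recovery}, with $\mathcal Q=\{b_j\}$. Every session sourced at $b_j$ has its destination at another $b_{j'}$, outside $\mathcal Q$. Each such destination decodes from its own sources and incoming transcripts. The swap argument (executions $1,2,3$) changes only the sources at $b_j$ while keeping the crossing transcripts $U_j$ and $(Y_{x_{ij}b_j})_i$ fixed. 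This shows that $B_j^{\src}$ is a function of $U_j$, $(Y_{x_{ij}b_j})_i$, and the sources outside $b_j$.

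This leaves two things to control: the incoming transcripts $(Y_{x_{ij}b_j})_i$, and the dependence on the other sources. For the incoming transcripts, given $(A,\mathbf Y^a,U_j,O_j)$, the region $\{x_{ij}\}_i$ has all of its inputs: from $a_i$ (determined as above), from $b_j$ ($U_j$), and from $o$ ($O_j$). So Lemma~\ref{s-lem:region-simulation} gives the $(Y_{x_{ij}b_j})_i$. The dependence on the other sources is the main obstacle. I would redo the swap argument directly, conditioning on the other $b$-sources implicitly. Suppose two executions agree on $(A,\mathbf Y^a,U_j,O_j)$ but differ on $B_j^{\src}$. Splice them with $\mathcal Q=\{b_j\}\cup\{x_{ij}\}_i$, taking $\mathcal Q$ from execution $1$ and the rest from execution $2$. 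I expect the spliced run to be consistent because the crossing transcripts of $\mathcal Q$ agree: $O_j$ is given, $(Y_{a_ix_{ij}})_i$ is fixed by $(A,\mathbf Y^a)$, and the outgoing transcripts $(Y_{x_{ij}a_i})_i\subset\mathbf Y^a$ and $(Y_{x_{ij}o})_i$ are reconstructed. If this reconstruction fails, I would enlarge the conditioning through the chain of equalities instead. The destination $b_{j'}\notin\mathcal Q$ would then decode execution $2$'s value, a contradiction.
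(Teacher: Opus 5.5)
Your proposal is correct. The first relation is exactly the paper's argument: $(A,\mathbf Y^a)$ fixes every $Y_{a_ix_{ij}}$, and region simulation on $\{b_j\}\cup\{x_{ij}\}_i$ then yields $U_j$. For the second relation, both you and the paper use a three-execution splice. Both rest on the same fact: $(A,\mathbf Y^a,U_j,O_j)$ fixes all three inputs of each source-free cell $x_{ij}$, and hence all of its outputs.

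The difference is in how the splice is closed. The paper splices at $\{b_j\}$ alone. Because $U_j$ and $(Y_{x_{ij}b_j})_i$ both agree, the hybrid execution coincides with execution~$2$ in every transmission. The contradiction then comes from Lemma~\ref{s-lem:upload}, which says $B$ is a function of $A$ and all $b$-outgoing transcripts. You instead splice at the column region $\{b_j\}\cup\{x_{ij}\}_i$ and check that all four boundary families $Y_{a_ix_{ij}},Y_{x_{ij}a_i},Y_{ox_{ij}},Y_{x_{ij}o}$ agree. You then contradict zero-error decoding directly at a destination $b_{j'}\notin\mathcal Q$. Your route is self-contained and avoids Lemma~\ref{s-lem:upload}; the paper's route reuses a recovery lemma it already has.

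You should, however, remove the hedging. The ``main obstacle'' is illusory, because the splice never needs executions $1$ and $2$ to agree on sources outside the spliced region. So your $\{b_j\}$-swap already gives $H(B_j^{\src}\mid U_j,(Y_{x_{ij}b_j})_i)=0$ with no other sources in the conditioning. Combined with the cell reconstruction of $(Y_{x_{ij}b_j})_i$, this finishes the proof at once; your version with ``the sources outside $b_j$'' states less than the argument actually gives. Likewise, the fallback ``if this reconstruction fails'' is unnecessary. The transcripts $(Y_{x_{ij}o})_i$ are determined by the agreeing inputs of the cells, so the boundary transcripts provably agree.
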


\begin{IEEEproof}
We prove that, given $(A,\mathbf Y^a)$, each of the two tuples in \eqref{s-eq:column-equivalence} determines the other. Since $(Y_{ox_{ij}})_i$ is contained in both tuples, it suffices to recover $(Y_{b_jx_{ij}})_i$ from $(B_j^{\src},(Y_{ox_{ij}})_i)$ and to recover $B_j^{\src}$ from $((Y_{b_jx_{ij}})_i,(Y_{ox_{ij}})_i)$, in both cases with $(A,\mathbf Y^a)$ given.

\noindent\underline{Recovering $(Y_{b_jx_{ij}})_i$.} Given $(A,\mathbf Y^a)$, apply Lemma~\ref{s-lem:region-simulation} to each $\{a_i\}$. This determines all transcripts $Y_{a_ix_{ij}}$. For the region $\{b_j\}\cup\{x_{ij}:i=0,1,2\}$, the local source tuple is $B_j^{\src}$, and the incoming transcripts are $(Y_{a_ix_{ij}},Y_{ox_{ij}})_i$. The same lemma therefore gives
\[
H\bigl((Y_{b_jx_{ij}})_i\mid A,\mathbf Y^a,B_j^{\src},(Y_{ox_{ij}})_i\bigr)=0.
\]
\noindent\underline{Recovering $B_j^{\src}$.} We prove this direction by contradiction. If
\[
H\bigl(B_j^{\src}\mid A,\mathbf Y^a,(Y_{b_jx_{ij}})_i,(Y_{ox_{ij}})_i\bigr)>0,
\]
there are two positive-probability source realizations with the same conditioning tuple but different $B_j^{\src}$. With superscripts denoting their execution values as in Lemma~\ref{s-lem:partition-recovery}, this means
\begin{align*}
(A,\mathbf Y^a)^{(1)}&=(A,\mathbf Y^a)^{(2)},\\
(Y_{b_jx_{ij}})_i^{(1)}&=(Y_{b_jx_{ij}})_i^{(2)},\\
(Y_{ox_{ij}})_i^{(1)}&=(Y_{ox_{ij}})_i^{(2)},\\
(B_j^{\src})^{(1)}&\ne(B_j^{\src})^{(2)}.
\end{align*}
Their common $(A,\mathbf Y^a)$ determines the same $(Y_{a_ix_{ij}})_i$. Each $x_{ij}$ consequently receives identical transcripts from its three neighbors in the two executions. Since $x_{ij}$ has no local source messages,
\[
(Y_{x_{ij}b_j})_i^{(1)}=(Y_{x_{ij}b_j})_i^{(2)}.
\]
Form execution $3$ by taking $B_j^{\src}$ from execution $1$ and all other sources from execution $2$. Independence ensures that these source values are admissible.
We show by induction that every transmission in execution $3$ agrees with execution $2$. Let $Y_\tau^{(r)}$ be the message at transmission $\tau$ in execution $r$, and let $u_\tau$ be its sender. Suppose that $Y_\sigma^{(3)}=Y_\sigma^{(2)}$ for all $\sigma<\tau$; at $\tau=1$, there are no earlier messages.

If $u_\tau\ne b_j$, the sender's local sources and earlier received messages agree with execution $2$, so determinism gives $Y_\tau^{(3)}=Y_\tau^{(2)}$. If $u_\tau=b_j$, its local sources agree with execution $1$. For each earlier message received by $b_j$, induction and equality of the incoming transcripts in executions $1$ and $2$ give
\[
Y_\sigma^{(3)}=Y_\sigma^{(2)}=Y_\sigma^{(1)}.
\]
Thus $b_j$ sends the same message as in execution $1$. Since its outgoing transcripts also agree in executions $1$ and $2$,
\[
Y_\tau^{(3)}=Y_\tau^{(1)}=Y_\tau^{(2)}.
\]
This completes the induction.

Thus,
\begin{align}
A^{(3)}&=A^{(2)},\nonumber\\
(Y_{b_kx_{ik}})_{i,k}^{(3)}&=(Y_{b_kx_{ik}})_{i,k}^{(2)},\nonumber\\
(B_j^{\src})^{(3)}&=(B_j^{\src})^{(1)}\ne(B_j^{\src})^{(2)}.
\label{s-eq:column-executions}
\end{align}
The first two equalities in \eqref{s-eq:column-executions} imply $B^{(3)}=B^{(2)}$ by Lemma~\ref{s-lem:upload}. In particular, $(B_j^{\src})^{(3)}=(B_j^{\src})^{(2)}$, contradicting the inequality in \eqref{s-eq:column-executions}. Hence,
\[
H\bigl(B_j^{\src}\mid A,\mathbf Y^a,(Y_{b_jx_{ij}})_i,(Y_{ox_{ij}})_i\bigr)=0.
\]
The two zero conditional entropies show that, given $(A,\mathbf Y^a)$, either tuple in the statement determines the other. Applying the chain rule in both orders proves \eqref{s-eq:column-equivalence}.
\end{IEEEproof}

For a tuple $X=(X_1,\ldots,X_n)$ and $S\subseteq\{1,\ldots,n\}$, write $X_S=(X_i:i\in S)$.

\begin{lemma}
\label{s-lem:shearer}
Let $X=(X_1,\ldots,X_n)$ and $S_1,\ldots,S_q\subseteq\{1,\ldots,n\}$ satisfy
\[
|\{\alpha:i\in S_\alpha\}|\ge k
\qquad(i=1,\ldots,n).
\]
Then, for every random variable $Z$,
\begin{equation}
\sum_{\alpha=1}^q H(X_{S_\alpha}\mid Z)\ge kH(X\mid Z).
\label{s-eq:shearer}
\end{equation}
\end{lemma}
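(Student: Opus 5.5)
This is Shearer's lemma in conditional form. I will prove it by induction on $n$ using the chain rule, as in the standard proof.

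First I reduce to the case where every index is covered exactly $k$ times. If some $i$ lies in more than $k$ of the sets, I delete it from one of them. By monotonicity of entropy, this only decreases the left side of \eqref{s-eq:shearer} and keeps every multiplicity at least $k$. Repeating this until all multiplicities equal $k$, it suffices to prove the exact-multiplicity case.

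In that case I expand $H(X\mid Z)=\sum_{i=1}^n H(X_i\mid X_1,\ldots,X_{i-1},Z)$. For each $\alpha$, I write $S_\alpha=\{i_1<\cdots<i_m\}$ and apply the chain rule:
\[
H(X_{S_\alpha}\mid Z)=\sum_{l=1}^m H(X_{i_l}\mid X_{i_1},\ldots,X_{i_{l-1}},Z).
\]
Each term here is at least $H(X_{i_l}\mid X_1,\ldots,X_{i_l-1},Z)$, because conditioning on more variables reduces entropy and $\{i_1,\ldots,i_{l-1}\}\subseteq\{1,\ldots,i_l-1\}$. I then sum over $\alpha$ and regroup by the index $i$. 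Each $i$ appears in exactly $k$ sets, and each such set contributes at least $H(X_i\mid X_{<i},Z)$. Hence
\[
\sum_\alpha H(X_{S_\alpha}\mid Z)\ge k\sum_i H(X_i\mid X_{<i},Z)=kH(X\mid Z).
\]

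Actually, the reduction step is unnecessary. For the general multiplicity condition, each $i$ contributes at least $|\{\alpha:i\in S_\alpha\}|$ nonnegative terms, each bounded below by $H(X_i\mid X_{<i},Z)$, and that count is at least $k$. Since these terms are nonnegative, the bound $\ge k\,H(X_i\mid X_{<i},Z)$ follows directly.

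The only point needing care is the regrouping: the chain-rule terms from different $S_\alpha$ must be compared with the same canonical term $H(X_i\mid X_{<i},Z)$. Fixing a single global ordering of the indices and using it inside every $S_\alpha$ ensures this. Otherwise the argument is routine.
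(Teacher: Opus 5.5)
Your proof is correct and is essentially the paper's argument: apply the chain rule inside each $S_\alpha$ under one fixed global order, use conditioning reduces entropy to compare each term with $H(X_i\mid Z,X_1,\ldots,X_{i-1})$, then regroup by index using nonnegativity and the multiplicity bound $\ge k$. The announced ``induction on $n$'' and the exact-multiplicity reduction are never actually used (as you note yourself for the latter), so the direct argument in your final paragraphs is all that is needed.
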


\begin{IEEEproof}
Order the coordinates as $1,\ldots,n$. For each $S_\alpha$, the chain rule and conditioning reducing entropy give
\begin{align*}
H(X_{S_\alpha}\mid Z)
&=\sum_{i\in S_\alpha}H(X_i\mid Z,(X_j)_{j\in S_\alpha,\,j<i})\\
&\ge\sum_{i\in S_\alpha}H(X_i\mid Z,X_1,\ldots,X_{i-1}).
\end{align*}
Each coordinate belongs to at least $k$ sets $S_\alpha$, and all terms in the last sum are nonnegative. Therefore,
\begin{align*}
\sum_\alpha H(X_{S_\alpha}\mid Z)
&\ge k\sum_{i=1}^nH(X_i\mid Z,X_1,\ldots,X_{i-1})\\
&=kH(X\mid Z).
\end{align*}
\end{IEEEproof}

\begin{lemma}
\label{s-lem:column-cover}
The decoding requirements at the $b$-terminals imply
\begin{equation}
\sum_{j=0}^2H(B_j^{\src},(Y_{ox_{ij}})_i\mid A,\mathbf{Y}^{a})\ge2H(B\mid A,\mathbf{Y}^{a}).
\label{s-eq:column-cover}
\end{equation}
\end{lemma}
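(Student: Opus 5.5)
The plan is to show that, given $Z\triangleq(A,\mathbf Y^a)$, column $j$'s tuple
\[
U_j\triangleq(B_j^{\src},(Y_{ox_{ij}})_i)
\]
determines every $\mathcal B$-message with an endpoint at $b_j$. Since each session in $\mathcal B$ has its two endpoints at two distinct $b$-terminals, every $\mathcal B$-message is then covered by exactly two of the three columns. Lemma~\ref{s-lem:shearer} with $k=2$ then gives the factor $2$.

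\textbf{Step 1 (reconstruction of the incoming transcripts).} I will show that $Z$ determines all transcripts $Y_{a_ix_{ij}}$. Apply Lemma~\ref{s-lem:region-simulation}, via \eqref{s-eq:region-prefix}, to $\mathcal Q=\{a_i\}$. Its local sources are $A_i^{\src}$, a sub-tuple of $A$, and its incoming transcripts are $\mathbf Y_i^a$. Hence $(A_i^{\src},\mathbf Y_i^a)$ determines every message sent by $a_i$, in particular $(Y_{a_ix_{ij}})_j$. This is the same observation used in Lemma~\ref{s-lem:column-equivalence}.

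\textbf{Step 2 (decoding at $b_j$).} Take the region $\mathcal Q_j=\{b_j\}\cup\{x_{ij}:i=0,1,2\}$. The cells carry no sources, so its local source tuple is $B_j^{\src}$. Its incoming transcripts are $(Y_{a_ix_{ij}},Y_{ox_{ij}})_i$. By Step~1 these are determined by $(Z,U_j)$. Then \eqref{s-eq:region-decoding} gives
\[
H(B_j^{\dst}\mid Z,U_j)=0 .
\]
Let $\mathcal S_j\subseteq\mathcal B$ be the set of sessions having $b_j$ as source or destination, and write $B_{\mathcal S_j}=(X_h:h\in\mathcal S_j)=(B_j^{\src},B_j^{\dst})$. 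Then $B_{\mathcal S_j}$ is a function of $(Z,U_j)$, so
\[
H(U_j\mid Z)\ge H(U_j,B_{\mathcal S_j}\mid Z)-H(U_j\mid Z,B_{\mathcal S_j})\ge H(B_{\mathcal S_j}\mid Z).
\]
The right-hand side follows simply because $H(U_j\mid Z)=H(U_j,B_{\mathcal S_j}\mid Z)\ge H(B_{\mathcal S_j}\mid Z)$.

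\textbf{Step 3 (Shearer).} Each $h\in\mathcal B$ has $s_h=b_j$ and $t_h=b_\ell$ with $j\ne\ell$, so it lies in exactly two of the sets $\mathcal S_0,\mathcal S_1,\mathcal S_2$. Applying Lemma~\ref{s-lem:shearer} to the coordinates $X_h$, $h\in\mathcal B$, with $k=2$ and conditioning variable $Z$ gives
\[
\sum_jH(B_{\mathcal S_j}\mid Z)\ge 2H(B\mid Z).
\]
Combining this with Step~2 yields \eqref{s-eq:column-cover}.

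The only delicate point is Step~2: checking that the region $\mathcal Q_j$ has exactly the claimed incoming edges, namely $a_i\to x_{ij}$ and $o\to x_{ij}$. The edges $x_{ij}b_j$ are internal to $\mathcal Q_j$, and the $b$-terminals are not adjacent to one another, so no other transcripts enter. With that settled, Step~1 lets us drop the $Y_{a_ix_{ij}}$ from the conditioning, and the rest is routine.
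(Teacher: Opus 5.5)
Your proof is correct and is essentially the paper's argument: you show that $(A,\mathbf Y^a)$ together with $(B_j^{\src},(Y_{ox_{ij}})_i)$ determines $B_j^{\dst}$, conclude $H(B_j^{\src},(Y_{ox_{ij}})_i\mid A,\mathbf Y^a)\ge H(B_j^{\src},B_j^{\dst}\mid A,\mathbf Y^a)$, and then apply Lemma~\ref{s-lem:shearer} with $k=2$. The only difference is cosmetic. You apply \eqref{s-eq:region-decoding} directly to the region $\{b_j\}\cup\{x_{ij}:i=0,1,2\}$, whereas the paper first uses that region to recover $(Y_{x_{ij}b_j})_i$ and then decodes at $\{b_j\}$ alone. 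Also, the middle expression in your Step~2 display is just $H(B_{\mathcal S_j}\mid Z)$, so your closing one-line justification is the argument that actually establishes the inequality.
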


\begin{IEEEproof}
The reconstruction in Lemma~\ref{s-lem:column-equivalence} gives
\begin{equation}
H\bigl((Y_{x_{ij}b_j})_i\mid A,\mathbf Y^a,B_j^{\src},(Y_{ox_{ij}})_i\bigr)=0.
\label{s-eq:column-incoming-recovery}
\end{equation}
By \eqref{s-eq:region-decoding} with $\mathcal Q=\{b_j\}$,
\begin{equation}
H\bigl(B_j^{\dst}\mid B_j^{\src},(Y_{x_{ij}b_j})_i\bigr)=0.
\label{s-eq:column-terminal-decoding}
\end{equation}
Consequently,
\begin{equation}
\begin{aligned}
&H\bigl(B_j^{\dst}\mid A,\mathbf Y^a,B_j^{\src},(Y_{ox_{ij}})_i\bigr)\\
&\le H\bigl(B_j^{\dst},(Y_{x_{ij}b_j})_i\mid A,\mathbf Y^a,B_j^{\src},(Y_{ox_{ij}})_i\bigr)\\
&=H\bigl((Y_{x_{ij}b_j})_i\mid A,\mathbf Y^a,B_j^{\src},(Y_{ox_{ij}})_i\bigr)\\
&\quad+H\bigl(B_j^{\dst}\mid A,\mathbf Y^a,B_j^{\src},(Y_{ox_{ij}})_i,(Y_{x_{ij}b_j})_i\bigr)\\
&=H\bigl(B_j^{\dst}\mid A,\mathbf Y^a,B_j^{\src},(Y_{ox_{ij}})_i,(Y_{x_{ij}b_j})_i\bigr)\\
&\le H\bigl(B_j^{\dst}\mid B_j^{\src},(Y_{x_{ij}b_j})_i\bigr)=0.
\end{aligned}
\label{s-eq:column-destination-recovery}
\end{equation}
The first inequality adds the incoming transcripts to the tuple. The first equality is the chain rule, and the second uses \eqref{s-eq:column-incoming-recovery}. The last inequality follows from conditioning reducing entropy, and the final equality is \eqref{s-eq:column-terminal-decoding}.
The chain rule gives
\begin{align}
&H(B_j^{\src},(Y_{ox_{ij}})_i\mid A,\mathbf Y^a)\nonumber\\
&=H(B_j^{\src},B_j^{\dst},(Y_{ox_{ij}})_i\mid A,\mathbf Y^a)\nonumber\\
&\quad-H(B_j^{\dst}\mid A,\mathbf Y^a,B_j^{\src},(Y_{ox_{ij}})_i)\nonumber\\
&=H(B_j^{\src},B_j^{\dst},(Y_{ox_{ij}})_i\mid A,\mathbf Y^a)\nonumber\\
&=H(B_j^{\src},B_j^{\dst}\mid A,\mathbf Y^a)\nonumber\\
&\quad+H((Y_{ox_{ij}})_i\mid A,\mathbf Y^a,B_j^{\src},B_j^{\dst})\nonumber\\
&\ge H(B_j^{\src},B_j^{\dst}\mid A,\mathbf Y^a).
\label{s-eq:column-pair}
\end{align}
The first and third equalities use the chain rule. The second equality uses \eqref{s-eq:column-destination-recovery}, and the inequality follows from nonnegativity of the last conditional-entropy term.
Every $X_h$ with $h\in\mathcal B$ occurs in exactly two tuples $(B_j^{\src},B_j^{\dst})$, once at its source and once at its distinct destination. Apply Lemma~\ref{s-lem:shearer} to $B$, with conditioning tuple $(A,\mathbf Y^a)$ and $k=2$. Summing \eqref{s-eq:column-pair} then gives
\begin{align*}
\sum_jH(B_j^{\src},(Y_{ox_{ij}})_i\mid A,\mathbf Y^a)
&\ge\sum_jH(B_j^{\src},B_j^{\dst}\mid A,\mathbf Y^a)\\
&\ge2H(B\mid A,\mathbf Y^a).
\end{align*}
\end{IEEEproof}

\subsection{Completing the forward bound}

\begin{proposition}[Forward directional bound]
\label{s-prop:forward}
Every protocol in Theorem~\ref{thm:outer} satisfies $C_+\ge2R$.
\end{proposition}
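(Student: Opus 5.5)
The plan is to assemble the supplementary lemmas along the chain already outlined in the main text. We start from the definition of $C_+$ and apply subadditivity to the three transcripts entering each $a_i$. This gives $\sum_j H(Y_{x_{ij}a_i})\ge H(\mathbf Y_i^a)\ge H(\mathbf Y_i^a\mid A_i^{\src})$, which is \eqref{eq:forward-expand}.

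Next we use the definition \eqref{eq:kappa} of $\kappa_{ij}$ to write $H(Y_{b_jx_{ij}})+H(Y_{ox_{ij}})=\kappa_{ij}+H(Y_{b_jx_{ij}},Y_{ox_{ij}}\mid A,\mathbf Y^a)$. The identity behind this definition is $H(U)+H(V)-H(U,V\mid Z)=I(U;V)+I(U,V;Z)$. Within each column $j$ we then apply conditional subadditivity to the three cells to reach \eqref{eq:forward-decomposition}.

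We then substitute Lemma~\ref{s-lem:cell-charge}. It bounds $\sum_{i,j}\kappa_{ij}$ from below by $\sum_iH(\mathbf Y_i^a\mid A_i^{\src})$, so this row term is counted twice, as in \eqref{eq:forward-two-rows}. Lemma~\ref{s-lem:row-output} converts the doubled row term into $2H(A)+2H(\mathbf Y^a\mid A)$. For the column term, Lemma~\ref{s-lem:column-equivalence} replaces each $((Y_{b_jx_{ij}})_i,(Y_{ox_{ij}})_i)$ by $(B_j^{\src},(Y_{ox_{ij}})_i)$ inside the conditional entropy given $(A,\mathbf Y^a)$. Lemma~\ref{s-lem:column-cover} then gives the lower bound $2H(B\mid A,\mathbf Y^a)$.

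The last step collects the three pieces. We need $H(\mathbf Y^a\mid A)+H(B\mid A,\mathbf Y^a)=H(B)$, which follows from two facts. First, $\mathbf Y^a$ is a deterministic function of all sources $(A,B)$, since the protocol is deterministic and $\mathcal H=\mathcal A\cup\mathcal B$. This gives $H(\mathbf Y^a,B\mid A)=H(B\mid A)$, and the chain rule splits the left side as $H(\mathbf Y^a\mid A)+H(B\mid A,\mathbf Y^a)$. Second, $H(B\mid A)=H(B)$ by independence of the messages. Hence $C_+\ge 2H(A)+2H(B)=2R$.

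The real content lies in the cited lemmas, which may be assumed here. The only point needing care is the equality step, so I would state explicitly that $\mathbf Y^a$ is a deterministic function of $(A,B)$.
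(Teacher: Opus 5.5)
Your proposal is correct and follows the paper's own proof essentially step for step. Both use the $\kappa_{ij}$ decomposition with columnwise conditional subadditivity, Lemma~\ref{s-lem:cell-charge} to double the row term, Lemma~\ref{s-lem:row-output} for the rows, Lemmas~\ref{s-lem:column-equivalence} and~\ref{s-lem:column-cover} for the columns, and the closing chain-rule identity $H(\mathbf Y^a\mid A)+H(B\mid A,\mathbf Y^a)=H(B\mid A)=H(B)$, which relies on $\mathbf Y^a$ being a function of $(A,B)$ and on the independence of $A$ and $B$.
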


\begin{IEEEproof}
Expanding $C_+$ using \eqref{eq:kappa} and applying conditional subadditivity gives
\begin{align}
C_+\ge{}&\sum_{i,j}\kappa_{ij}\nonumber\\
&+\sum_jH((Y_{b_jx_{ij}})_i,(Y_{ox_{ij}})_i\mid A,\mathbf{Y}^a)\nonumber\\
&+\sum_iH(\mathbf{Y}_i^a\mid A_i^{\src}).
\label{s-eq:forward-decomposition}
\end{align}
Summing Lemma~\ref{s-lem:cell-charge} over $i$ gives
\[
\sum_{i,j}\kappa_{ij}\ge\sum_iH(\mathbf Y_i^a\mid A_i^{\src}).
\]
Substituting into \eqref{s-eq:forward-decomposition} and applying Lemma~\ref{s-lem:row-output} gives
\begin{align}
C_+\ge{}&2H(A)+2H(\mathbf{Y}^a\mid A)\nonumber\\
&+\sum_jH((Y_{b_jx_{ij}})_i,(Y_{ox_{ij}})_i\mid A,\mathbf{Y}^a).
\label{s-eq:forward-before-columns}
\end{align}
For the remaining sum, Lemmas~\ref{s-lem:column-equivalence} and \ref{s-lem:column-cover} give
\begin{align*}
&\sum_jH((Y_{b_jx_{ij}})_i,(Y_{ox_{ij}})_i\mid A,\mathbf Y^a)\\
&=\sum_jH(B_j^{\src},(Y_{ox_{ij}})_i\mid A,\mathbf Y^a)\\
&\ge2H(B\mid A,\mathbf Y^a).
\end{align*}
The equality uses Lemma~\ref{s-lem:column-equivalence}, and the inequality uses Lemma~\ref{s-lem:column-cover}. Substituting into \eqref{s-eq:forward-before-columns} gives
\begin{align*}
C_+&\ge2H(A)+2H(\mathbf{Y}^a\mid A)+2H(B\mid A,\mathbf{Y}^a)\\
&=2H(A)+2H(B)=2R.
\end{align*}
For the equality, the chain rule gives
\begin{align*}
H(\mathbf Y^a\mid A)+H(B\mid A,\mathbf Y^a)
&=H(B,\mathbf Y^a\mid A)\\
&=H(B\mid A)=H(B).
\end{align*}
The second equality uses that $\mathbf Y^a$ is a deterministic function of $(A,B)$, and the last uses independence of $A$ and $B$. This proves the proposition.
\end{IEEEproof}

\section{The Reverse Bound}
\label{supp:reverse}
We prove the reverse inequality $C_-\ge2R$. Recall that
\[
C_-=\sum_{i,j}\bigl[H(Y_{a_ix_{ij}})+H(Y_{x_{ij}b_j})+H(Y_{x_{ij}o})\bigr].
\]
The reverse arcs are partitioned into
\begin{align*}
\mathcal J_0&=\{(x_{i0},b_0)\}_i,\qquad
\mathcal J_2=\{(x_{i2},b_2)\}_i,\\
\mathcal J_1&=\{(x_{i0},o),(a_i,x_{i1}),(x_{i2},o)\}_i,\\
\mathcal I_0&=\{(a_i,x_{i0}),(a_i,x_{i2}),(x_{i1},o)\}_i,\\
\mathcal I_1&=\{(x_{i1},b_1)\}_i,
\end{align*}
where $i=0,1,2$. For an arc $e=(u,v)$, write $Y_e=Y_{uv}$. Thus,
\begin{align*}
C_{\mathcal J}^-&=\sum_{j=0}^2\sum_{e\in\mathcal J_j}H(Y_e),\\
C_{\mathcal I}^-&=\sum_{l=0}^1\sum_{e\in\mathcal I_l}H(Y_e),\\
C_-&=C_{\mathcal J}^-+C_{\mathcal I}^-.
\end{align*}
We bound $C_{\mathcal J}^-$ and $C_{\mathcal I}^-$ separately and then combine the bounds to prove $C_-\ge2R$.

\subsection{Bounding \texorpdfstring{$C_{\mathcal J}^-$}{C J minus}}
\label{supp:fine}
Partition the vertices into
\begin{align}
U_0&=\{b_0\},\quad U_2=\{b_2\},\nonumber\\
U_1&=\{b_1,o\}\cup\{x_{i1}:0\le i<3\},\nonumber\\
V_i&=\{a_i,x_{i0},x_{i2}\}.
\label{eq:fine-partition}
\end{align}
Recall that
\[
Z_j=(B_j^{\src},(Y_e)_{e\in\mathcal J_j}),\qquad j=0,1,2.
\]

The only source messages generated in $U_j$ are those generated at $b_j$, and the arcs entering $U_j$ are exactly those in $\mathcal J_j$. Hence, up to a fixed reordering,
\[
X_{U_j}^{\src}=B_j^{\src},
\qquad
Y_{U_j,\le N}^{\mathrm{in}}=(Y_e)_{e\in\mathcal J_j}.
\]
Substituting these identities into the definition of $Z_j$ gives
\[
Z_j=(B_j^{\src},(Y_e)_{e\in\mathcal J_j})
=(X_{U_j}^{\src},Y_{U_j,\le N}^{\mathrm{in}}).
\]
Each edge whose endpoints lie in different parts has one endpoint in some $V_i$ and one in some $U_j$. Its arc directed into $U_j$ belongs to $\mathcal J_j$.
\begin{lemma}
\label{s-lem:fine-complete}
Together, $Z_0,Z_1,Z_2$ determine all sources:
\begin{equation}
H(Z_0,Z_1,Z_2)=H(A,B).
\label{s-eq:fine-complete}
\end{equation}
\end{lemma}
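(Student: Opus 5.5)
Since each $Z_j$ is a function of the sources $(A,B)$, we have $H(Z_0,Z_1,Z_2)\le H(A,B)$. It therefore suffices to show $H(A,B\mid Z_0,Z_1,Z_2)=0$. We will first recover $B$ and then $A$.

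\textbf{Recovering the crossing transcripts.} The $Z_j$ contain only the arcs entering the parts $U_j$, not those leaving them. We will use Lemma~\ref{s-lem:region-simulation} with $\mathcal Q=U_j$: the local sources $X_{U_j}^{\src}=B_j^{\src}$ and the incoming transcripts $(Y_e)_{e\in\mathcal J_j}$ determine every message sent by a vertex of $U_j$. In particular, $Z_j$ determines all arcs leaving $U_j$. As observed before the lemma, every edge crossing the partition joins some $V_i$ to some $U_j$. So $(Z_0,Z_1,Z_2)$ determines the full crossing tuple $\mathsf T$ in both directions, together with $B=(B_0^{\src},B_1^{\src},B_2^{\src})$. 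Here we use that every $\mathcal B$-session has its source at some $b_j$.

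\textbf{Recovering $A$.} We will apply Lemma~\ref{s-lem:partition-recovery} to the partition $\{U_0,U_1,U_2,V_0,V_1,V_2\}$. Each session $h\in\mathcal A$ joins $a_i$ and $a_k$ with $i\ne k$. These endpoints lie in the distinct parts $V_i$ and $V_k$, so the hypothesis holds. The lemma gives $H(A\mid B,\mathsf T)=0$. Since $(B,\mathsf T)$ is a function of $(Z_0,Z_1,Z_2)$, it follows that $H(A\mid Z_0,Z_1,Z_2)=0$. Combined with the recovery of $B$, this yields $H(A,B\mid Z_0,Z_1,Z_2)=0$, which proves \eqref{s-eq:fine-complete}.

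\textbf{Main obstacle.} The step needing care is checking that the partition really has the stated crossing structure. The sets $V_i$ must not be adjacent to one another; this holds because $x_{i0}$ and $x_{i2}$ are adjacent only to $a_i$, to $b_0$ or $b_2$, and to $o$. The $U_j$ must not be adjacent to one another; this holds because $b_0$ and $b_2$ are adjacent only to the cells in columns $0$ and $2$, and those cells lie in the $V_i$. Finally, the arcs entering $U_1$ must be exactly $\mathcal J_1$. These are the arcs $x_{i0}\to o$ and $x_{i2}\to o$, together with $a_i\to x_{i1}$, since $b_1$ and $o$ are already inside $U_1$. Once these checks are in place, the argument above is routine.
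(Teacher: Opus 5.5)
Your proof is correct and follows the paper's argument essentially step for step. You read $B$ off the $B_j^{\src}$ components, apply Lemma~\ref{s-lem:region-simulation} to each $U_j$ to recover the outgoing half of every crossing transcript, and then invoke Lemma~\ref{s-lem:partition-recovery} on the six-part partition to obtain $H(A\mid B,\mathsf T)=0$; your adjacency checks (no $V_i$--$V_k$ or $U_j$--$U_k$ edges, and the arcs into $U_1$ being exactly $\mathcal J_1$) are accurate and make explicit what the paper asserts just before the lemma.
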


\begin{IEEEproof}
The tuples $B_j^{\src}$ contained in $Z_j$ together form $B$, so
\begin{equation}
H(B\mid Z_0,Z_1,Z_2)=0.
\label{s-eq:fine-B-recovery}
\end{equation}
For each $U_j$, apply \eqref{s-eq:region-prefix} at $\tau=N$:
\begin{equation}
\begin{aligned}
H(Y_{U_j,\le N}^{\mathrm{send}}\mid Z_j)
&=H(Y_{U_j,\le N}^{\mathrm{send}}\mid B_j^{\src},(Y_e)_{e\in\cJ_j})\\
&=0.
\end{aligned}
\label{s-eq:fine-outgoing-recovery}
\end{equation}
We now apply Lemma~\ref{s-lem:partition-recovery} to the six parts
\[
U_0,U_1,U_2,V_0,V_1,V_2
\]
defined in \eqref{eq:fine-partition}. Let $\mathsf T$ denote the complete bidirectional transcripts on edges joining different parts of this partition.

Every such edge joins some $V_i$ to some $U_j$. Its transcript directed into $U_j$ is contained in $Z_j$. Its reverse transcript, sent from $U_j$ to $V_i$, is a component of $Y_{U_j,\le N}^{\mathrm{send}}$ and is therefore determined by $Z_j$ by \eqref{s-eq:fine-outgoing-recovery}. Hence,
\begin{equation}
H(\mathsf T\mid Z_0,Z_1,Z_2)=0.
\label{s-eq:fine-T-recovery}
\end{equation}
The source messages to be recovered are $A$, with the remaining source messages $B$ given. Every session $h\in\mathcal A$ connects distinct terminals $a_i$ and $a_k$, which lie in distinct parts $V_i$ and $V_k$, respectively. Thus the partition satisfies the hypothesis of Lemma~\ref{s-lem:partition-recovery}, giving
\begin{equation}
H(A\mid B,\mathsf T)=0.
\label{s-eq:fine-A-recovery}
\end{equation}
Using the two determined tuples $B$ and $\mathsf T$, we obtain
\begin{align*}
H(A,B\mid Z_0,Z_1,Z_2)
&=H(A\mid B,Z_0,Z_1,Z_2)\\
&=H(A\mid B,\mathsf T,Z_0,Z_1,Z_2)\\
&\le H(A\mid B,\mathsf T)=0.
\end{align*}
The first equality follows from the chain rule and $H(B\mid Z_0,Z_1,Z_2)=0$ in \eqref{s-eq:fine-B-recovery}. The second equality uses \eqref{s-eq:fine-T-recovery}: since $(Z_0,Z_1,Z_2)$ determines $\mathsf T$, adding $\mathsf T$ to the conditioning does not change the entropy. The inequality follows because removing $(Z_0,Z_1,Z_2)$ from the conditioning cannot decrease entropy. The final equality is \eqref{s-eq:fine-A-recovery}.
Conversely, the sources determine all transcripts, so $H(Z_0,Z_1,Z_2\mid A,B)=0$. The chain rule now gives
\[
H(Z_0,Z_1,Z_2)=H(A,B,Z_0,Z_1,Z_2)=H(A,B).
\]
\end{IEEEproof}

Recall that
\[
\mu=\sum_{j=0}^2H(Z_j)-H(Z_0,Z_1,Z_2)-H(B).
\]

Recall also that $W$ consists of all source messages whose sessions connect $b_0$ and $b_2$, in either direction:
\[
W=(X_h:h\in\mathcal B,\ \{s_h,t_h\}=\{b_0,b_2\}).
\]

\begin{lemma}
\label{s-lem:fine-bound}
We have
\begin{align}
C_{\mathcal J}^-&\ge H(A)+H(B)+\mu,\label{s-eq:fine-cost}\\
\mu&\ge I(W;Z_1)\ge0.\label{s-eq:mu-info}
\end{align}
\end{lemma}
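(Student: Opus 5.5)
The plan is to get the two inequalities separately. The first comes from subadditivity together with Lemma~\ref{s-lem:fine-complete}. The second comes from the decoding property of each region $U_j$ and a total-correlation decomposition.

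\emph{Cost bound \eqref{s-eq:fine-cost}.} By subadditivity,
\[
H(Z_j)\le H(B_j^{\src})+\sum_{e\in\cJ_j}H(Y_e)\qquad(j=0,1,2).
\]
Each $B$-message is generated at exactly one $b_j$, and all messages are independent, so $\sum_jH(B_j^{\src})=H(B)$. Summing over $j$ therefore gives $C_{\mathcal J}^-\ge\sum_jH(Z_j)-H(B)$. By the definition of $\mu$ and Lemma~\ref{s-lem:fine-complete},
\[
\sum_jH(Z_j)=\mu+H(A,B)+H(B).
\]
Since $H(A,B)=H(A)+H(B)$ by independence, this yields \eqref{s-eq:fine-cost}.

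\emph{Information bound \eqref{s-eq:mu-info}.} For $j\ne k$, let $B_{jk}$ be the tuple of messages of sessions with $\{s_h,t_h\}=\{b_j,b_k\}$, so that $W=B_{02}$ and $B=(B_{01},B_{02},B_{12})$. Recall $Z_j=(X_{U_j}^{\src},Y_{U_j,\le N}^{\mathrm{in}})$. Applying \eqref{s-eq:region-decoding} to $U_j$ shows that $Z_j$ determines $B_j^{\dst}$ as well as $B_j^{\src}$. Hence:
\begin{itemize}
\item $Z_0$ determines $(B_{01},W)$;
\item $Z_1$ determines $(B_{01},B_{12})$;
\item $Z_2$ determines $(W,B_{12})$.
\end{itemize}
Next, write the total correlation as
\[
\sum_jH(Z_j)-H(Z_0,Z_1,Z_2)=I(Z_0;Z_2)+I(Z_0,Z_2;Z_1).
\]
We bound the two terms in turn.

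For the first term, both $Z_0$ and $Z_2$ determine $W$, so $I(Z_0;Z_2)\ge H(W)$.

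For the second term, $(Z_0,Z_2)$ determines $(B_{01},B_{12},W)$, so data processing gives
\[
I(Z_0,Z_2;Z_1)\ge I(B_{01},B_{12},W;Z_1)=H(B_{01},B_{12})+I(W;Z_1\mid B_{01},B_{12}),
\]
where the equality uses that $Z_1$ determines $(B_{01},B_{12})$. Because $W$ is independent of $(B_{01},B_{12})$,
\[
I(W;Z_1\mid B_{01},B_{12})=I(W;Z_1,B_{01},B_{12})\ge I(W;Z_1).
\]

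Adding the two bounds and using $H(B)=H(W)+H(B_{01})+H(B_{12})$ gives
\[
\sum_jH(Z_j)-H(Z_0,Z_1,Z_2)\ge H(B)+I(W;Z_1),
\]
that is, $\mu\ge I(W;Z_1)$. The inequality $I(W;Z_1)\ge0$ is trivial.

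The main point requiring care is the claim that each $Z_j$ determines $B_j^{\dst}$. This needs \eqref{s-eq:region-decoding} applied to the whole region $U_j$; for $U_1$ this region includes $o$ and the middle cells. It also needs the identification of $Z_j$ with $(X_{U_j}^{\src},Y_{U_j,\le N}^{\mathrm{in}})$, which was noted before Lemma~\ref{s-lem:fine-complete}.
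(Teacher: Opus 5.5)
Your proof is correct and is essentially the paper's argument. The cost bound is identical. The information bound uses the same ingredients: region decoding at each $U_j$ (so $Z_j$ determines all messages at $b_j$), data processing, and independence of the three pair-tuples. The only difference is that you split the total correlation as $I(Z_0;Z_2)+I(Z_0,Z_2;Z_1)$ rather than the paper's $I(Z_1;Z_2)+I(Z_0;Z_1,Z_2)$. This ordering is marginally more direct: $H(W)$ comes straight from the common information of $Z_0$ and $Z_2$, so you do not need the paper's extra step showing that $(Z_1,Z_2)$ jointly determine $(K_{01},W)$.
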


\begin{IEEEproof}
Since $Z_j=(B_j^{\src},(Y_e)_{e\in\mathcal J_j})$, subadditivity and independence of the source messages give
\begin{align}
\sum_jH(Z_j)
&\le\sum_jH(B_j^{\src})+\sum_j\sum_{e\in\mathcal J_j}H(Y_e)\nonumber\\
&=H(B)+C_{\mathcal J}^-.
\label{s-eq:fine-subadd}
\end{align}
Rearranging \eqref{s-eq:fine-subadd}, we obtain
\begin{align*}
C_{\mathcal J}^-&\ge\sum_jH(Z_j)-H(B)\\
&=H(Z_0,Z_1,Z_2)+\mu\\
&=H(A)+H(B)+\mu.
\end{align*}
The first equality uses the definition of $\mu$. The second uses \eqref{s-eq:fine-complete} and independence of $A$ and $B$.
To bound $\mu$, define the tuples
\[
K_{jk}\triangleq(X_h:h\in\mathcal B,\ \{s_h,t_h\}=\{b_j,b_k\})
\]
for $0\le j<k\le2$. In particular, $K_{02}=W$.
Thus $B=(K_{01},K_{12},W)$ up to reordering, and the three tuples are independent. Expanding the definition of $\mu$ gives
\begin{align}
\mu+H(B)
&=\sum_jH(Z_j)-H(Z_0,Z_1,Z_2)\nonumber\\
&=\bigl[H(Z_1)+H(Z_2)-H(Z_1,Z_2)\bigr]\nonumber\\
&\quad+\bigl[H(Z_0)+H(Z_1,Z_2)-H(Z_0,Z_1,Z_2)\bigr]\nonumber\\
&=I(Z_1;Z_2)+I(Z_0;Z_1,Z_2).
\label{s-eq:TC-chain}
\end{align}
Each $Z_j$ contains the local source messages and incoming transcripts of $U_j$. Applying \eqref{s-eq:region-decoding} to $U_j$ gives $H(B_j^{\dst}\mid Z_j)=0$. Since $B_j^{\src}$ is already contained in $Z_j$, the chain rule gives
\[
\begin{aligned}
H(B_j^{\src},B_j^{\dst}\mid Z_j)
&=H(B_j^{\src}\mid Z_j)\\
&\quad+H(B_j^{\dst}\mid B_j^{\src},Z_j)=0.
\end{aligned}
\]
Thus $Z_j$ determines all source messages generated or requested at $b_j$. Grouping these messages by their other endpoint gives, up to reordering,
\begin{align*}
(B_0^{\src},B_0^{\dst})&=(K_{01},W),\\
(B_1^{\src},B_1^{\dst})&=(K_{01},K_{12}),\\
(B_2^{\src},B_2^{\dst})&=(K_{12},W).
\end{align*}
Consequently,
\begin{equation}
\begin{aligned}
H(K_{01},W\mid Z_0)&=H(B_0^{\src},B_0^{\dst}\mid Z_0)=0,\\
H(K_{01},K_{12}\mid Z_1)&=H(B_1^{\src},B_1^{\dst}\mid Z_1)=0,\\
H(K_{12},W\mid Z_2)&=H(B_2^{\src},B_2^{\dst}\mid Z_2)=0.
\end{aligned}
\label{s-eq:pair-source-recovery}
\end{equation}
In particular, $Z_1$ determines $K_{01}$ and $Z_2$ determines $W$. The chain rule and conditioning reducing entropy therefore yield
\begin{equation}
\begin{aligned}
H(K_{01},W\mid Z_1,Z_2)
&=H(K_{01}\mid Z_1,Z_2)\\
&\quad+H(W\mid K_{01},Z_1,Z_2)\\
&\le H(K_{01}\mid Z_1)+H(W\mid Z_2)\\
&\le H(K_{01},K_{12}\mid Z_1)\\
&\quad+H(K_{12},W\mid Z_2)\\
&=H(B_1^{\src},B_1^{\dst}\mid Z_1)\\
&\quad+H(B_2^{\src},B_2^{\dst}\mid Z_2)\\
&=0,
\end{aligned}
\label{s-eq:joint-pair-recovery}
\end{equation}
where the second inequality adds $K_{12}$ to each entropy argument, and the last two equalities use \eqref{s-eq:pair-source-recovery}.

We now bound the two mutual informations in \eqref{s-eq:TC-chain}. For the first,
\begin{align}
I(Z_1;Z_2)&\ge I(Z_1;K_{12},W)\nonumber\\
&=I(Z_1;W)+I(Z_1;K_{12}\mid W)\nonumber\\
&=I(Z_1;W)+H(K_{12}\mid W)\nonumber\\
&\quad-H(K_{12}\mid Z_1,W)\nonumber\\
&=I(Z_1;W)+H(K_{12}\mid W)\nonumber\\
&=I(Z_1;W)+H(K_{12}).
\label{s-eq:first-mi}
\end{align}
The inequality uses data processing and $H(K_{12},W\mid Z_2)=0$ from \eqref{s-eq:pair-source-recovery}. The first equality is the chain rule, and the second expands conditional mutual information. For the third equality, conditioning reducing entropy and \eqref{s-eq:pair-source-recovery} give
\[
\begin{aligned}
0\le H(K_{12}\mid Z_1,W)
&\le H(K_{12}\mid Z_1)\\
&\le H(K_{01},K_{12}\mid Z_1)=0.
\end{aligned}
\]
The last equality in \eqref{s-eq:first-mi} uses independence of $K_{12}$ and $W$.

For the second mutual information,
\begin{align}
I(Z_0;Z_1,Z_2)&\ge I(K_{01},W;Z_1,Z_2)\nonumber\\
&=H(K_{01},W)-H(K_{01},W\mid Z_1,Z_2)\nonumber\\
&=H(K_{01},W)\nonumber\\
&=H(K_{01})+H(W).
\label{s-eq:second-mi}
\end{align}
The inequality uses data processing and $H(K_{01},W\mid Z_0)=0$ from \eqref{s-eq:pair-source-recovery}. The first equality expands mutual information, the second uses \eqref{s-eq:joint-pair-recovery}, and the last uses independence of $K_{01}$ and $W$.
Substitution into \eqref{s-eq:TC-chain} gives
\begin{align*}
\mu+H(B)&\ge I(W;Z_1)+H(K_{12})+H(K_{01})+H(W)\\
&=I(W;Z_1)+H(B).
\end{align*}
Subtracting $H(B)$ proves \eqref{s-eq:mu-info}.
\end{IEEEproof}

\subsection{Bounding \texorpdfstring{$C_{\mathcal I}^-$}{C I minus}}
\label{supp:coarse}
Partition the vertices into
\begin{align}
\overline U_0&=\{b_0,b_2,o\}\cup\{x_{i0},x_{i2}:0\le i<3\},\nonumber\\
\overline U_1&=\{b_1\},\qquad \overline V_i=\{a_i,x_{i1}\}.
\label{eq:coarse-partition}
\end{align}
Recall that
\begin{align*}
\overline Z_0&=(B_0^{\src},B_2^{\src},(Y_e)_{e\in\mathcal I_0}),\\
\overline Z_1&=(B_1^{\src},(Y_e)_{e\in\mathcal I_1}).
\end{align*}

For $l=0,1$, the tuple $\overline Z_l$ contains exactly the local sources and incoming transcripts of $\overline U_l$:
\[
\overline Z_l=(X_{\overline U_l}^{\src},Y_{\overline U_l,\le N}^{\mathrm{in}})
\]
up to reordering. In particular, the local source tuples are $(B_0^{\src},B_2^{\src})$ for $\overline U_0$ and $B_1^{\src}$ for $\overline U_1$.
\begin{lemma}
\label{s-lem:coarse-recovery}
We have
\begin{align}
H(\overline Z_0,\overline Z_1\mid W)&=H(A,B\mid W),\label{s-eq:coarse-complete}\\
I(\overline Z_0;\overline Z_1\mid W)&\ge H(B\mid W).\label{s-eq:coarse-common}
\end{align}
\end{lemma}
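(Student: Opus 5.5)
For \eqref{s-eq:coarse-complete} we mirror Lemma~\ref{s-lem:fine-complete}, now using the coarse partition \eqref{eq:coarse-partition}. Since $\overline Z_l=(X_{\overline U_l}^{\src},Y_{\overline U_l,\le N}^{\mathrm{in}})$, \eqref{s-eq:region-prefix} shows that $\overline Z_l$ determines every transcript sent out of $\overline U_l$.

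Every edge joining different parts joins some $\overline V_i$ to $\overline U_0$ or $\overline U_1$:
\begin{itemize}[leftmargin=*]
\item $a_ix_{i0}$ and $a_ix_{i2}$ join $\overline V_i$ to $\overline U_0$;
\item $x_{i1}o$ joins $\overline V_i$ to $\overline U_0$;
\item $x_{i1}b_1$ joins $\overline V_i$ to $\overline U_1$.
\end{itemize}
No edge joins $\overline V_i$ to $\overline V_k$ or $\overline U_0$ to $\overline U_1$. Hence the full crossing transcript $\mathsf T$ is a function of $(\overline Z_0,\overline Z_1)$.

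The sources $B$ are all contained in $(\overline Z_0,\overline Z_1)$. Each $a_i$ lies in its own part $\overline V_i$, so Lemma~\ref{s-lem:partition-recovery} gives $H(A\mid B,\mathsf T)=0$. Therefore $(\overline Z_0,\overline Z_1)$ determines $(A,B)$. Conversely, $(A,B)$ determines all transcripts. Conditioning both directions on $W$, which is a function of $B$, yields \eqref{s-eq:coarse-complete}.

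For \eqref{s-eq:coarse-common} we use decoding. Applying \eqref{s-eq:region-decoding} to $\overline U_l$ shows that $\overline Z_l$ determines every message requested in $\overline U_l$. Thus $\overline Z_0$ determines the messages generated or requested at $b_0$ or $b_2$, namely $(K_{01},K_{12},W)=B$. Likewise, $\overline Z_1$ determines those at $b_1$, namely $(K_{01},K_{12})$, which is $B$ with $W$ removed.

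Given $W$, both $\overline Z_0$ and $\overline Z_1$ therefore determine $B$. The standard common-function bound then gives
\[
I(\overline Z_0;\overline Z_1\mid W)\ge I(B;\overline Z_1\mid W)=H(B\mid W)-H(B\mid \overline Z_1,W)=H(B\mid W).
\]
The inequality is data processing, since $B$ is a function of $(\overline Z_0,W)$. The final equality holds because $B$ is a function of $(\overline Z_1,W)$.

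The main point to check is the bookkeeping: that the arcs entering $\overline U_0$ are exactly those in $\cI_0$ and those entering $\overline U_1$ are exactly those in $\cI_1$, and that the partition separates each $\cA$-session, as required by Lemma~\ref{s-lem:partition-recovery}. $\cB$-sessions need not be separated, because $B$ is supplied directly rather than recovered. Both checks are routine from the edge list of $\Gamma_{3,3}$.
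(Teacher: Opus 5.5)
Your proposal is correct and follows essentially the same route as the paper. You recover the crossing transcripts of the five-part partition from $(\overline Z_0,\overline Z_1)$ via \eqref{s-eq:region-prefix}, invoke Lemma~\ref{s-lem:partition-recovery} to get $A$ from $B$, and then use decoding at $\overline U_0$ and $\overline U_1$ so that each of $(\overline Z_0,W)$ and $(\overline Z_1,W)$ determines $B$ before applying conditional data processing. Your observation that $\overline Z_0$ alone already determines all of $B$ is a harmless strengthening of the paper's $H(B\mid W,\overline Z_0)=0$.
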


\begin{IEEEproof}
We use the five parts
\[
\overline U_0,\overline U_1,\overline V_0,\overline V_1,\overline V_2
\]
defined in \eqref{eq:coarse-partition}. Let $\mathsf T$ denote the complete bidirectional transcripts on edges joining different parts of this partition.
For $l=0,1$, the definition of $\overline Z_l$ and \eqref{s-eq:region-prefix} give
\begin{equation}
\begin{aligned}
&H(Y_{\overline U_l,\le N}^{\mathrm{send}}\mid\overline Z_l)\\
&\quad=H(Y_{\overline U_l,\le N}^{\mathrm{send}}\mid X_{\overline U_l}^{\src},Y_{\overline U_l,\le N}^{\mathrm{in}})=0.
\end{aligned}
\label{s-eq:coarse-outgoing-recovery}
\end{equation}
Every edge joining different parts connects some $\overline V_i$ to some $\overline U_l$. Its transcript toward $\overline U_l$ is contained in $\overline Z_l$, and its transcript in the opposite direction is recovered by \eqref{s-eq:coarse-outgoing-recovery}. Since $B$ is also contained in $(\overline Z_0,\overline Z_1)$,
\begin{equation}
H(B,\mathsf T\mid\overline Z_0,\overline Z_1)=0.
\label{s-eq:coarse-BT-recovery}
\end{equation}
We apply Lemma~\ref{s-lem:partition-recovery} to recover $A$ with $B$ given. Every session $h\in\mathcal A$ connects distinct terminals $a_i$ and $a_k$, which belong to distinct parts $\overline V_i$ and $\overline V_k$. The lemma therefore gives
\begin{equation}
H(A\mid B,\mathsf T)=0.
\label{s-eq:coarse-A-recovery}
\end{equation}
Consequently,
\begin{align*}
H(A,B\mid W,\overline Z_0,\overline Z_1)
&=H(A\mid W,B,\overline Z_0,\overline Z_1)\\
&=H(A\mid W,B,\mathsf T,\overline Z_0,\overline Z_1)\\
&\le H(A\mid B,\mathsf T)=0.
\end{align*}
The first equality removes the known $B$, and the second adds the recoverable $\mathsf T$, both by \eqref{s-eq:coarse-BT-recovery}. The inequality removes conditioning, and the last equality is \eqref{s-eq:coarse-A-recovery}.
Since the sources determine the transcripts, the chain rule in both orders gives
\begin{align*}
H(\overline Z_0,\overline Z_1\mid W)
&=H(A,B,\overline Z_0,\overline Z_1\mid W)\\
&=H(A,B\mid W).
\end{align*}
For the mutual-information bound, apply \eqref{s-eq:region-decoding} to $\overline U_0$ and $\overline U_1$:
\begin{equation}
H(B_0^{\dst},B_2^{\dst}\mid\overline Z_0)=0,
\qquad H(B_1^{\dst}\mid\overline Z_1)=0.
\label{s-eq:coarse-destination-recovery}
\end{equation}
Every message in $B_1^{\src}$ is requested at $b_0$ or $b_2$, so the given $(B_0^{\src},B_2^{\src})$ and recovered $(B_0^{\dst},B_2^{\dst})$ determine $B$. The tuples $B_1^{\src},B_1^{\dst}$ determine all components of $B$ except those in $W$. Therefore,
\begin{align}
H(B\mid W,\overline Z_0)
&=H(B_0^{\src},B_2^{\src},B_0^{\dst},B_2^{\dst}\mid W,\overline Z_0)\nonumber\\
&=H(B_0^{\dst},B_2^{\dst}\mid W,\overline Z_0)\nonumber\\
&\le H(B_0^{\dst},B_2^{\dst}\mid\overline Z_0)=0,
\label{s-eq:coarse-B-left}\\
H(B\mid W,\overline Z_1)
&=H(B_1^{\src},B_1^{\dst}\mid W,\overline Z_1)\nonumber\\
&=H(B_1^{\dst}\mid W,\overline Z_1)\nonumber\\
&\le H(B_1^{\dst}\mid\overline Z_1)=0.
\label{s-eq:coarse-B-right}
\end{align}
In each derivation, the second equality removes the source messages contained in $\overline Z_l$, the inequality removes $W$ from the conditioning, and the last equality follows from \eqref{s-eq:coarse-destination-recovery}.
Conditional data processing now gives
\begin{align*}
I(\overline Z_0;\overline Z_1\mid W)
&\ge I(B;\overline Z_1\mid W)\\
&=H(B\mid W)-H(B\mid W,\overline Z_1)\\
&=H(B\mid W).
\end{align*}
The inequality uses \eqref{s-eq:coarse-B-left}, and the last equality uses \eqref{s-eq:coarse-B-right}.
\end{IEEEproof}

Recall that
\[
P=H(B_0^{\src})+H(B_2^{\src})+\sum_{e\in\mathcal I_0}H(Y_e)-H(\overline Z_0).
\]

\begin{lemma}
\label{s-lem:P-tc}
The quantity $P$ satisfies $P\ge0$.
\end{lemma}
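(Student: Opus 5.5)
This follows from subadditivity applied to the tuple $\overline Z_0$. By definition,
\[
\overline Z_0=(B_0^{\src},B_2^{\src},(Y_e)_{e\in\mathcal I_0}),
\]
a finite tuple whose components are $B_0^{\src}$, $B_2^{\src}$, and the nine transcripts $Y_e$ with $e\in\mathcal I_0$. Repeated use of the chain rule together with conditioning reducing entropy gives
\[
H(\overline Z_0)\le H(B_0^{\src})+H(B_2^{\src})+\sum_{e\in\mathcal I_0}H(Y_e).
\]
Rearranging this inequality against the definition of $P$ in \eqref{eq:P} yields $P\ge0$.

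Two points need checking. First, the components listed in $\overline Z_0$ must match exactly the terms summed in $P$. The arcs of $\mathcal I_0$ are $(a_i,x_{i0})$, $(a_i,x_{i2})$, and $(x_{i1},o)$ for $i=0,1,2$, and each appears once in $\overline Z_0$ and once in the sum. Second, $B_0^{\src}$ and $B_2^{\src}$ are disjoint groups of source messages, because every session has a single source. Subadditivity does not depend on this disjointness, however, so it plays no role in the bound.

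There is no real obstacle. The only care needed is to note that any reordering of the tuple leaves the entropy unchanged. The stronger inequality $P\ge I(Z_L;Z_R)$ used in \eqref{eq:P-tc} comes from grouping the same terms into the blocks $Z_L$ and $Z_R$ and applying subadditivity within each block. The present lemma is the weaker consequence obtained without that grouping.
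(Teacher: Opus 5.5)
Your proposal is correct and matches the paper's proof: both expand $P$ using the definition of $\overline Z_0$ and conclude $P\ge0$ directly from subadditivity of entropy over its components. Your check that each of the nine arcs of $\mathcal I_0$ appears exactly once on each side is the only bookkeeping the argument needs.
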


\begin{IEEEproof}
By the definition of $\overline Z_0$ and subadditivity,
\begin{align}
P&=H(B_0^{\src})+H(B_2^{\src})+\sum_{e\in\mathcal I_0}H(Y_e)\nonumber\\
&\quad-H\bigl(B_0^{\src},B_2^{\src},(Y_e)_{e\in\mathcal I_0}\bigr)\nonumber\\
&\ge0.
\label{s-eq:P-expanded}
\end{align}
\end{IEEEproof}

\begin{lemma}
\label{s-lem:coarse-bound}
We have
\begin{equation}
C_{\mathcal I}^-\ge H(A)+H(B\mid W)+P.
\label{s-eq:coarse-cost}
\end{equation}
\end{lemma}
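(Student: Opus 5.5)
We combine subadditivity on the two tuples $\overline Z_0,\overline Z_1$ with Lemma~\ref{s-lem:coarse-recovery}. By definition, $C_{\mathcal I}^-=\sum_{e\in\mathcal I_0}H(Y_e)+\sum_{e\in\mathcal I_1}H(Y_e)$.

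For the $\mathcal I_0$ part, we rewrite the definition of $P$ as
\[
\sum_{e\in\mathcal I_0}H(Y_e)=H(\overline Z_0)+P-H(B_0^{\src})-H(B_2^{\src}).
\]
For the $\mathcal I_1$ part, subadditivity gives
\[
\sum_{e\in\mathcal I_1}H(Y_e)\ge H(\overline Z_1)-H(B_1^{\src}).
\]
Adding these and using independence of the sources, so that $H(B_0^{\src})+H(B_1^{\src})+H(B_2^{\src})=H(B)$ (every $B$-message has exactly one source), we obtain
\[
C_{\mathcal I}^-\ge H(\overline Z_0)+H(\overline Z_1)-H(B)+P.
\]

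Next we lower bound $H(\overline Z_0)+H(\overline Z_1)$. Conditioning reduces entropy, so
\[
H(\overline Z_0)+H(\overline Z_1)\ge H(\overline Z_0\mid W)+H(\overline Z_1\mid W).
\]
The right-hand side equals $H(\overline Z_0,\overline Z_1\mid W)+I(\overline Z_0;\overline Z_1\mid W)$. By \eqref{s-eq:coarse-complete} and \eqref{s-eq:coarse-common}, this is at least
\[
H(A,B\mid W)+H(B\mid W)=H(A)+2H(B\mid W),
\]
where the equality uses independence of $A$ from $B$ (and $W\subseteq B$).

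Substituting this bound gives
\[
C_{\mathcal I}^-\ge H(A)+2H(B\mid W)-H(B)+P.
\]
Since $H(B)=H(W)+H(B\mid W)$, this yields only $H(A)+H(B\mid W)-H(W)+P$, which is short of the target by $H(W)$.

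The main obstacle is recovering this lost $H(W)$, and we do it by not discarding the conditioning on $W$ asymmetrically. The tuple $\overline Z_0$ contains $B_0^{\src},B_2^{\src}$, which include every component of $W$, because each session in $W$ has its source at $b_0$ or $b_2$. Hence $H(W\mid\overline Z_0)=0$, and therefore
\[
H(\overline Z_0)=H(W)+H(\overline Z_0\mid W).
\]
For $\overline Z_1$ we simply use $H(\overline Z_1)\ge H(\overline Z_1\mid W)$. With this sharper step,
\[
H(\overline Z_0)+H(\overline Z_1)\ge H(W)+H(A)+2H(B\mid W),
\]
and then
\[
C_{\mathcal I}^-\ge H(A)+2H(B\mid W)+H(W)-H(B)+P=H(A)+H(B\mid W)+P,
\]
which is \eqref{s-eq:coarse-cost}.
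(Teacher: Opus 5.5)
Your proof is correct and follows the paper's argument essentially step for step. Both rewrite the $\mathcal I_0$ sum via $P$, apply subadditivity to $\overline Z_1$, use the exact split $H(\overline Z_0)=H(W)+H(\overline Z_0\mid W)$ (valid because $W$ lies inside $(B_0^{\src},B_2^{\src})$) while only $H(\overline Z_1)\ge H(\overline Z_1\mid W)$ is needed, and then invoke Lemma~\ref{s-lem:coarse-recovery}; your preliminary lossy attempt that drops $H(W)$ can simply be deleted.
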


\begin{IEEEproof}
We prove $C_{\mathcal I}^-\ge H(A)+H(B\mid W)+P$. By the definitions of $C_{\mathcal I}^-$ and $P$,
\begin{align}
C_{\mathcal I}^-
&=\sum_{e\in\mathcal I_0}H(Y_e)+\sum_{e\in\mathcal I_1}H(Y_e)\nonumber\\
&=P+H(\overline Z_0)-H(B_0^{\src})-H(B_2^{\src})
  +\sum_{e\in\mathcal I_1}H(Y_e)\nonumber\\
&=P+H(\overline Z_0)+H(B_1^{\src})
  +\sum_{e\in\mathcal I_1}H(Y_e)-H(B)\nonumber\\
&\ge P+H(\overline Z_0)+H(\overline Z_1)-H(B).
\label{s-eq:coarse-cost-expand}
\end{align}
The last equality uses source independence, which gives $H(B)=\sum_{j=0}^2H(B_j^{\src})$. The inequality follows from subadditivity applied to $\overline Z_1=(B_1^{\src},(Y_e)_{e\in\mathcal I_1})$.

The tuple $W$ is contained in both $\overline Z_0$ and $B$, so the chain rule gives
\begin{align*}
H(\overline Z_0)&=H(W,\overline Z_0)=H(W)+H(\overline Z_0\mid W),\\
H(B)&=H(W,B)=H(W)+H(B\mid W).
\end{align*}
The two $H(W)$ terms cancel in \eqref{s-eq:coarse-cost-expand}, giving
\begin{align}
C_{\mathcal I}^-
&\ge P+H(\overline Z_0\mid W)+H(\overline Z_1)-H(B\mid W)\nonumber\\
&\ge P+H(\overline Z_0\mid W)+H(\overline Z_1\mid W)-H(B\mid W).
\label{s-eq:coarse-cost-conditional}
\end{align}
The second inequality uses $H(\overline Z_1)\ge H(\overline Z_1\mid W)$. Expanding the sum of conditional entropies in \eqref{s-eq:coarse-cost-conditional}, we obtain
\begin{align*}
C_{\mathcal I}^-
&\ge P+H(\overline Z_0,\overline Z_1\mid W)
  +I(\overline Z_0;\overline Z_1\mid W)-H(B\mid W)\\
&\ge P+H(A,B\mid W)+H(B\mid W)-H(B\mid W)\\
&=P+H(A\mid B,W)+H(B\mid W)\\
&=P+H(A)+H(B\mid W).
\end{align*}
The second inequality uses \eqref{s-eq:coarse-complete} and \eqref{s-eq:coarse-common}. The first equality is the chain rule. The last equality follows because $A$ and $B$ are independent and $W$ is a subtuple of $B$.
\end{IEEEproof}

\subsection{Bounding \texorpdfstring{$\mu+P$}{mu + P}}
\label{supp:interaction}
We now prove $\mu+P\ge H(W)$. Recall from the main text that
\begin{align*}
\mathcal Q_L&=\{b_2,o\}\cup\{x_{i2}:i=0,1,2\},\\
\mathcal Q_R&=\{b_0\}\cup\{x_{i0}:i=0,1,2\},\\
\Pi&=(Y_{ox_{i0}},Y_{x_{i0}o})_{i=0}^2,\\
Z_L&=(B_2^{\src},(Y_{a_ix_{i2}})_i,(Y_{x_{i1}o})_i),\\
Z_R&=(B_0^{\src},(Y_{a_ix_{i0}})_i).
\end{align*}
The fixed schedule orders the messages in $\Pi$. The transcripts in $Z_L$ and $Z_R$ enter the corresponding vertex sets from outside $\mathcal Q_L\cup\mathcal Q_R$.
Up to reordering, $\overline Z_0=(Z_L,Z_R)$. Hence, by the definition of $P$ and subadditivity,
\begin{equation}
\begin{aligned}
P&=H(B_0^{\src})+H(B_2^{\src})+\sum_{e\in\mathcal I_0}H(Y_e)\\
&\quad-H(Z_L,Z_R)\\
&\ge H(Z_L)+H(Z_R)-H(Z_L,Z_R)\\
&=I(Z_L;Z_R).
\end{aligned}
\label{s-eq:P-LR}
\end{equation}
We first establish that $Z_1$ determines $\Pi$.
\begin{lemma}
\label{s-lem:transcript-recovery}
For the tuples $Z_1$ and $\Pi$ defined above, \mbox{$H(\Pi\mid Z_1)=0$}.
\end{lemma}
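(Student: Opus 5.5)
The plan is to read $Z_1$ as the complete local view of the region $U_1=\{b_1,o\}\cup\{x_{i1}:0\le i<3\}$ from the partition \eqref{eq:fine-partition}. Then apply the region reconstruction principle, Lemma~\ref{s-lem:region-simulation}, to that region.

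First, recall the identification established just before Lemma~\ref{s-lem:fine-complete}. The only sources generated in $U_1$ are $B_1^{\src}$, since $o$ and the cells carry no sources. The arcs entering $U_1$ from outside are exactly those in $\mathcal J_1=\{(x_{i0},o),(a_i,x_{i1}),(x_{i2},o)\}_i$. Hence, up to a fixed reordering,
\[
Z_1=(X_{U_1}^{\src},Y_{U_1,\le N}^{\mathrm{in}}).
\]

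Next, split $\Pi=(Y_{ox_{i0}},Y_{x_{i0}o})_{i=0}^2$ into its two kinds of components.
\begin{itemize}
\item The transcripts $Y_{x_{i0}o}$ are sent by $x_{i0}\in V_i$ to $o\in U_1$, so each belongs to $\mathcal J_1$ and is literally a component of $Z_1$.
\item The transcripts $Y_{ox_{i0}}$ are sent by $o\in U_1$, so they are components of $Y_{U_1,\le N}^{\mathrm{send}}$. By \eqref{s-eq:region-prefix} at $\tau=N$, as already recorded in \eqref{s-eq:fine-outgoing-recovery}, we have $H(Y_{U_1,\le N}^{\mathrm{send}}\mid Z_1)=0$. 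Hence each $Y_{ox_{i0}}$ is a deterministic function of $Z_1$.
\end{itemize}

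Finally, $\Pi$ lists these messages in transmission order, while $Z_1$ and $Y^{\mathrm{send}}$ group them differently. Since the schedule is fixed, passing between these orderings is a fixed bijection and does not affect entropy. Combining the two cases gives $H(\Pi\mid Z_1)\le H(Y_{U_1,\le N}^{\mathrm{send}}\mid Z_1)=0$.

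No step is a serious obstacle. The only point needing care is checking that every arc of $\Pi$ either enters $U_1$ (and so lies in $\mathcal J_1$) or leaves it from $o$. This holds because $o\in U_1$ and $x_{i0}\in V_i$ lie in different parts.
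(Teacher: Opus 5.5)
Your proposal is correct and follows essentially the same route as the paper: you apply Lemma~\ref{s-lem:region-simulation} to the region $\{b_1,o,x_{01},x_{11},x_{21}\}$, observe that the components $Y_{x_{i0}o}$ of $\Pi$ already lie in $Z_1$, and recover the components $Y_{ox_{i0}}$ as messages sent by $o$ from inside the region. The paper concludes in the same way, writing $H(\Pi\mid Z_1)=H((Y_{ox_{i0}})_i\mid Z_1)\le H(Y_{\mathcal Q,\le N}^{\mathrm{send}}\mid Z_1)=0$.
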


\begin{IEEEproof}
Take $\mathcal Q=\{b_1,o,x_{01},x_{11},x_{21}\}$ in Lemma~\ref{s-lem:region-simulation}. For this choice, up to reordering,
\begin{align*}
X_{\mathcal Q}^{\src}&=B_1^{\src},\\
Y_{\mathcal Q,\le N}^{\mathrm{in}}&=(Y_{x_{i0}o},Y_{a_ix_{i1}},Y_{x_{i2}o})_i,\\
Z_1&=(X_{\mathcal Q}^{\src},Y_{\mathcal Q,\le N}^{\mathrm{in}}).
\end{align*}
Since $(Y_{x_{i0}o})_i$ is contained in $Z_1$ and $o\in\mathcal Q$,
\begin{align*}
H(\Pi\mid Z_1)
&=H\bigl((Y_{ox_{i0}})_i,(Y_{x_{i0}o})_i\mid Z_1\bigr)\\
&=H\bigl((Y_{ox_{i0}})_i\mid Z_1\bigr)\\
&\le H(Y_{\mathcal Q,\le N}^{\mathrm{send}}\mid Z_1)\\
&=H\bigl(Y_{\mathcal Q,\le N}^{\mathrm{send}}\mid X_{\mathcal Q}^{\src},Y_{\mathcal Q,\le N}^{\mathrm{in}}\bigr)=0.
\end{align*}
The inequality holds because $(Y_{ox_{i0}})_i$ is among the messages sent by vertices in $\mathcal Q$, and the last equality is \eqref{s-eq:region-prefix} at $\tau=N$. Combining this with \eqref{s-eq:mu-info} yields
\begin{align}
\mu
&\ge I(W;Z_1)\nonumber\\
&=H(W)-H(W\mid Z_1)\nonumber\\
&=H(W)-H(W\mid Z_1,\Pi)\nonumber\\
&\ge H(W)-H(W\mid\Pi)\nonumber\\
&=I(W;\Pi).
\label{s-eq:mu-pi}
\end{align}
The third line uses $H(\Pi\mid Z_1)=0$, and the inequality follows from conditioning reducing entropy.
\end{IEEEproof}

\begin{lemma}
\label{s-lem:interaction}
For the tuples $W$, $\Pi$, $Z_L$, and $Z_R$ defined above,
\begin{equation}
H(W)\le I(W;\Pi)+I(Z_L;Z_R).
\label{eq:common-recovery}
\end{equation}
\end{lemma}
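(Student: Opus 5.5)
The plan is to prove the two inequalities already announced in the main text:
\[
H(W\mid\Pi)\le I(Z_L;Z_R\mid\Pi)\le I(Z_L;Z_R).
\]
Together with $H(W)=I(W;\Pi)+H(W\mid\Pi)$, these give \eqref{eq:common-recovery}.

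\textbf{Step 1: each side recovers $W$.} I will show that $H(W\mid Z_L,\Pi)=0$ and $H(W\mid Z_R,\Pi)=0$.

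First take $\mathcal Q_L$. Its only local sources are $B_2^{\src}$, since $o$ and the cells carry no sources. The arcs entering $\mathcal Q_L$ from outside are:
\begin{itemize}
\item $(a_i,x_{i2})$ and $(x_{i1},o)$, which lie in $Z_L$;
\item $(x_{i0},o)$, which lies in $\Pi$.
\end{itemize}
All edges at $b_2$ lead to cells $x_{i2}$, so they are internal to $\mathcal Q_L$. Hence $(Z_L,\Pi)$ contains $(X_{\mathcal Q_L}^{\src},Y_{\mathcal Q_L,\le N}^{\mathrm{in}})$, and \eqref{s-eq:region-decoding} recovers $B_2^{\dst}$. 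Every component of $W$ is either generated at $b_2$, so it lies in $B_2^{\src}$, or requested at $b_2$, so it lies in $B_2^{\dst}$. Therefore $H(W\mid Z_L,\Pi)=0$.

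The same argument works for $\mathcal Q_R$. Its local sources are $B_0^{\src}$. The arcs entering it are $(a_i,x_{i0})$, which lie in $Z_R$, and $(o,x_{i0})$, which lie in $\Pi$. The region recovers $B_0^{\dst}$, and hence $W$.

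It follows that $H(W\mid\Pi)=I(W;W\mid\Pi)$. Since $W$ is a function of $(Z_L,\Pi)$ on one side and of $(Z_R,\Pi)$ on the other, conditional data processing gives $I(W;W\mid\Pi)\le I(Z_L;Z_R\mid\Pi)$.

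\textbf{Step 2: the deterministic-interaction (rectangle) property.} List the messages of $\Pi$ in transmission order as $\Pi_1,\dots,\Pi_m$, and write $\Pi_{<k}$ for the first $k-1$ of them. I claim that each $\Pi_k$ sent by $o$ is a function of $(Z_L,\Pi_{<k})$, and each $\Pi_k$ sent by some $x_{i0}$ is a function of $(Z_R,\Pi_{<k})$.

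To see this, apply \eqref{s-eq:region-prefix} to $\mathcal Q_L$ (respectively $\mathcal Q_R$) at the transmission index $\tau$ of $\Pi_k$. The incoming messages through $\tau$ form a prefix of the incoming transcripts. That prefix is a fixed-schedule function of the full transcripts in $Z_L$ together with the crossing messages already sent, which are exactly $\Pi_{<k}$. The message $\Pi_k$ is determined by the sender's sources and earlier received messages, so it is among the sent messages that \eqref{s-eq:region-prefix} recovers.

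With this in hand, I induct on $k$. If $\Pi_k=f(Z_L,\Pi_{<k})$, the chain rule gives
\[
I(Z_L;Z_R\mid\Pi_{<k})=I(Z_L,\Pi_k;Z_R\mid\Pi_{<k})=I(\Pi_k;Z_R\mid\Pi_{<k})+I(Z_L;Z_R\mid\Pi_{\le k})\ge I(Z_L;Z_R\mid\Pi_{\le k}).
\]
The case where $x_{i0}$ sends is symmetric, with the roles of $Z_L$ and $Z_R$ exchanged. Iterating from $k=1$ to $k=m$ yields $I(Z_L;Z_R\mid\Pi)\le I(Z_L;Z_R)$.

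\textbf{Expected obstacle.} The main care point is the claim in Step 2. The tuples $Z_L$ and $Z_R$ contain complete transcripts, including parts sent after $\Pi_k$. These later parts are not needed causally, but they are harmless: only functional determination is required, and that holds because extra conditioning information never hurts a deterministic reconstruction.

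I also need to confirm that no edge joins $\mathcal Q_L$ and $\mathcal Q_R$ other than the edges $ox_{i0}$. Every other arc entering either region comes from $a_i$ or from $x_{i1}$, and each such arc is recorded in $Z_L$ or $Z_R$.
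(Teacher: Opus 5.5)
Your proposal is correct and follows the paper's proof: the decomposition $H(W)=I(W;\Pi)+H(W\mid\Pi)$, the recovery of $W$ from $(Z_L,\Pi)$ and from $(Z_R,\Pi)$ by region decoding at $b_2$ and $b_0$, conditional data processing, and the per-message determinism of $\Pi$. The only difference is cosmetic: you show directly that $I(Z_L;Z_R\mid\Pi_{\le k})$ is nonincreasing in $k$, whereas the paper sums the per-message inequalities into $I(Z_L,Z_R;\Pi)\ge I(Z_L;\Pi\mid Z_R)+I(Z_R;\Pi\mid Z_L)$ and then applies an identity, which is the same per-step computation.
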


\begin{IEEEproof}
We prove $H(W)\le I(W;\Pi)+I(Z_L;Z_R)$. By the chain rule,
\[
H(W)=I(W;\Pi)+H(W\mid\Pi).
\]
To bound $H(W\mid\Pi)$, we first show that both $(Z_L,\Pi)$ and $(Z_R,\Pi)$ determine $W$. These pairs contain the local source messages and all incoming transcripts of $\mathcal Q_L$ and $\mathcal Q_R$, respectively. Since $b_2\in\mathcal Q_L$ and $b_0\in\mathcal Q_R$, \eqref{s-eq:region-decoding} gives
\[
H(B_2^{\dst}\mid Z_L,\Pi)=H(B_0^{\dst}\mid Z_R,\Pi)=0.
\]
The tuple $W$ is contained in both $(B_2^{\src},B_2^{\dst})$ and $(B_0^{\src},B_0^{\dst})$. Since $B_2^{\src}$ is contained in $Z_L$ and $B_0^{\src}$ in $Z_R$,
\begin{align}
H(W\mid Z_L,\Pi)&\le H(B_2^{\src},B_2^{\dst}\mid Z_L,\Pi)\nonumber\\
&=H(B_2^{\dst}\mid Z_L,\Pi)=0,\label{s-eq:interaction-W-left}\\
H(W\mid Z_R,\Pi)&\le H(B_0^{\src},B_0^{\dst}\mid Z_R,\Pi)\nonumber\\
&=H(B_0^{\dst}\mid Z_R,\Pi)=0.\label{s-eq:interaction-W-right}
\end{align}
Continuing the expansion of $H(W)$, we obtain
\begin{align}
H(W)&=I(W;\Pi)+H(W\mid\Pi)\nonumber\\
&=I(W;\Pi)+H(W\mid\Pi)-H(W\mid Z_R,\Pi)\nonumber\\
&=I(W;\Pi)+I(W;Z_R\mid\Pi)\nonumber\\
&\le I(W;\Pi)+I(Z_L;Z_R\mid\Pi).
\label{s-eq:common-recovery-conditional}
\end{align}
The second equality uses \eqref{s-eq:interaction-W-right}, and the third is the definition of conditional mutual information. The inequality is conditional data processing using \eqref{s-eq:interaction-W-left}.

It remains to show that $I(Z_L;Z_R\mid\Pi)\le I(Z_L;Z_R)$.
Write $\Pi=(\Pi_1,\ldots,\Pi_s)$ in protocol order and let $\Pi_{<k}=(\Pi_1,\ldots,\Pi_{k-1})$. Suppose that $\Pi_k$ is sent from $\mathcal Q_L$ to $\mathcal Q_R$. The tuple $Z_L$ contains $B_2^{\src}$ and the external incoming transcripts $(Y_{a_ix_{i2}},Y_{x_{i1}o})_i$. The earlier messages from $\mathcal Q_R$ are contained in $\Pi_{<k}$. Together they determine the local sources and all incoming messages of $\mathcal Q_L$ before the transmission of $\Pi_k$. Applying \eqref{s-eq:region-prefix} to $\mathcal Q_L$ therefore gives
\begin{equation}
H(\Pi_k\mid Z_L,\Pi_{<k})=0.
\label{s-eq:left-determinism}
\end{equation}
For a transmission from $\mathcal Q_R$ to $\mathcal Q_L$, $Z_R$ gives the local source tuple $B_0^{\src}$ and the external incoming transcripts $(Y_{a_ix_{i0}})_i$, while $\Pi_{<k}$ gives the earlier messages from $\mathcal Q_L$. The same reconstruction gives
\[
H(\Pi_k\mid Z_R,\Pi_{<k})=0.
\]

Consequently, for a message $\Pi_k$ sent from $\mathcal Q_L$ to $\mathcal Q_R$,
\begin{align}
I(Z_L,Z_R;\Pi_k\mid\Pi_{<k})
&=H(\Pi_k\mid\Pi_{<k})\nonumber\\
&\quad-H(\Pi_k\mid Z_L,Z_R,\Pi_{<k})\nonumber\\
&=H(\Pi_k\mid\Pi_{<k}),\label{s-eq:ext-one}\\
I(Z_L;\Pi_k\mid Z_R,\Pi_{<k})
&=H(\Pi_k\mid Z_R,\Pi_{<k})\nonumber\\
&\quad-H(\Pi_k\mid Z_L,Z_R,\Pi_{<k})\nonumber\\
&=H(\Pi_k\mid Z_R,\Pi_{<k}),\label{s-eq:int-left}\\
I(Z_R;\Pi_k\mid Z_L,\Pi_{<k})
&=H(\Pi_k\mid Z_L,\Pi_{<k})\nonumber\\
&\quad-H(\Pi_k\mid Z_L,Z_R,\Pi_{<k})\nonumber\\
&=0.\label{s-eq:int-right-zero}
\end{align}
Here \eqref{s-eq:left-determinism} and conditioning reducing entropy give
\[
0\le H(\Pi_k\mid Z_L,Z_R,\Pi_{<k})
\le H(\Pi_k\mid Z_L,\Pi_{<k})=0.
\]
Thus both conditional entropies vanish in \eqref{s-eq:int-right-zero}, as does the subtracted term in \eqref{s-eq:ext-one} and \eqref{s-eq:int-left}.
Combining these three identities with $H(\Pi_k\mid Z_R,\Pi_{<k})\le H(\Pi_k\mid\Pi_{<k})$ gives
\begin{equation}
\begin{aligned}
I(Z_L,Z_R;\Pi_k\mid\Pi_{<k})
&\ge I(Z_L;\Pi_k\mid Z_R,\Pi_{<k})\\
&\quad+I(Z_R;\Pi_k\mid Z_L,\Pi_{<k}).
\end{aligned}
\label{s-eq:message-info}
\end{equation}
The same inequality holds when $\mathcal Q_R$ sends, with $Z_L$ and $Z_R$ interchanged. Summing \eqref{s-eq:message-info} over $k$ and applying the chain rule gives
\begin{equation}
I(Z_L,Z_R;\Pi)\ge I(Z_L;\Pi\mid Z_R)+I(Z_R;\Pi\mid Z_L).
\label{s-eq:transcript-info}
\end{equation}
By the chain rule and the entropy definition of mutual information,
\begin{equation}
\begin{aligned}
&I(Z_L,Z_R;\Pi)-I(Z_L;\Pi\mid Z_R)-I(Z_R;\Pi\mid Z_L)\\
&\quad=I(Z_L;\Pi)-I(Z_L;\Pi\mid Z_R)\\
&\quad=H(Z_L)-H(Z_L\mid\Pi)\\
&\qquad-H(Z_L\mid Z_R)+H(Z_L\mid Z_R,\Pi)\\
&\quad=I(Z_L;Z_R)-I(Z_L;Z_R\mid\Pi).
\end{aligned}
\label{s-eq:interaction-identity}
\end{equation}
The first equality cancels $I(Z_R;\Pi\mid Z_L)$ after expanding $I(Z_L,Z_R;\Pi)$. The last equality groups $H(Z_L)-H(Z_L\mid Z_R)$ and $H(Z_L\mid\Pi)-H(Z_L\mid Z_R,\Pi)$. By \eqref{s-eq:transcript-info}, the left-hand side of \eqref{s-eq:interaction-identity} is nonnegative, so
\begin{equation}
I(Z_L;Z_R\mid\Pi)\le I(Z_L;Z_R).
\label{eq:interaction}
\end{equation}

Applying \eqref{eq:interaction} to \eqref{s-eq:common-recovery-conditional} gives
\begin{align*}
H(W)&\le I(W;\Pi)+I(Z_L;Z_R\mid\Pi)\\
&\le I(W;\Pi)+I(Z_L;Z_R).
\end{align*}
\end{IEEEproof}

By \eqref{s-eq:mu-pi}, \eqref{s-eq:P-LR}, and \eqref{eq:common-recovery},
\[
\mu+P\ge I(W;\Pi)+I(Z_L;Z_R)\ge H(W).
\]
The two arc-cost bounds now imply
\begin{align*}
C_-&=C_{\mathcal J}^-+C_{\mathcal I}^-\\
&\ge H(A)+H(B)+\mu+H(A)+H(B\mid W)+P\\
&=2H(A)+2H(B)-H(W)+\mu+P\\
&\ge2H(A)+2H(B)=2R.
\end{align*}
Here $H(B\mid W)=H(B)-H(W)$ because $W$ is a subtuple of $B$. This proves the reverse inequality and completes the proof of Theorem~\ref{thm:outer}.

\section{Shortest Paths in \texorpdfstring{$\Gamma_{3,3}$}{Gamma 3,3}}
\label{supp:geodesics}
Throughout this section, $d(u,v)$ denotes shortest-path distance in $\Gamma_{3,3}$ with unit edge lengths.

We verify the path property used to extend the session endpoints in Theorem~\ref{thm:all-session}. The endpoints of each path will be either $a_i,a_k$ with $i\ne k$, or $b_j,b_\ell$ with $j\ne\ell$.

\begin{lemma}
\label{lem:geodesic}
For every ordered pair of distinct vertices $(u,v)$ of $\Gamma_{3,3}$, there is a shortest path of length four between two distinct outer terminals on the same side that passes through $u$ before $v$.
\end{lemma}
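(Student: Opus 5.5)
The approach is a direct case analysis over vertex types, made short by three facts: the symmetries of $\Gamma_{3,3}$, a parity argument for the distances between outer terminals, and reversal of paths to handle the ordering requirement.

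First I will establish that $d(a_i,a_k)=4$ for $i\ne k$, and likewise $d(b_j,b_\ell)=4$ for $j\ne\ell$. Every edge of $\Gamma_{3,3}$ joins a cell $x_{ij}$ to a vertex of $\{a_0,a_1,a_2,b_0,b_1,b_2,o\}$. Hence the graph is bipartite with these two sides, and the distance between two outer terminals is even. The neighbours of $a_i$ are the cells $x_{i0},x_{i1},x_{i2}$, and the neighbours of $a_k$ are $x_{k0},x_{k1},x_{k2}$. These sets are disjoint, so $d(a_i,a_k)\ge 3$, and by parity $d(a_i,a_k)\ge 4$. The path $a_i x_{i0} o x_{k0} a_k$ shows the distance is exactly $4$. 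The same argument applies to the $b$-terminals.

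From this I get two families of length-four, and therefore shortest, paths:
\[
a_i\,x_{ij}\,b_j\,x_{kj}\,a_k,\qquad a_i\,x_{ij}\,o\,x_{kl}\,a_k\quad(i\ne k,\ \text{any } j,l),
\]
and the analogous families between $b_j$ and $b_\ell$, obtained by exchanging the roles of rows and columns. Any path of length four between such endpoints is a geodesic. Reversing a path, which swaps its two endpoints, preserves this property. So it suffices to find, for each unordered pair $\{u,v\}$ of distinct vertices, one such path containing both $u$ and $v$; reading it in the appropriate direction then places $u$ before $v$.

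The cases are the following; the pairs involving $b$-terminals follow from the corresponding $a$ cases by the row/column symmetry $a_i\leftrightarrow b_i$, $x_{ij}\leftrightarrow x_{ji}$.
\begin{itemize}
\item[(1)] $\{a_i,a_k\}$: use either path above.
\item[(2)] $\{a_i,b_j\}$: use $a_i x_{ij} b_j x_{kj} a_k$ with $k\ne i$.
\item[(3)] $\{a_i,o\}$: use $a_i x_{i0} o x_{k0} a_k$.
\item[(4)] $\{a_i,x_{kl}\}$: if $k\ne i$, use $a_i x_{i0} o x_{kl} a_k$. If $k=i$, use $a_i x_{il} o x_{ml} a_m$ with $m\ne i$.
\item[(5)] $\{o,x_{ij}\}$: use $a_i x_{ij} o x_{kj} a_k$ with $k\ne i$.
\item[(6)] $\{x_{ij},x_{kl}\}$ with $x_{ij}\ne x_{kl}$:
 \begin{itemize}
 \item same column ($j=l$, $i\ne k$): use $a_i x_{ij} b_j x_{kj} a_k$;
 \item same row ($i=k$, $j\ne l$): use $b_j x_{ij} a_i x_{il} b_l$;
 \item otherwise ($i\ne k$, $j\ne l$): use $a_i x_{ij} o x_{kl} a_k$.
 \end{itemize}
\end{itemize}

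I do not expect a real obstacle here. The only points that need care are the parity argument that fixes the distance at $4$, and making sure the case list is exhaustive, especially the sub-cases of case (6) and the $k=i$ sub-case of case (4).
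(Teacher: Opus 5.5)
Your proposal is correct and follows essentially the same route as the paper: it uses the bipartition parity together with disjoint neighbourhoods to fix the same-side outer-terminal distance at four, then a case table over vertex types supplies a length-four path through each pair, and path reversal handles the order. The only cosmetic differences are that you obtain the $b$-terminal cases from the automorphism $a_i\leftrightarrow b_i$, $x_{ij}\leftrightarrow x_{ji}$ rather than listing them, and you route same-row and same-column cell pairs through $a_i$ or $b_j$ instead of through $o$.
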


\begin{IEEEproof}
Table~\ref{s-tab:geodesics} lists a path for every possible pair of vertex types. All indices range over $\{0,1,2\}$ and satisfy the conditions in the first column. Any index without an additional restriction may be chosen arbitrarily. Reversing a listed path gives the opposite order of the prescribed vertices.

Every consecutive pair in each listed path is an edge of $\Gamma_{3,3}$, and the path has four edges. Its endpoints are two distinct $a$-terminals or two distinct $b$-terminals. The bipartition is
\[
\{a_0,a_1,a_2,b_0,b_1,b_2,o\},
\qquad\{x_{ij}:0\le i,j\le2\}.
\]
The distance between vertices in the same part is even. Moreover, two distinct $a$-terminals have disjoint neighbor sets, as do two distinct $b$-terminals. Thus the distance between the listed endpoints cannot be zero or two. Together with the displayed four-edge path, this gives
\[
4\le d(a_i,a_k)\le4\quad(i\ne k),
\qquad
4\le d(b_j,b_\ell)\le4\quad(j\ne\ell).
\]
Hence every listed path is shortest. Each prescribed pair occurs along its listed path, and reversing that path supplies the other order. This proves the lemma.
\end{IEEEproof}

\begin{table}[!tb]
\caption{Shortest paths for all unordered vertex-type pairs}
\label{s-tab:geodesics}
\centering
\footnotesize
\renewcommand{\arraystretch}{1.14}
\begin{tabular}{@{}ll@{}}
\toprule
Pair and condition & Shortest path\\
\midrule
$a_i,a_k$, $i\ne k$ & $a_i-x_{ij}-o-x_{kj}-a_k$\\
$b_j,b_\ell$, $j\ne\ell$ & $b_j-x_{ij}-o-x_{i\ell}-b_\ell$\\
$a_i,b_j$, choose $k\ne i$ & $a_i-x_{ij}-b_j-x_{kj}-a_k$\\
$a_i,o$, choose $k\ne i$ & $a_i-x_{ij}-o-x_{kj}-a_k$\\
$b_j,o$, choose $\ell\ne j$ & $b_j-x_{ij}-o-x_{i\ell}-b_\ell$\\
$o,x_{ij}$, choose $k\ne i$ & $a_k-x_{kj}-o-x_{ij}-a_i$\\
$a_i,x_{k\ell}$, $k=i$, choose $r\ne i$ & $a_i-x_{i\ell}-o-x_{r\ell}-a_r$\\
$a_i,x_{k\ell}$, $k\ne i$ & $a_i-x_{i\ell}-o-x_{k\ell}-a_k$\\
$b_j,x_{k\ell}$, $\ell=j$, choose $r\ne j$ & $b_j-x_{kj}-o-x_{kr}-b_r$\\
$b_j,x_{k\ell}$, $\ell\ne j$ & $b_j-x_{kj}-o-x_{k\ell}-b_\ell$\\
$x_{ij},x_{k\ell}$, $i\ne k$ & $a_i-x_{ij}-o-x_{k\ell}-a_k$\\
$x_{ij},x_{i\ell}$, $j\ne\ell$ & $b_j-x_{ij}-o-x_{i\ell}-b_\ell$\\
\bottomrule
\end{tabular}
\end{table}

\section{Metric Packing and the Main Theorem}
\label{supp:main-reduction}
The reduction from fixed-graph communication inequalities to Theorem~\ref{thm:intro-main} follows \cite{LiuQueLiLi2026}. We give its details in the protocol model of Section~\ref{sec:model}, using the metric-packing theorems of Hirai~\cite{Hirai2010} and the communication inequalities of Yin et al.~\cite{Yin2018}.

Recall that a protocol supporting $r$ under $c$ has independent uniform source messages and a positive integer $n$ satisfying \eqref{eq:source-rate-budget} and \eqref{eq:edge-capacity-budget}. Thus $H(X_h)=\log_2|\mathcal X_h|\ge nr_h$, and its edge communication rates are
\[
f(uv)=\frac{b_{uv}+b_{vu}}{n},
\]
where $b_{uv}$ counts the bits sent from $u$ to $v$. The capacity constraint gives $f(e)\le c(e)$. For nonnegative edge lengths $\ell:E(G)\to\mathbb R_{\ge0}$, let $d_{G,\ell}$ denote shortest-path distance. Our goal is to prove, whenever $\nu(D(\mathcal H))\le2$,
\begin{equation}
\sum_{e\in E(G)}\ell(e)f(e)
\ge\sum_{h\in\mathcal H}r_hd_{G,\ell}(s_h,t_h).
\label{s-eq:metric-target}
\end{equation}
Since $f(e)\le c(e)$, this will give the distance inequalities that characterize fractional-routing feasibility. We first transfer communication inequalities from fixed graphs to $G$, then extend metric packings to real edge lengths, and finally combine the transferred inequalities using those packings.

\subsection{Transfer between graphs}
\label{supp:transfer}
Let $Q$ be a connected graph with unit edge lengths, and let $d_Q$ be its shortest-path distance. A map $\phi:V(G)\to V(Q)$ assigns each vertex $v$ of $G$ to a vertex $\phi(v)$ that will simulate it. Define
\[
m(u,v)=d_Q(\phi(u),\phi(v)).
\]
Thus $m(u,v)$ is the length of a shortest path used to forward a transmission on $uv$ between its two images. For a protocol on $Q$, write $f_Q(a)$ for the communication rate on an edge $a\in E(Q)$. In the following lemma, no edge capacities are prescribed: choose any positive integer $n$, take $r_h\le H(X_h)/n$, and divide all bit counts by the same $n$. The source messages are independent and uniform. The following transfer principle is used in \cite{LiuQueLiLi2026}; we give a direct proof for the protocol model of Section~\ref{sec:model}.

\begin{lemma}
\label{lem:transfer}
Suppose every protocol on $Q$ with independent source messages satisfies
\begin{equation}
\sum_{a\in E(Q)}f_Q(a)\ge\sum_h r_hd_Q(s_h,t_h).
\label{s-eq:transfer-assumption}
\end{equation}
Then every protocol on $G$ satisfies
\begin{equation}
\sum_{uv\in E(G)}m(u,v)f(uv)\ge\sum_h r_hm(s_h,t_h).
\label{eq:graph-metric}
\end{equation}
\end{lemma}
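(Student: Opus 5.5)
Given a protocol on $G$ with sessions $\cH$, messages $X_h$, integer $n$ and rates $f$, I will build a simulating protocol on $Q$ with the same sessions moved to $(\phi(s_h),\phi(t_h))$ and the same messages. Then I will apply \eqref{s-eq:transfer-assumption} to it. Vertex $v$ of $G$ is simulated by $\phi(v)$. For each edge $uv\in E(G)$, fix once and for all a shortest path $P_{uv}$ in $Q$ from $\phi(u)$ to $\phi(v)$, of length $m(u,v)$.

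The simulation processes the transmissions of the $G$-protocol in their fixed order $\tau=1,\ldots,N$. A transmission from $u$ to $v$ is computed at $\phi(u)$ and forwarded hop by hop along $P_{uv}$ to $\phi(v)$. If $\phi(u)=\phi(v)$, the path is empty and nothing is sent.
\begin{itemize}
\item Each $Q$-vertex $q$ holds the local sources of every $v$ with $\phi(v)=q$, together with all messages these vertices have received so far.
\item By induction on $\tau$ (exactly as in Lemma~\ref{s-lem:region-simulation}), $\phi(u)$ can therefore compute $u$'s message at time $\tau$ as the same deterministic function.
\item Sender, receiver and length of each hop are fixed by the schedule, so the result is a deterministic protocol in the model of Section~\ref{sec:model}.
\item At the end, $\phi(t_h)$ holds $t_h$'s full history and recovers $X_h$ with zero error.
\end{itemize}

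Sessions with $\phi(s_h)=\phi(t_h)$ must be handled separately, since the model requires $t_h\ne s_h$. They contribute $m(s_h,t_h)=0$ to the right side of \eqref{eq:graph-metric}, so I simply drop them from the $Q$-session list. Their messages remain independent side information held locally, which is harmless for the assumption. If one prefers, they can be treated as local randomness at one vertex, while keeping every other session's message independent.

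Next comes the cost accounting. Every bit on the directed edge $(u,v)$ of $G$ crosses each edge of $P_{uv}$ once. Hence the bits on an edge $a\in E(Q)$ (both directions) number at most
\[
\sum_{uv:\,a\in P_{uv}}(b_{uv}+b_{vu}).
\]
Dividing by $n$ gives
\[
\sum_a f_Q(a)\le\sum_{uv}|P_{uv}|\,f(uv)=\sum_{uv}m(u,v)f(uv).
\]
The rates $r_h\le H(X_h)/n$ are unchanged because the messages are the same. Applying \eqref{s-eq:transfer-assumption} with $d_Q(\phi(s_h),\phi(t_h))=m(s_h,t_h)$ then yields \eqref{eq:graph-metric}.

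The main obstacle is a modeling subtlety rather than the inequality itself. Several $G$-edges may share a $Q$-edge, and several $G$-vertices may collapse to one $Q$-vertex. I must check that the merged transcript on each $Q$-arc still has a fixed length and a fixed order, and that the forwarded messages are causal, meaning each hop occurs after its predecessor. Both follow from indexing the simulated hops lexicographically by (original $\tau$, hop position). This is where I would be careful.

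A secondary point concerns the rate convention. \eqref{s-eq:transfer-assumption} is used with $r_h=H(X_h)/n$ and $f_Q=b/n$ for the same $n$, so no capacity constraint on $Q$ is involved. If the assumption is only known in entropy form (as in Theorem~\ref{thm:all-session}), I would instead bound $H$ of each $Q$-transcript by its bit length, just as in the derivation of \eqref{eq:gamma-rate}.
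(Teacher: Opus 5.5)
Your construction and cost accounting match the paper's proof. You simulate each $v$ at $\phi(v)$ and forward every original transmission on $uv$ along one fixed shortest path $P_{uv}$, which is empty when $\phi(u)=\phi(v)$. You finish each forwarding before the next original transmission, so the schedule stays fixed. You then bound $\sum_a f_Q(a)\le\sum_{uv}m(u,v)f(uv)$ and invoke \eqref{s-eq:transfer-assumption}.

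The one step that does not go through as written is your treatment of sessions with $\phi(s_h)=\phi(t_h)$. Once such a session is removed from the session list, $X_h$ is no longer a source message of the $Q$-protocol. Yet the simulated $s_h$, and every vertex downstream of it, still uses $X_h$ in its encoding. Keeping it as ``side information'' or ``local randomness'' gives an object outside the model of Section~\ref{sec:model}. In that model every transmitted message must be a deterministic function of the sender's source messages and previously received messages. The hypothesis \eqref{s-eq:transfer-assumption} is assumed only for such protocols, so it cannot be applied verbatim. Calling the extra randomness ``harmless'' is exactly the claim that still needs an argument.

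The repair is the paper's: fix each such $X_h$ to a value of positive probability.
\begin{itemize}
\item By independence, the remaining messages keep their uniform distributions.
\item The schedule and all message lengths are unchanged, so every $f_Q(a)$ is unchanged.
\item The remaining destinations still decode without error, because the original protocol decodes correctly for every source value.
\end{itemize}
This yields a genuine deterministic zero-error protocol on $Q$ to which the hypothesis applies. The removed sessions contribute $m(s_h,t_h)=0$ to the right-hand side of \eqref{eq:graph-metric}. With that addition, your proof is complete and coincides with the paper's.
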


\begin{IEEEproof}
Fix a protocol on $G$. We construct a protocol on $Q$ supporting the same target rates for all sessions whose endpoint images are distinct, and bound its communication by the left-hand side of \eqref{eq:graph-metric}.

First consider sessions with $\phi(s_h)=\phi(t_h)$. They contribute zero to the right-hand side because $m(s_h,t_h)=0$. Fix their source messages to values of positive probability and remove their demands. Independence leaves the other source distributions unchanged, and the original protocol still decodes the remaining messages for every possible value. We retain its transmission schedule and message lengths, so the original $f(uv)$ still bounds the communication used in the construction.

For each remaining session $h$, place $X_h$ at $\phi(s_h)$ and require its recovery at $\phi(t_h)$. Vertex $x\in V(Q)$ simulates all original vertices $v$ with $\phi(v)=x$, maintaining their local sources and received messages separately. For each edge $e=uv$ of $G$, fix a shortest path $P_e$ between $\phi(u)$ and $\phi(v)$. Replace each original transmission on $uv$ by forwarding the same message along $P_e$ in the required direction. Complete this forwarding before simulating the next original transmission. If the two images coincide, use the empty path and deliver the message locally.

To verify the simulation, suppose all earlier original transmissions have been reproduced. The simulated sender has the same local source values and the same earlier received messages as the original sender. Its deterministic encoding rule therefore produces the same next message, which is delivered unchanged along $P_e$. The argument also applies to the first transmission, when there are no earlier received messages. Induction thus reproduces every original transmission. Each simulated destination can apply its original decoder and recover the required message without error. The original schedule and all paths are fixed in advance, so the new schedule is also fixed. We use the same $n$ for the constructed protocol. The unchanged source alphabets still satisfy $\log_2|\mathcal X_h|\ge nr_h$, although the number of transmissions can increase.

Write $b_e=b_{uv}+b_{vu}$ for $e=uv$, and let $b_a^Q$ count all bits sent in both directions on $a\in E(Q)$. Each bit forwarded from an original edge $e$ uses $a$ only if $a\in E(P_e)$. Hence,
\begin{equation}
 b_a^Q\le\sum_{e:\,a\in E(P_e)}b_e.
\label{s-eq:transfer-bits}
\end{equation}
Since $f_Q(a)=b_a^Q/n$, we obtain
\begin{align}
\sum_{a\in E(Q)}f_Q(a)
&\le\frac1n\sum_{a\in E(Q)}\sum_{e:\,a\in E(P_e)}b_e\nonumber\\
&=\frac1n\sum_{e\in E(G)}|E(P_e)|b_e\nonumber\\
&=\sum_{uv\in E(G)}m(u,v)f(uv).
\label{s-eq:transfer-cost}
\end{align}
The inequality uses \eqref{s-eq:transfer-bits}. The first equality interchanges the finite sums and counts the edges of each path. The last uses $|E(P_e)|=m(u,v)$ and $b_e/n=f(uv)$.

Applying \eqref{s-eq:transfer-assumption} to the constructed protocol gives
\begin{align*}
\sum_{uv\in E(G)}m(u,v)f(uv)
&\ge\sum_{a\in E(Q)}f_Q(a)\\
&\ge\sum_{h:\,\phi(s_h)\ne\phi(t_h)}r_hd_Q(\phi(s_h),\phi(t_h))\\
&=\sum_{h\in\mathcal H}r_hm(s_h,t_h).
\end{align*}
The first inequality is \eqref{s-eq:transfer-cost}, and the last equality restores the removed sessions, whose contributions are zero.
\end{IEEEproof}

\subsection{Real edge lengths in the packing theorem}
\label{supp:metric}
We first state the packing conditions needed to combine the transferred inequalities. Fix a finite connected graph $G$. For finitely many graphs $Q_\alpha\in\{K_2,K_{2,3},K_{3,3},\Gamma_{3,3}\}$ and maps $\phi_\alpha:V(G)\to V(Q_\alpha)$, write
\[
m_\alpha(u,v)=d_{Q_\alpha}(\phi_\alpha(u),\phi_\alpha(v)),
\]
where each $Q_\alpha$ has unit edge lengths. A fractional metric packing for the session pairs consists of nonnegative weights $\lambda_\alpha$ satisfying
\begin{align}
\sum_\alpha\lambda_\alpha m_\alpha(u,v)&\le\ell(uv) &&(uv\in E(G)),\nonumber\\
\sum_\alpha\lambda_\alpha m_\alpha(s_h,t_h)&=d_{G,\ell}(s_h,t_h) &&(h\in\mathcal H).
\label{eq:packing}
\end{align}
The first condition bounds the weighted distances on network edges. The second preserves the shortest-path distance of every session pair.

Theorems~1.1--1.3 of \cite{Hirai2010} provide such packings under their respective demand-graph hypotheses for cyclically even integral lengths, meaning that every cycle has even total length. We show that whenever one of these theorems applies to the demand graph, the same conditions \eqref{eq:packing} hold for nonnegative real lengths. The applicable demand-graph cases are identified in the next subsection.

First let $\ell$ be rational, and choose a positive integer $q$ such that $q\ell(e)$ is integral for every edge. Every edge length in $2q\ell$ is even, so the applicable theorem gives weights $\lambda_\alpha$ with
\begin{align*}
\sum_\alpha\lambda_\alpha m_\alpha(u,v)&\le2q\ell(uv),\\
\sum_\alpha\lambda_\alpha m_\alpha(s_h,t_h)&=d_{G,2q\ell}(s_h,t_h).
\end{align*}
Scaling all edge lengths scales every path length by the same factor. Thus $d_{G,2q\ell}=2q\,d_{G,\ell}$, and division by $2q$ gives
\begin{align*}
\sum_\alpha\frac{\lambda_\alpha}{2q}m_\alpha(u,v)&\le\ell(uv),\\
\sum_\alpha\frac{\lambda_\alpha}{2q}m_\alpha(s_h,t_h)&=d_{G,\ell}(s_h,t_h).
\end{align*}
These are \eqref{eq:packing} with weights $\lambda_\alpha/(2q)$.

Now let $\ell$ have arbitrary nonnegative real entries, and choose rational nonnegative vectors $\ell^{(t)}\to\ell$. There are finitely many maps from $V(G)$ to the four fixed graphs. List all their induced metrics as $m_1,\ldots,m_M$, independently of $t$. The rational case gives weights $\lambda_\alpha^{(t)}\ge0$; set unused weights to zero and define
\[
p^{(t)}(u,v)=\sum_{\alpha=1}^M\lambda_\alpha^{(t)}m_\alpha(u,v).
\]
The packing conditions give
\begin{align}
0\le p^{(t)}(u,v)&\le\ell^{(t)}(uv) &&(uv\in E(G)),\nonumber\\
p^{(t)}(s_h,t_h)&=d_{G,\ell^{(t)}}(s_h,t_h) &&(h\in\mathcal H).
\label{s-eq:packing-approximation}
\end{align}
Each $p^{(t)}$ satisfies the triangle inequality, since it is a nonnegative sum of induced metrics. For any $u,v$, choose a fixed path $u=v_0,\ldots,v_L=v$ in $G$. Then
\begin{align*}
0\le p^{(t)}(u,v)
&\le\sum_{j=1}^L p^{(t)}(v_{j-1},v_j)\\
&\le\sum_{j=1}^L\ell^{(t)}(v_{j-1}v_j).
\end{align*}
The last sum converges to the length of this fixed path under $\ell$, so $p^{(t)}(u,v)$ is bounded for every pair. There are finitely many pairs, hence a common subsequence converges to a vector $p$.

View each $m_\alpha$ as a vector with one coordinate for each unordered vertex pair. Their nonnegative combinations form a finitely generated cone, which is closed. Since each $p^{(t)}$ lies in this cone, its limit has a representation
\[
p(u,v)=\sum_{\alpha=1}^M\lambda_\alpha m_\alpha(u,v),
\qquad \lambda_\alpha\ge0.
\]
For each pair of vertices, shortest-path distance is the minimum of finitely many simple-path lengths, and therefore is continuous in the edge lengths. Taking limits in \eqref{s-eq:packing-approximation} yields
\begin{align*}
p(u,v)&\le\ell(uv) &&(uv\in E(G)),\\
p(s_h,t_h)&=d_{G,\ell}(s_h,t_h) &&(h\in\mathcal H).
\end{align*}
Substituting the representation of $p$ gives \eqref{eq:packing} for real edge lengths.

\subsection{Completion of the main theorem}
Recall that $D(\mathcal H)$ has the session terminals as vertices and the distinct unordered pairs $\{s_h,t_h\}$ as edges. Its matching number $\nu(D(\mathcal H))$ is the maximum number of edges with pairwise disjoint endpoints. Write $D=D(\mathcal H)$ and assume $\nu(D)\le2$.

It suffices to consider connected $G$. If a session has endpoints in different connected components, the destination receives no information about its independent source message, so zero-error decoding forces its rate to be zero. The other sessions can be treated within their components, whose demand graphs also have matching number at most two. Fix a protocol supporting $r$ under $c$ on connected $G$, a corresponding integer $n$, and arbitrary nonnegative edge lengths $\ell$.

After isolated vertices are omitted, the classification in \cite[Sec.~1]{Hirai2010} places $D$ within a graph on at most five vertices, a union of two stars, a union of a triangle and a star, or two disjoint triangles $K_3+K_3$. Theorem~1.1 there gives cut-metric packings for a union of two stars. Theorem~1.2 gives cut- and $K_{2,3}$-metric packings for the cases with at most five vertices or a triangle and a star. For $K_3+K_3$, Hirai's Theorem~1.3 gives packings using $K_2,K_{2,3},K_{3,3},\Gamma_{3,3}$; cut metrics are $K_2$-metrics. If a containing demand graph is used, its packing equalities hold on every edge of $D$, because $E(D)$ is a subset of its edge set. Thus they hold for every session pair. Section~\ref{supp:metric} extends these packings to the chosen real lengths $\ell$, yielding \eqref{eq:packing}.

For $Q=K_2,K_{2,3},K_{3,3}$, the graphs are connected, bipartite, and have diameter at most two. The entropy inequality in \cite[Corollary~1]{Yin2018} gives
\begin{align*}
n\sum_{a\in E(Q)}f_Q(a)
&\ge\sum_{uv\in E(Q)}[H(Y_{uv})+H(Y_{vu})]\\
&\ge\sum_h H(X_h)d_Q(s_h,t_h)\\
&\ge n\sum_h r_hd_Q(s_h,t_h).
\end{align*}

The first inequality bounds transcript entropy by transmitted bits, the second is the cited result, and the last uses $H(X_h)\ge nr_h$. Dividing by $n$ proves \eqref{s-eq:transfer-assumption}. For $Q=\Gamma_{3,3}$, the same inequality is \eqref{eq:gamma-rate}. Applying Lemma~\ref{lem:transfer} to each induced metric in the packing gives
\begin{equation}
\sum_{uv\in E(G)}m_\alpha(u,v)f(uv)
\ge\sum_{h\in\mathcal H}r_hm_\alpha(s_h,t_h).
\label{s-eq:transferred-packing}
\end{equation}
We now combine these inequalities using the packing weights:
\begin{align}
&\sum_{uv\in E(G)}\ell(uv)f(uv)\nonumber\\
&\quad\ge\sum_{uv\in E(G)}\left(\sum_\alpha\lambda_\alpha m_\alpha(u,v)\right)f(uv)\nonumber\\
&\quad=\sum_\alpha\lambda_\alpha\sum_{uv\in E(G)}m_\alpha(u,v)f(uv)\nonumber\\
&\quad\ge\sum_\alpha\lambda_\alpha\sum_{h\in\mathcal H}r_hm_\alpha(s_h,t_h)\nonumber\\
&\quad=\sum_{h\in\mathcal H}r_h\sum_\alpha\lambda_\alpha m_\alpha(s_h,t_h)\nonumber\\
&\quad=\sum_{h\in\mathcal H}r_hd_{G,\ell}(s_h,t_h).
\label{eq:cost-domain}
\end{align}
The first inequality uses the edge constraints in \eqref{eq:packing} and $f(uv)\ge0$. The next inequality uses \eqref{s-eq:transferred-packing} and $\lambda_\alpha\ge0$. The first two equalities interchange finite sums, and the final equality uses the session-pair equalities in \eqref{eq:packing}. This proves \eqref{s-eq:metric-target}.

Fractional-flow duality states that $r\in\mathcal C_{\MCF}(G,c,\mathcal H)$ if and only if
\begin{equation}
\sum_{e\in E(G)}\ell(e)c(e)
\ge\sum_{h\in\mathcal H}r_hd_{G,\ell}(s_h,t_h)
\quad\text{for all }\ell\ge0.
\label{s-eq:routing-duality}
\end{equation}
The capacity constraint \eqref{eq:edge-capacity-budget} gives $f(e)\le c(e)$. Therefore, \eqref{eq:cost-domain} implies
\[
\sum_e\ell(e)c(e)\ge\sum_e\ell(e)f(e)
\ge\sum_{h\in\mathcal H}r_hd_{G,\ell}(s_h,t_h).
\]
By \eqref{s-eq:routing-duality}, every supported rate vector admits a fractional routing. The fractional-routing region is closed, so taking the closure of the supported rate vectors gives $\mathcal C_{\NC}\subseteq\mathcal C_{\MCF}$. Conversely, for rational path rates $x_{h,P}$, choose a positive integer $n$ such that every $nx_{h,P}$ is an integer. Give source $h$ a binary message of length $n\sum_Px_{h,P}$, split it into portions of lengths $nx_{h,P}$, and forward each portion along $P$ in a fixed order. Each edge $e$ then carries $n\sum_h\sum_{P\ni e}x_{h,P}\le nc(e)$ bits, so the protocol supports the routed rates. Approximating feasible path rates from below by rational rates and taking closures gives $\mathcal C_{\MCF}\subseteq\mathcal C_{\NC}$. This proves Theorem~\ref{thm:intro-main}.


\begin{thebibliography}{99}

\bibitem{Ahlswede2000}
R.~Ahlswede, N.~Cai, S.-Y.~R. Li, and R.~W. Yeung, \lq\lq{}Network information flow,'' \emph{IEEE Trans. Inf. Theory}, vol.~46, no.~4, pp.~1204--1216, 2000.

\bibitem{Koetter2003}
R.~Koetter and M.~M\'edard, \lq\lq{}An algebraic approach to network coding,'' \emph{IEEE/ACM Trans. Netw.}, vol.~11, no.~5, pp.~782--795, 2003.

\bibitem{LiLi2004}
Z.~Li and B.~Li, \lq\lq{}Network coding: The case of multiple unicast sessions,'' in \emph{Proc. Annu. Allerton Conf. Commun., Control, Comput.}, Monticello, IL, USA, 2004.

\bibitem{Harvey2004}
N.~J.~A. Harvey, R.~D. Kleinberg, and A.~R. Lehman, \lq\lq{}Comparing network coding with multicommodity flow for the $k$-pairs communication problem,'' MIT Computer Science and Artificial Intelligence Laboratory, Cambridge, MA, USA, Tech. Rep. MIT-LCS-TR-964, 2004.

\bibitem{ZhangLiLi2026}
X.~Zhang, B.~Li, and Z.~Li, ``Network coding can beat routing in undirected multiple-unicast networks,'' arXiv:2610.09367, 2026.

\bibitem{BravermanHe2026}
M.~Braverman and Z.~He, ``The multiple unicast conjecture is false,'' arXiv:2610.10108, 2026.

\bibitem{Hu1963}
T.~C. Hu, \lq\lq{}Multi-commodity network flows,'' \emph{Oper. Res.}, vol.~11, no.~3, pp.~344--360, 1963.

\bibitem{Seymour1980}
P.~D. Seymour, \lq\lq{}Four-terminus flows,'' \emph{Networks}, vol.~10, no.~1, pp.~79--86, 1980.

\bibitem{OkamuraSeymour1981}
H.~Okamura and P.~D. Seymour, \lq\lq{}Multicommodity flows in planar graphs,'' \emph{J. Combin. Theory Ser. B}, vol.~31, no.~1, pp.~75--81, 1981.

\bibitem{Jain2006}
K.~Jain, V.~V. Vazirani, and G.~Yuval, \lq\lq{}On the capacity of multiple unicast sessions in undirected graphs,'' \emph{IEEE Trans. Inf. Theory}, vol.~52, no.~6, pp.~2805--2809, 2006.

\bibitem{Harvey2006}
N.~J.~A. Harvey, R.~D. Kleinberg, and A.~R. Lehman, \lq\lq{}On the capacity of information networks,'' \emph{IEEE Trans. Inf. Theory}, vol.~52, no.~6, pp.~2345--2364, 2006.

\bibitem{Yin2018}
X.~Yin, Z.~Li, Y.~Liu, and X.~Wang, \lq\lq{}A reduction approach to the multiple-unicast conjecture in network coding,'' \emph{IEEE Trans. Inf. Theory}, vol.~64, no.~6, pp.~4530--4539, 2018.

\bibitem{Karzanov1990}
A.~V. Karzanov, \lq\lq{}Sums of cuts and bipartite metrics,'' \emph{Eur. J. Combin.}, vol.~11, no.~5, pp.~473--484, 1990.

\bibitem{Karzanov1994}
A.~V. Karzanov, \lq\lq{}Paths and metrics in a planar graph with three or more holes. I. Metrics,'' \emph{J. Combin. Theory Ser. B}, vol.~60, no.~1, pp.~1--18, 1994.

\bibitem{Hirai2010}
H.~Hirai, \lq\lq{}Metric packing for $K_3+K_3$,'' \emph{Combinatorica}, vol.~30, no.~3, pp.~295--326, 2010.

\bibitem{LiuQueLiLi2026}
S.~Liu, L.~Que, Z.~Li, and B.~Li, \lq\lq{}On the multiple-unicast conjecture: Beyond cut metrics,'' arXiv:2608.06070, 2026.

\end{thebibliography}
\end{document}